%% file: main.tex
\pdfoutput=1
\PassOptionsToPackage{dvipsnames}{xcolor}
\documentclass[acmsmall, screen]{acmart}

\input{util/0-main}

\begin{document}

\setcopyright{cc}
\setcctype{by}
\acmJournal{PACMPL}
\acmYear{2026} \acmVolume{10} \acmNumber{PLDI} \acmArticle{170}
\acmMonth{6} \acmDOI{10.1145/3808248}

\input{util/after-begin-commands}

\title{Fast Atomicity Monitoring}

\author[H. C. Tun\c{c}]{H\"{u}nkar Can Tun\c{c}}
\authornote{Both authors contributed equally to this work.}
\orcid{0000-0001-9125-8506} 
\affiliation{
	\institution{Uber Technologies}            %% \institution is required
	\country{Netherlands}                    %% \country is recommended
}
\email{tunc@uber.com}          %% \email is recommended

\author[Y. Dong]{Yifan Dong}
\authornotemark[1]
\orcid{0009-0002-0696-839X}
\affiliation{
	\institution{Aarhus University}            %% \institution is required
	\country{Denmark}                    %% \country is recommended
}
\email{yvan.dong@cs.au.dk}

\author[A. Pavlogiannis]{Andreas Pavlogiannis}
\orcid{0000-0002-8943-0722}
\affiliation{
	\institution{Aarhus University}            %% \institution is required
	\country{Denmark}                    %% \country is recommended
}
\email{pavlogiannis@cs.au.dk}          %% \email is recommended

%\author{...} % removed for anonymity

\input{sections/abstract}

\maketitle % should come after the abstract
%\pagestyle{plain} % should come right after \maketitle

\input{sections/intro-main}
\input{sections/prelim-main}
\input{sections/serializability-main}

\input{sections/increasing_cycle}
\input{sections/online}
\input{sections/experiments-main}

\input{sections/related_work-main}

\input{sections/conclusion-main}
\input{sections/data-availability-statement}

\input{sections/acknowledgments}

\clearpage
\bibliographystyle{ACM-Reference-Format}
\bibliography{references}

\clearpage

\appendix
\input{sections/app_serializability}
\input{sections/app_increasing_cycle}
\input{sections/app_experiments}
\input{sections/app_scalability}

\input{sections/app_doublechecker}
\input{sections/app_aerodrome}
\clearpage

\end{document}

%% file: util/0-main.tex
\input{util/packages}
\input{util/style}

\input{util/names}
\input{util/generic-notation}

\input{util/ds-notation}


\input{util/figures-format}

\input{util/algorithm-notation}
\input{util/velodrome-notation}

\input{util/increasing-cycle-notation}
\input{util/view-notation}

%% file: util/packages.tex
\usepackage{hyperref}
\usepackage{multirow, makecell}
\usepackage[utf8]{inputenc} 
\usepackage{amsmath, amsthm}
\usepackage{thmtools}
\usepackage[english]{babel}
\usepackage{pdfpages}
\usepackage{graphicx}
\usepackage{subcaption}
\usepackage{caption}
\usepackage{longtable}
\usepackage{booktabs}
\usepackage{tabu}
\usepackage[referable]{threeparttablex}
\usepackage{varwidth}
\usepackage{multirow}
\usepackage{wrapfig}
\usepackage[inline]{enumitem}
\usepackage{microtype}
\usepackage{xifthen}
\usepackage{versions}
\usepackage{dsfont}
\usepackage[dvipsnames]{xcolor}
\usepackage{pifont}
\usepackage{adjustbox}

\usepackage{enumitem}
\usepackage{rotating}

\usepackage[noend,ruled,linesnumbered]{algorithm2e}

\usepackage{parskip}

\usepackage{thm-restate}
\usepackage[capitalise]{cleveref}

\crefname{figure}{Figure}{Figure}
\crefformat{footnote}{#2\footnotemark[#1]#3}
\usepackage{comment}

\usepackage{tikz}
\usetikzlibrary{arrows,automata,shapes,decorations,decorations.markings,calc, matrix,decorations.pathmorphing, patterns,backgrounds,shapes.misc,arrows.meta,positioning}
\usetikzlibrary{trees}
\usetikzlibrary{calc}
\usetikzlibrary{tikzmark}

\usepackage{bm}
\usepackage{pgfplotstable}
\usepackage{pgfplots}
\usepackage{pbox}

\usepackage{calc}
\usepackage{fp}
\usepackage{multirow}
\usepackage{pbox}

\usepackage[mathscr]{euscript}
\usepackage{mathtools}

\usepackage{footnote}
\usepackage{bbm}
\usepackage{multicol}
\usepackage{graphicx} % omit 'demo' option in real doc.
\usepackage{soul}
\usepackage{listings, xcolor}
\usepackage{versions}
\usepackage[frozencache,cachedir=_minted]{minted}

\usepgfplotslibrary{groupplots}

%% file: util/style.tex
\newcommand{\Paragraph}[1]{\smallskip\noindent{\bf #1}}
\newcommand{\SubParagraph}[1]{\smallskip\noindent{\em #1}}

\DeclareFontFamily{U}{matha}{\hyphenchar\font45}
\DeclareFontShape{U}{matha}{m}{n}{
      <5> <6> <7> <8> <9> <10> gen * matha
      <10.95> matha10 <12> <14.4> <17.28> <20.74> <24.88> matha12
      }{}
\DeclareSymbolFont{matha}{U}{matha}{m}{n}

\DeclareMathSymbol{\Lt}{3}{matha}{"CE}
\DeclareMathSymbol{\Gt}{3}{matha}{"CF}

%% file: util/names.tex
\newcommand{\velodrome}{\mathtt{Velodrome}}

\newcommand{\aerodrome}{\mathtt{Aerodrome}}
\newcommand{\regiontrack}{\mathtt{RegionTrack}}
\newcommand{\ouralgo}{\mathtt{AtomSanitizer}}
\newcommand{\orig}{\mathtt{Origin}}
\newcommand{\ds}{\mathtt{DS}}
\newcommand{\doublechecker}{\mathtt{DoubleChecker}}
\newcommand{\octet}{\mathtt{Octet}}
\newcommand{\FarzanMadhusudan}{\mathtt{FM}}
\newcommand{\TSAN}{\mathtt{TSAN}}

\newcommand{\RAPID}{\mathtt{RAPID}}

\newcommand{\ltxn}{\mathtt{latestTxn}}

\newcommand{\sd}{\text{-}}
\newcommand{\sclabel}[3]{#1,#2,#3}

\newcommand{\ovhtsan}{\texttt{OvhTsan}}
\newcommand{\ovhorig}{\texttt{OvhOrig}}

%% file: util/generic-notation.tex
\newtheorem{remark}{Remark}

\newcommand{\tr}{\sigma}
\newcommand{\threads}[1]{\mathsf{Thr}_{#1}}
\newcommand{\events}[1]{\mathsf{Events}_{#1}}

\newcommand{\set}[1]{\{#1\}}

\newcommand{\idof}[1]{\operatorname{id}(#1)}
\newcommand{\threadof}[1]{\operatorname{tid}(#1)}

\newcommand{\acq}{\mathtt{acq}}
\newcommand{\rel}{\mathtt{rel}}
\newcommand{\rd}{\mathtt{r}}
\newcommand{\wt}{\mathtt{w}}

\newcommand{\tid}{\mathtt{tid}}
\newcommand{\op}{\mathtt{op}}
\newcommand{\var}{\mathtt{var}}

\newcommand{\PO}[2]{<_{#1}^{\mathtt{#2}}}
\newcommand{\POR}[2]{\leq_{#1}^{\mathtt{#2}}}
\newcommand{\thb}[1]{\PO{#1}{THB}}
\newcommand{\thbR}[1]{\POR{#1}{THB}}
\newcommand{\chb}[1]{\PO{#1}{CHB}}

\newcommand{\traceOrd}[1]{ \PO{#1}{TR}}

\newcommand{\threadOrd}[1]{\PO{#1}{TO}}
\newcommand{\threadOrdR}[1]{\POR{#1}{TO}}

\newcommand{\cf}[2]{#1 \asymp #2}
\newcommand{\variables}[1]{\mathtt{Variables}_{#1}}
\newcommand{\locks}[1]{\mathtt{Locks}_{#1}}
\newcommand{\reads}[1]{\mathtt{Reads}_{#1}}
\newcommand{\writes}[1]{\mathtt{Writes}_{#1}}

\newcommand{\txof}{\mathtt{txn}}
\newcommand{\txid}[1]{\mathtt{id}(#1)}
\newcommand{\txbegin}{\triangleright}
\newcommand{\txend}{\triangleleft}
\newcommand{\unary}[1]{\mathtt{unary}(#1)}

\newcommand{\numEvents}{n}
\newcommand{\numThreads}{k}
\newcommand{\numLocations}{v}
\newcommand{\numLocks}{\ell}
\newcommand{\numTransactions}{h}

\newlist{compactitem}{itemize}{3} % 3 is max-depth
\setlist[compactitem]{label=\textbullet,nosep,leftmargin=*}
\crefname{compactitemi}{Item}{Items}

\newlist{compactenum}{enumerate}{3} % 3 is max-depth
\setlist[compactenum]{label=\arabic*., nosep,leftmargin=*}
\crefname{compactenumi}{Item}{Items}

\newlist{compactwideenum}{enumerate}{3} % 3 is max-depth
\setlist[compactwideenum]{label=\textit{\arabic*}., nosep,leftmargin=*,wide}
\crefname{compactwideenumi}{Item}{Items}

\newcommand{\LatestRemoteTransaction}[2]{\operatorname{LRT}_{#1}^{#2}}
\newcommand{\EarliestReceiverTransaction}[2]{\operatorname{ELT}_{#1}^{#2}}
\newcommand{\XClock}{\operatorname{\mathsf{From}}}
\newcommand{\YClock}{\operatorname{\mathsf{To}}}

\newcommand{\LatestRemoteBegin}[2]{\operatorname{LRB}_{#1}^{#2}}
\newcommand{\EarliestLocalEvent}[2]{\operatorname{ELE}_{#1}^{#2}}

%% file: util/ds-notation.tex
\newcommand{\graphnode}[2]{{\langle #1, #2 \rangle}}

\newcommand{\upd}{\mathtt{update}}
\newcommand{\mn}{\mathtt{min}}

\newcommand{\argleq}{\mathtt{argleq}}

\newcommand{\width}{\mathtt{width}}

\newcommand{\readlock}{\mathtt{Rlock}}
\newcommand{\readunlock}{\mathtt{RUnlock}}
\newcommand{\writelock}{\mathtt{WLock}}

\newcommand{\trywritelock}{\mathtt{tryWLock}}
\newcommand{\writeunlock}{\mathtt{WUnlock}}
\newcommand{\elock}{\mathbb{L}}
\newcommand{\astore}{\mathtt{atomic\_store}}
\newcommand{\aload}{\mathtt{atomic\_load}}

\newcommand{\oeflag}{\mathtt{hasOutEdge}}

\newcommand{\pred}{\mathtt{predecessor}}
\newcommand{\suc}{\mathtt{successor}}

\newcommand{\addsuc}{\mathtt{addSuccessor}}

\newcommand{\insedge}{\mathtt{insertEdge}}
\newcommand{\insedgeback}{\mathtt{insertEdgeBackwards}}
\newcommand{\deledge}{\mathtt{deleteEdge}}

\newcommand{\reachable}{\mathtt{reachable}}

\newcommand{\bg}{\mathtt{begin}}
\newcommand{\en}{\mathtt{end}}

\newcommand{\node}{\mathtt{nd}}

\newcommand{\height}{\mathtt{height}}

%% file: util/figures-format.tex
\newcommand{\bbg}[1]{\color{blue}$#1\colon\rhd$}
\newcommand{\ben}[1]{\color{blue}$#1\colon\lhd$}
\newcommand{\pbg}{\color{blue}$\rhd$}
\newcommand{\pen}{\color{blue}$\lhd$}
\newcommand{\mopartial}
	
\colorlet{colorPO}{darkgray!80!black}
\colorlet{colorRF}{blue}
\colorlet{colorCO}{red!80!black}
\colorlet{colorMO}{red!80!black}
\colorlet{colorFR}{purple}
\colorlet{colorECO}{orange}
\colorlet{colorCOM}{magenta}
\colorlet{colorSW}{teal}
\colorlet{colorHB}{green!40!black}
\colorlet{colorPPO}{magenta}
\colorlet{colorRSEQ}{green!40!black}
\colorlet{colorSC}{violet}
\colorlet{colorHBMO}{brown}
\colorlet{colorPSC}{violet}
\colorlet{colorREL}{olive}
\colorlet{colorWB}{olive}
\colorlet{colorRMW}{brown}
\colorlet{colorVIO}{red}

\definecolor{bred}{RGB}{213,94,0}
\definecolor{bgreen}{RGB}{0,158,115}
\definecolor{bblue}{RGB}{0,114,178}

\newcommand\numberthis{\addtocounter{equation}{1}\tag{\theequation}}

\tikzset{
	circled/.style={circle, draw, fill=white, inner sep=1},
	every path/.style={>=stealth},
	po/.style={->,color=colorPO,shorten >=-0.5mm,shorten <=-0.5mm},
	ppo/.style={->,color=colorPPO,shorten >=-0.5mm,shorten <=-0.5mm},
	sw/.style={->,color=colorSW,shorten >=-0.5mm,shorten <=-0.5mm},
	rf/.style={->,color=colorRF,dotted, very thick,shorten >=-0.5mm,shorten <=-0.5mm},
	rfsolid/.style={->,color=bgreen, very thick,shorten >=-0.5mm,shorten <=-0.5mm},
	mrf/.style={->,color=colorRF,dashed,shorten >=-0.5mm,shorten <=-0.5mm},   
	urf/.style={->,color=colorRF,shorten >=-0.5mm,shorten <=-0.5mm},
	fr/.style={->,color=colorFR,dashed,shorten >=-0.5mm,shorten <=-0.5mm},
	hb/.style={->,color=colorPO,thick,shorten >=-0.5mm,shorten <=-0.5mm},
	co/.style={->,color=colorCO,dotted,very thick,shorten >=-0.5mm,shorten <=-0.5mm},
	mo/.style={->,color=bred,dotted,very thick,shorten >=-0.5mm,shorten <=-0.5mm},
	mosolid/.style={->,color=bred,very thick,shorten >=-0.5mm,shorten <=-0.5mm},
	rmw/.style={->,color=colorRMW,thick,shorten >=-0.5mm,shorten <=-0.5mm},
	rseq/.style={->,color=colorRSEQ,thick,dotted,shorten >=-0.5mm,shorten <=-0.5mm},
	com/.style={->,color=colorCOM,thick,shorten >=-0.5mm,shorten <=-0.5mm},
	thb/.style={->,color=colorHB,thick,shorten >=-0.5mm,shorten <=-0.5mm},
	vio/.style={->,color=colorVIO,thick,shorten >=-0.5mm,shorten <=-0.5mm},
}

\pgfplotsset{compat=1.11,
    /pgfplots/ybar legend/.style={
    /pgfplots/legend image code/.code={%
       \draw[##1,/tikz/.cd,yshift=-0.25em]
        (0cm,0cm) rectangle (3pt,0.8em);},
   },
}

%% file: util/algorithm-notation.tex
\SetKwProg{MyProcedure}{procedure}{}{}
\SetKwProg{MyFunction}{function}{}{}
\SetKwFunction{ProcNewEvent}{handleEvent}
\SetKwFunction{FunctionComputeNewOrderings}{computeOrderings}
\SetKwFunction{FunctionMaintainPO}{maintainPartialOrder}

\SetKwFunction{FunctionReachable}{{$\reachable$}}
\SetKwFunction{FunctionAddSucc}{{$\mathtt{addSuccessor}$}}

\SetKwFunction{FunctionQuery}{{$\reachable$}}
\SetKwFunction{FunctionInsertEdge}{{$\insedge$}}
\SetKwFunction{FunctionInsertEdgeBackwards}{{$\insedgeback$}}
\SetKwFunction{FunctionDeleteEdge}{{$\deledge$}}
\SetKwFunction{FunctionSucc}{{$\suc$}}
\SetKwFunction{FunctionPred}{{$\pred$}}

\SetKwFunction{FunctionMin}{{$\mn$}}
\SetKwFunction{FunctionArgLeq}{{$\argleq$}}
\SetKwFunction{FunctionCheckAndUpdate}{{$\mathtt{checkAndUpdate}$}}
\SetKwFunction{FunctionUpdate}{{$\upd$}}
\SetKwFunction{FunctionUpdateHelper}{{$\mathtt{updateHelper}$}}
\SetKwFunction{FunctionCreateNode}{{$\mathtt{createNode}$}}
\SetKwFunction{FunctionCreateIntermediateNodes}{{$\mathtt{createIntermediateNodes}$}}
\SetKwFunction{FunctionCreateLowestCommonAncestor}{{$\mathtt{createLowestCommonAncestor}$}}

\SetKwFunction{FunctionReachable}{{$\mathtt{reachable}$}}
\SetKwFunction{FunctionUnordered}{{$\mathtt{unordered}$}}
\SetKwFunction{FunctionInitialize}{{$\mathtt{initialize}$}}

\SetKwFunction{FunctionAcquire}{{$\mathtt{acquire}$}}
\SetKwFunction{FunctionRelease}{{$\mathtt{release}$}}
\SetKwFunction{FunctionRead}{{$\mathtt{read}$}}
\SetKwFunction{FunctionWrite}{{$\mathtt{write}$}}
\SetKwFunction{FunctionBegin}{{$\mathtt{begin}$}}
\SetKwFunction{FunctionEnd}{{$\mathtt{end}$}}
\SetKwFunction{FunctionTxn}{{$\mathtt{txn}$}}
\SetKwFunction{FunctionId}{{$\mathtt{id}$}}
\SetKwFunction{FunctionSetTxn}{{$\mathtt{setTxn}$}}

\SetKwFunction{FunctionAddPath}{{$\mathtt{addPath}$}}
\SetKwFunction{FunctionCheckViolation}{{$\mathtt{checkViolation}$}}

\SetKwFunction{FunctionReadFromClosure}{{$\mathtt{readFromClosure}$}}
\SetKwFunction{FunctionLockClosure}{{$\mathtt{lockClosure}$}}
\SetKwFunction{FunctionTxnClosure}{{$\mathtt{txnClosure}$}}
\SetKwFunction{FunctionSaturate}{{$\mathtt{saturate}$}}
\SetKwFunction{FunctionCheckSER}{{$\mathtt{checkSerializability}$}}
\SetKwFunction{FunctionNextEvent}{{$\mathtt{nextEvent}$}}
\SetKwFunction{FunctionSpeculate}{{$\mathtt{speculate}$}}

\SetKwFunction{FunctionAddEvent}{addEvent}
\SetKw{Continue}{continue}
\SetKw{Break}{break}

\newcommand{\reportError}[1]{\textnormal{declare}~\textbf{#1}}

\newcommand{\lIfElse}[3]{\lIf{#1}{#2 \textbf{else}~#3}}

\newcommand{\greytcp}[1]{\textcolor{gray}{\tcp{#1}}}

%% file: util/velodrome-notation.tex
\newcommand{\thread}{\mathtt{t}}

\newcommand{\lastRelease}[1]{\mathtt{lastRelease}_{#1}}
\newcommand{\lastWrite}[1]{\mathtt{lastWrite}_{#1}}
\newcommand{\lastReads}[1]{\mathtt{lastReads}_{#1}}
\newcommand{\lrd}[1]{\mathtt{lastReadTids}_{#1}}

%% file: util/view-notation.tex
\newcommand{\pair}[2]{\langle #1, #2 \rangle}

%% file: util/after-begin-commands.tex
\newcommand*\circled[1]{\tikz[baseline=(char.base)]{
		\node[shape=circle,draw,fill=white,inner sep=1pt] (char) {#1};}}

%% file: sections/abstract.tex
\begin{abstract}
Atomicity is a fundamental abstraction in concurrency, specifying that program behavior can be understood by considering specific code blocks executing atomically.
However, atomicity invariants are tricky to maintain while also optimizing for code efficiency, and atomicity violations are a common root cause of many concurrency bugs.
To address this problem, several dynamic techniques have been developed for testing whether a program execution adheres to an atomicity specification, most often instantiated as \emph{conflict-serializability}.
The efficiency of the analysis has been targeted in various papers,
with the state-of-the-art algorithms $\regiontrack$ and $\aerodrome$ achieving a time complexity $O(\numEvents\numThreads^3)$ and $O(\numEvents\numThreads(\numThreads + \numLocations + \numLocks))$, respectively,
for a trace $\tr$ of $\numEvents$ events, $\numThreads$ threads, $\numLocations$ locations, and $\numLocks$ locks.

In this paper we introduce $\ouralgo$, a new algorithm for testing conflict-serializability, with time complexity $O(\numEvents\numThreads^2)$.
$\ouralgo$ operates in an efficient streaming style, is theoretically faster than all existing algorithms, and also has a smaller memory footprint.
Moreover, it is the first algorithm designed to use little locking when deployed in a concurrent monitoring setting.
Experiments on standard benchmarks indicate that $\ouralgo$ is always faster in practice than all existing conflict-serializability testers.
Finally, we also implement $\ouralgo$ inside the $\TSAN$ framework, for monitoring atomicity in real time.
Our experiments reveal that $\ouralgo$ incurs only a marginal time and space overhead over the data-race detection engine of $\TSAN$, and thus is the first algorithm for conflict-serializability demonstrated to be suitable for a runtime monitoring setting.
\end{abstract}

\keywords{concurrency, transactions, conflict-serializability, dynamic runtime analysis}

\begin{CCSXML}
<ccs2012>
   <concept>
       <concept_id>10011007.10011074.10011099</concept_id>
       <concept_desc>Software and its engineering~Software verification and validation</concept_desc>
       <concept_significance>500</concept_significance>
       </concept>
   <concept>
       <concept_id>10011007.10011074.10011099.10011102.10011103</concept_id>
       <concept_desc>Software and its engineering~Software testing and debugging</concept_desc>
       <concept_significance>500</concept_significance>
       </concept>
 </ccs2012>
\end{CCSXML}

\ccsdesc[500]{Software and its engineering~Software verification and validation}
\ccsdesc[500]{Software and its engineering~Software testing and debugging}

%% file: sections/intro-main.tex
%!TEX root = ../main.tex
\section{Introduction}

\input{sections/intro-beginning}
\input{sections/intro-contributions}

%% file: sections/intro-beginning.tex
%!TEX root = ../main.tex

Writing correct concurrent programs is a notoriously challenging task, as interprocess communication is inherently non-deterministic, encompassing unanticipated scheduling patterns that often escape the programmers' mental model.
For the same reason, concurrency bugs are difficult to reproduce during debugging, and are commonly referred to as \emph{heisenbugs}~\cite{Musuvathi-2008}.

One fundamental building block in the mental model of programmers is \emph{atomicity}, where the whole program consists of individual blocks that can be considered to execute atomically.
Although this is not actually the case due to the scheduler, atomicity allows one to reason about programs in terms of atomic blocks, meaning that their interleaving cannot produce new behaviors.
However, violations of atomicity are common in practice and are the source of many real-world concurrency bugs~\cite{Park-2009, Lu-2006, Lu-2008}.
\input{figures/inc_code_example} 
For example, \cref{fig:cs-real-example} depicts an atomicity violation in MySQL~\cite{mysql-04}, taken from~\cite{Lu-2008}.
Two threads access concurrently the shared variable \texttt{thd->proc\_info}.
The developers assumed that \texttt{S1} and \texttt{S3}  always execute atomically in thread 1 due to synchronization by code not shown in this snippet, thus \texttt{thd->proc\_info} in \texttt{S3} will never be \texttt{NULL}. 
However, \texttt{S2} of thread 2 can, in fact, be interleaved with \texttt{S1} and \texttt{S3}, breaking this atomicity invariant and leading to a bug that has resurfaced several times~\cite{mysql-08, mysql-08-2}.
We remark that atomicity violations are different from data races; they occur when concurrent accesses from different threads interleave in an unexpected way, even, e.g., if such accesses are synchronizing, and can thus occur even on race-free programs.

The programming languages community has responded with techniques for detecting atomicity violations automatically.
One standard approach is via dynamic analyses that report warnings based on violations of \emph{conflict-serializability}.
The analysis observes executions $\tr$ of the concurrent program which are annotated with intended atomic blocks, called transactions.
Annotations are external to the analysis ---they might come explicitly from the programmer~\cite{Flanagan2003}, or be inferred automatically~\cite{Lu-2006}.
Although transactions may interleave in $\tr$, the analysis must decide whether there is an equivalent execution $\tr'$, free of transaction interleavings, obtainable from $\tr$ by successive swaps of non-conflicting events.
If not, $\tr$ witnesses an atomicity violation.
The size of $\tr$ is usually huge, which puts pressure on the analysis to be as fast as possible.
In the following, consider that $\tr$ contains $\numEvents$ events, $\numThreads$ threads, $\numLocations$ locations, and $\numLocks$ locks.

In high level, conflict-serializability violations can be detected by creating a conflict graph of all transactions, whereby two transactions are ordered $T_1\to T_2$ iff each $T_i$ contains an event $e_i$ such that $e_1$ and $e_2$ conflict (both access the same shared resource and at least one modifies it) and $e_1$ executes before $e_2$ in $\tr$.
Then, a violation corresponds to a cycle $T_{1}\to\dots\to T_{m}$, representing the fact that these $m$ transactions cannot execute serially without breaking the conflict order of the events in $\tr$ (see, e.g., the cycle $T_1\to T_2\to T_1$ in \cref{fig:cs-real-example}).
Constructing the conflict graph and checking for cycles can be easily performed in $O(\numEvents)$ time, but this approach has two notable disadvantages.
First, the space usage is $\Theta(\numEvents)$, which can be prohibitive in practice.
Second, practical applications pose an online setting, whereby $\tr$ is processed in a streaming style, and a violation is reported as early as it occurs.
Consequently, there have been several efforts in designing atomicity-checking algorithms that operate in this online fashion and are as efficient as possible (see \cref{tab:intro-conflict-complexity}).

One of the first techniques for dynamic atomicity testing was developed in~\cite{Alur-2000}.
That algorithm, however, employed aggressive pruning of its data structures in order to keep its memory footprint small, which lead to incorrect results.
This was reported in~\cite{Farzan-2008}, which proposed an automata-based algorithm (which we refer to as $\FarzanMadhusudan$) with running time $O(\numEvents\numThreads(\numLocations+\numLocks))$.
At the same time, the graph-based algorithm $\velodrome$ was developed~\cite{Flanagan-2008}, and has become a standard reference in atomicity testing.
$\velodrome$ maintains explicitly a conflict graph that can grow as large as $\Theta(\numEvents)$, with cycles representing atomicity violations.
It has $O(\numEvents^2)$ time and $O(\numEvents)$ space complexity, which are prohibitive in theory, but is made performant in practice via graph-pruning heuristics.
$\doublechecker$~\cite{Biswas-2014} further improved of $\velodrome$ by observing that precise cycle detection is required infrequently.
$\aerodrome$ was recently proposed as the first algorithm utilizing vector clocks as its main data structure \cite{Mathur-2020}, which are efficient and common in dynamic concurrency analyses.
$\aerodrome$ has time and space complexity  $O(\numEvents\numThreads(\numThreads + \numLocations + \numLocks))$ and $O(\numThreads(\numThreads + \numLocations + \numLocks))$, respectively.
The latest development in this sequence is $\regiontrack$, which also utilizes vector clocks, towards time and space complexity $O(\numEvents\numThreads^3)$ and $O(\numThreads(\numThreads\numLocations+\numLocks))$, respectively~\cite{Ma-2021}.
The running time of $\regiontrack$ is incomparable to $\aerodrome$ in general, but it becomes faster for reasonable input parameters (i.e., typically $\numThreads<<\numLocations$).
However, $\regiontrack$ has increased space usage by a term $\numLocations\numThreads^2$.
Finally, when deployed in a runtime monitoring setting, where various threads access the components of the algorithm concurrently, all existing testers require each thread to acquire a global lock.
This incurs congestion and leads to noticeable monitoring overhead, beyond what is considered acceptable in an industrial scale (e.g., $\TSAN$'s reported 5$\times$-15$\times$ time and 5$\times$-10$\times$ memory overhead~\cite{Serebryany-2009}).

\input{material/tables/intro-complexity-table}

The long study of atomicity checking, as witnessed by the above sequence of improvements, highlights the need for methods that are as performant as possible.
\emph{
Is there an atomicity checker with even better complexity? 
Does it also lead to practical improvements?
Is atomicity monitorable at overheads acceptable by industrial practice?
Is the complexity increase of online monitoring necessary, compared to the $O(\numEvents)$ cost for offline detection?
}
This paper answers these questions in positive, by making the following contributions.

%% file: figures/inc_code_example.tex
%!TEX root = ../main.tex
%
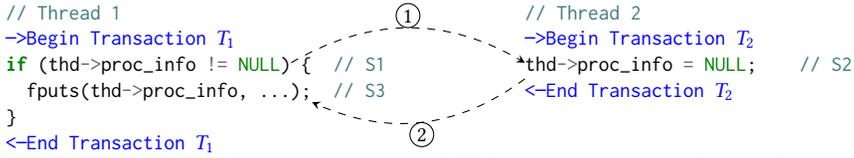
\begin{figure}[tbp]
% \begin{wrapfigure}[12]{r}{0.45\textwidth}
\renewcommand\theFancyVerbLine{\small\arabic{FancyVerbLine}}
\hfill
\begin{minipage}[t]{0.4\textwidth}
\begin{minted}[
% frame=lines,
framesep=2mm,
tabsize=2,
baselinestretch=1.0,
% bgcolor=LightGray!50,
numbersep=2pt,
fontsize=\footnotesize,
% linenos,
escapeinside=??,
]{C}
// Thread 1
?\color{blue}--->Begin Transaction $T_1$?
if (thd->proc_info != NULL)?\tikzmark{s1}? {  // S1
	fputs(thd->proc_info, ...);?\tikzmark{s3}?  // S3
}
?\color{blue}<---End Transaction $T_1$?
\end{minted}
\end{minipage}%
\hfill%
\begin{minipage}[t]{0.4\textwidth}
\begin{minted}[
% frame=lines,
framesep=2mm,
tabsize=2,
baselinestretch=1.0,
% bgcolor=LightGray!50,
numbersep=2pt,
fontsize=\footnotesize,
% linenos,
escapeinside=??,
]{C}
// Thread 2
?\color{blue}--->Begin Transaction $T_2$?
?\tikzmark{s2}?thd->proc_info = NULL;		// S2
?\color{blue}<---End Transaction $T_2$?
\end{minted}
\end{minipage}
\begin{tikzpicture}[remember picture]
\draw[overlay, ->, dashed] ([yshift=3pt]pic cs:s1) to[bend left=30] node[midway,above,font=\footnotesize,circled,solid] {$1$} ([yshift=3pt]pic cs:s2);
\draw[overlay, ->, dashed] ([yshift=-3pt]pic cs:s2) to[bend left=30] node[midway,below,font=\footnotesize,circled,solid] {$2$} ([yshift=-2pt]pic cs:s3);
\end{tikzpicture}
\caption{
An atomicity violation bug in MySQL 4.1.2 release, taken from~\cite{Lu-2008}.
%An atomicity violation in MySQL 4.1.2.
}
\label{fig:cs-real-example}
% \end{wrapfigure}
\end{figure}

%% file: material/tables/intro-complexity-table.tex
%!TEX root = ../../main.tex

\begin{table}[!t]
\caption{The complexity of atomicity algorithms on executions of 
$\numEvents$ events,
$\numThreads$ threads,
$\numLocations$ locations, and
$\numLocks$ locks.
Some works give more fine-grained complexity bounds, e.g., involving  the number of transactions in the trace.
}
\label{tab:intro-conflict-complexity}
\centering
\begin{tabular}{|c|c|c|}
\hline
\textbf{Algorithm} & \textbf{Time} & \textbf{Space} \\ \hline
$\FarzanMadhusudan$~\cite{Farzan-2008} & $O(\numEvents\numThreads(\numLocations+\numLocks))$ & $O(\numThreads(\numThreads + \numLocations + \numLocks))$ \\
$\velodrome$~\cite{Flanagan-2008} & $O(\numEvents^2)$ & $O(\numEvents)$ \\ 
$\doublechecker$~\cite{Biswas-2014} & $O(\numEvents^2)$ & $O(\numEvents)$ \\ 
$\aerodrome$~\cite{Mathur-2020} & $O(\numEvents\numThreads(\numThreads + \numLocations + \numLocks))$ & $O(\numThreads(\numThreads + \numLocations + \numLocks))$ \\ 
$\regiontrack$~\cite{Ma-2021} & $O(\numEvents\numThreads^3)$ & $O(\numThreads(\numThreads\numLocations+\numLocks))$ \\ \hline
$\ouralgo$ [This Paper] & $O(\numEvents\numThreads^2)$ & $O(\numThreads(\numThreads+\numLocations)+ \numLocks)$ \\ \hline
\end{tabular}
\end{table}

%% file: sections/intro-contributions.tex
\Paragraph{1.~AtomSanitizer for conflict-serializability.}
We develop $\ouralgo$, a new algorithm for testing conflict-serializability.
For a trace $\tr$ of $\numEvents$ events, $\numThreads$ threads, $\numLocations$ locations, and $\numLocks$ locks, $\ouralgo$ uses $O(\numEvents\numThreads^2)$ time and $O(\numThreads(\numThreads+\numLocations)+\numLocks)$ space.
This is better than all existing algorithms in both time and space (see \cref{tab:intro-conflict-complexity}), for reasonable input parameters.
In particular, $\ouralgo$ is faster when $\numThreads<<\numLocations$,
while its space complexity is smaller by a term $\numThreads\numLocks$.

We further remark that the space complexity of $\ouralgo$ is also bounded by $O(\numEvents + \numThreads^2)$, but the aforementioned space bound is better as long as $\numThreads\numLocations<\numEvents$, which is typically the case.
Moreover, the term $\numThreads\numLocations$ can be refined to $\sum_{x}\numThreads_x$, where $\numThreads_x$ is the number of threads reading on variable $x$.
Although $\sum_{x}\numThreads_x=O(\numThreads\numLocations)$ in the worst-case, in practice most variables are accessed by only a few threads, reducing the space complexity to $O(\numThreads^2+\numLocations+\numLocks)$.
In contrast, existing algorithms do not enjoy this property.
This makes $\ouralgo$ the first algorithm with a memory footprint small enough to be employed in a runtime setting.

\Paragraph{2.~AtomSanitizer for increasing-cycle violations.}
We present a version of $\ouralgo$ focused on increasing-cycle violations, which are specific conflict-serializability violations in which the root cause can be localized to a single offending transaction, and has been utilized in prior work~\cite{Flanagan-2008, Biswas-2014, Ma-2021}.
$\ouralgo$ detects increasing-cycle violations in time $O(\numEvents\numThreads)$ and space $O(\numThreads(\numThreads+\numLocations)+\numLocks)$.
Although this does not improve the time complexity of the state-of-the-art (in particular $\regiontrack$~\cite{Ma-2021}), it improves the space complexity by a term $\numThreads\numLocks$, and in practice achieves a memory footprint of $O(\numThreads^2+\numLocations+\numLocks)$, by a similar analysis as above.

\Paragraph{3.~Lower bounds.}
We complement our improved upper bounds with two interesting lower bounds that characterize fundamental computational resources required to detect conflict-serializability violations in an streaming fashion.
First, recall that, if we relinquish the practical requirement that a violation is reported in an online fashion as soon as it occurs (or shortly after, e.g., when the latest offending transaction completes), the problem can be trivially solved in $O(\numEvents)$ time.
\emph{Is it possible to achieve linear time also in the online setting?}
We prove that this is unlikely, in the sense that the online problem must take $\Omega(\numEvents^{3/2-\epsilon})$ time, for any fixed $\epsilon>0$, under the popular Online Matrix-Vector hypothesis~\cite{Henzinger2015}.
Second, note that the space usage of $\ouralgo$ is practically sublinear in $\numEvents$, but can become linear in $\numEvents$ when there are $\numLocations=\Omega(\numEvents)$ locations.
\emph{Is a sublinear space bound always possible?}
We prove that this is not the case, in the sense that any streaming algorithm for conflict-serializability (or even increasing-path) violations must use $\Omega(\numEvents)$ space when $\numLocations=\Omega(\numEvents)$, in the worst case.

\Paragraph{4.~Implementation and experiments.}
We implement $\ouralgo$ inside the $\RAPID$ framework of dynamic analyses~\cite{Umang-rapid}.
Our tool processes a trace $\tr$ and reports the first transaction which completes a cycle in the underlying conflict graph, as per standard. 
We compare the performance of our tool to existing algorithms in the literature, on standard benchmarks.
$\ouralgo$ is faster on each benchmark, achieving an average speedup of $2.0\times$, $4.5\times$, and $12\times$ over $\regiontrack$, $\aerodrome$, and $\velodrome$, respectively.
Moreover, $\ouralgo$ is always faster than $\regiontrack$ in detecting increasing-cycle violations (this is the only baseline that makes sound and complete reports of increasing-cycles), with an average speedup of $1.7\times$.

We also implement $\ouralgo$ inside the widely used LLVM $\TSAN$ framework~\cite{Serebryany-2009} for runtime analysis. 
The core data structure of $\ouralgo$ is designed to use little locking in a runtime setting, and our implementation exploits this feature to avoid expensive synchronization between threads most of the time.
We perform scalability experiments on a standard set of benchmarks, and our results show that $\ouralgo$ incurs only a marginal slowdown and memory cost over the data-race detector engine of $\TSAN$ ($1.49\times$ and $1.06\times$, respectively, on average).
Relative to the original execution, the average time and memory overhead is $4.63\times$ and $3.83\times$, respectively,
which generally aligns with industrial expectations~\cite{Serebryany-2009}.
To our knowledge, this is the first case of an atomicity tester that achieves such a small overhead, making it now possible to monitor atomicity at overheads that are industry standard.

%% file: sections/prelim-main.tex
%!TEX root = ../main.tex
\section{Preliminaries}
We begin with general notation on concurrent executions and the atomicity notions we consider.

\input{sections/definitions}
\input{sections/atomicity}

%% file: sections/definitions.tex
\subsection{Concurrent Execution Model}\label{SUBSEC:DEFINITIONS}

\Paragraph{Events and traces.}
A trace is a sequence of events $\tr=e_1,\dots, e_{\numEvents}$, encapsulating the execution of a concurrent program.
Given an event $e$ of $\tr$, we write $\tr[{:}e]$ to denote the prefix of $\tr$ up to (including) $e$.
In turn, an event is a tuple $e = \langle t, i, op \rangle$, where $t$ is an identifier of the thread executing $e$, $i$ is an incremental id for thread $t$ (i.e. it denotes the $i$-th event of $t$), and $op$ is the operation performed by $e$,
which is one of the following.
\begin{compactenum}
\item $\bg$, denoting that $e$ is the beginning of a transaction.
\item $\en$, denoting that $e$ is the end of a transaction.
\item $\rd(x)$, denoting that $e$ reads global variable $x$.
\item $\wt(x)$, denoting that $e$ writes to global variable $x$.
\item $\acq(\ell)$, denoting that $e$ acquires the lock $\ell$.
\item $\rel(\ell)$, denoting that $e$ releases the lock $\ell$.
\end{compactenum}

We often denote the $\bg$ and $\en$ events as $\triangleright$ and $\triangleleft$, respectively. 
We use $\tid(e)$, $\op(e)$, and $\idof{e}$, to denote the thread identifier, the operation, and identifier of $e$, respectively.
If $e$ accesses a variable $v$ or a lock $\ell$, we let $\var(e)$ denote $v$ or $\ell$, respectively.
We often denote an event $e$ as $\langle t, op \rangle$, omitting its identifier.
Note that we ignore the values read/written, as these are not important in our setting.
We also don't explicitly consider fork/join events, though these can be handled naturally.

Traces have to be well-formed:~each $\acq(\ell)$ is paired with a corresponding $\rel(\ell)$ in the same thread, and no lock is held by more than one thread simultaneously.
Moreover, for any thread $t$, the sequence of $\bg$ and $\en$ events starts with $\bg$, alternates between $\bg$ and $\en$, and ends on $\en$\footnote{Wlog, we do not consider nested critical sections/transactions. These are handled straightforwardly in our implementation.}.
We denote by $\events{\tr}$ the set of events, $\variables{\tr}$ the global variables, 
$\locks{\tr}$ the set of locks,
$\reads{\tr}$ the set of read events, and $\writes{\tr}$ the set of write events in $\tr$.

\Paragraph{Transactions.} 
Given a trace $\tr$, a transaction $T$ is a sequence of events in a thread $t$ that begins with $\langle t, \txbegin \rangle$ and  ends with the matching $\langle t, \txend \rangle$, at which point $T$ is completed.
We use $T.\bg$ to denote the begin event of $T$.
Transaction $T$ is assigned a unique, incremental id in thread $t$, accessed as $\txid{T}$.
$T$ is \emph{unary} if it contains a single event besides $\txbegin$ and $\txend$.
$T$ is \emph{open} (or \emph{active}) in a prefix $\tr[{:}e]$ if it has executed its begin but not its end event in $\tr[{:}e]$.
We write $e \in T$ if $e$ appears in $T$.
We let $\txof(e)$ be the transaction that $e$ appears in.
As per standard, we assume that every event belongs to a (possibly unary) transaction.

\Paragraph{Conflicting events.}
Two events $e_1$ and $e_2$ \emph{conflict}, denoted by $\cf{e_1}{e_2}$, if
(\romannumeral 1\relax) $\tid(e_1) = \tid(e_2)$, or
(\romannumeral 2\relax) $\op(e_1)$, $\op(e_2)\in \{\acq(\ell),\rel(\ell)\}$ for some lock $\ell$, or
(\romannumeral 3\relax) $\op(e_1)$, $\op(e_2) \in \{\wt(x), \rd(x)\}$ for some variable $x$ and at least one writes to $x$.

\Paragraph{Partial orders.} 
We make frequent use of partial orders over events $\events{\tr}$, or subsets thereof.
A (strict) partial order $P$, denoted $\PO{\tr}{P}$ is a transitive, antisymmetric and irreflexive relation over $\events{\tr}$.
We let $\POR{\tr}{P}$ be the reflexive closure of $P$.
Given two events $e_1$ and $e_2$, we say that $e_1$ is a \emph{$\PO{\tr}{P}$-immediate predecessor} of $e_2$ if
(i)~$e_1\PO{\tr}{P}e_2$ and
(ii)~there does not exist an event $e$ such that $e_1\PO{\tr}{P}e\PO{\tr}{P}e_2$.
Note that $\PO{\tr}{P}$-immediate predecessors are not unique in general, unless $P$ is total.

The trace $\tr$ defines a total order $\traceOrd{\tr}$ over $\events{\tr}$, with $e_1 \traceOrd{\tr} e_2$ iff $e_1$ occurs before $e_2$ in $\tr$. 
The \emph{thread order} $\threadOrd{\tr}$ is the smallest partial order such that $e_1 \threadOrd{\tr} e_2$ if $\tid(e_1) = \tid(e_2)$ and $e_1 \traceOrd{\tr} e_2$.

%% file: sections/atomicity.tex
\subsection{Atomicity as Serializability}\label{SUBSEC:ATOMICITY}
\label{sec:prelim-ser}

\Paragraph{Serial traces.}
As a natural consequence of concurrency, different transactions may overlap in a trace $\tr$.
We call $\tr$ \emph{serial} if there is no such overlap, i.e., for any two transaction begin events $e_1, e_2$ such that $e_1 \traceOrd{\tr} e_2$,
we have $e_3\traceOrd{\tr} e_2$, where $e_3$ is the matching end event of $e_1$.
\emph{Serializability} captures the fact that, although $\tr$ might not be serial, its behavior is equivalent to a serial trace $\tr'$ that can be obtained by permuting adjacent non-conflicting events in $\tr$.
If such a transformation exists, $\tr$ is deemed \emph{serializable}.
Otherwise, $\tr$ witnesses an atomicity violation.
In the following, we make these concepts formal.

\Paragraph{Conflict-serializability.}
For a trace $\tr$, \emph{conflict-happens-before} is the smallest partial order over $\events{\tr}$ such that, for any two events $e_1$, $e_2$, if $\cf{e_1}{e_2}$ and $e_1 \traceOrd{\tr} e_2$, then $e_1 \chb{\tr} e_2$.
We say that  $\tr$ is \emph{conflict-equivalent} to another trace $\tr'$ iff (\romannumeral 1\relax) $\events{\tr} = \events{\tr '}$, and (\romannumeral 2\relax) $\chb{\tr} = \chb{\tr'}$.
We say that $\tr$ is \emph{conflict-serializable} if there exists a serial trace $\tr'$ such that $\tr$ conflict-equivalent to $\tr'$.
Otherwise, $\tr$ exhibits a \emph{conflict-serializability} violation.

\input{figures/cs_counter_example}

\Paragraph{Examples.}
Consider the trace $\tr_1$ in \cref{fig:cs_positive_example}.
It consists of 14 events executed by three threads $t_1$, $t_2$ and $t_3$.
It has three transactions $T_1 = \txbegin, e_6, e_{9}, \txend$,
$T_2 = \txbegin, e_3, \txend$, $T_3 = \txbegin,  e_{12}, \txend$ and
$T_4 = \txbegin, e_8, e_{10}, \txend$.
We use an arrow $\rightarrow$ to represent $\chb{\tr}$ orderings across threads.
The trace $\tr_1$ is conflict-serializable, since we can construct a serial trace $\tr_2$, shown in \cref{fig:cs_positive_example_reordering}, that is conflict-equivalent to $\tr_1$.
In contrast, the trace $\tr_3$ in \cref{fig:cs_example1} is not conflict-serializable, since making the transactions atomic would require reversing the order of at least one of the $\chb{\tr}$ arrows.

\Paragraph{Increasing-cycle violations.}
When $\tr$ witnesses an atomicity violation, it is not clear which transactions to blame, and fixing it requires looking closely into the trace and possibly the program source.
\emph{Increasing-cycle violations} constitute a simple yet common class of atomicity violations on which there is a single candidate transaction to be blamed.
In particular, $\tr$ exhibits an increasing-cycle violation on a transaction $T$ if $\tr$ contains
three distinct events $e_1$, $e_2$, and $e_3$ such that
(i)~$e_1$ and $e_2$ are the $\bg$ and $\en$ events of $T$, respectively,
(ii)~$e_3\not \in T$, and
(iii)~$e_1 \chb{\tr} e_3 \chb{\tr} e_2$.

\input{figures/inc_example}

\Paragraph{Examples.}
The trace $\tr_4$ in \cref{fig:inc_example} witnesses an increasing-cycle violation on $T_1$, since 
(i)~$e_1$ and $e_{12}$ are the $\bg$ and $\en$ events of $T_1$,
(ii)~$e_{9}\not \in T_1$, and
(iii)~$e_1\chb{\tr_4}e_9\chb{\tr_4}e_{12}$.
In contrast, $\tr_3$ does not witness an increasing-cycle violation, although it is also not conflict-serializable.

%% file: figures/cs_counter_example.tex
%!TEX root = ../main.tex
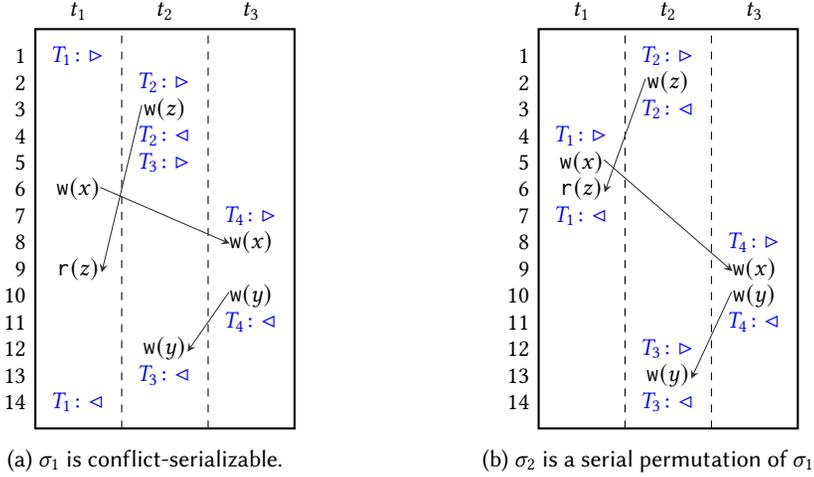
\begin{figure}[h!]
\centering
\def\scaleboxvalue{1}
\begin{subfigure}[b]{0.45\textwidth}
\centering
\scalebox{\scaleboxvalue} {
\begin{tikzpicture}[xscale=1.15, yscale=0.35, 
every node/.style={font=\small},
transaction/.style={draw, dashed, thick},
event/.style={anchor=center, inner sep=1pt},
-, >={Triangle[width=10pt,length=5pt]}]

% starting coordinates
\def\xstart{0} % starting x coordinate
\def\ystart{0} % starting y coordinate

% width and height of the rectangle
\def\width{3}  % width of the rectangle
\def\height{15} % height of the rectangle

% write labels for the three threads
\node[anchor=south] at (\xstart + \width/6, \ystart + \height) {$t_1$};
\node[anchor=south] at (\xstart + \width/2, \ystart + \height) {$t_2$};
\node[anchor=south] at (\xstart + 5*\width/6, \ystart + \height) {$t_3$};

% Draw the three transaction lines
\draw[dashed] (\xstart + \width/3,\ystart) -- (\xstart + \width/3,\ystart + \height);
\draw[dashed] (\xstart + 2*\width/3, \ystart) -- (\xstart + 2*\width/3, \ystart + \height);

% Draw the grid
\foreach \y in {1,...,14} {
\pgfmathtruncatemacro\yy{15-\y} % Compute reverse ID: 10 -> 1
\node[left] at (\xstart,\ystart+\y) {\yy};  % Keep the coordinates, but reverse the label
}

% events in t1
\node[event] (e1) at (\xstart + \width/6, \ystart + \height - 1) {\bbg{T_1}};   
\node[event] (e6) at (\xstart + \width/6, \ystart + \height - 6) {$\wt(x)$};

\node[event] (e9) at (\xstart + \width/6, \ystart + \height - 9) {$\rd(z)$};
\node[event] (e14) at (\xstart + \width/6, \ystart + \height - 14) {\ben{T_1}};

% events in t2
\node[event] (e2) at (\xstart + \width/2, \ystart + \height - 2) {\bbg{T_2}};
\node[event] (e3) at (\xstart + \width/2, \ystart + \height - 3) {$\wt(z)$};
\node[event] (e4) at (\xstart + \width/2, \ystart + \height - 4) {\ben{T_2}};

\node[event] (e5) at (\xstart + \width/2, \ystart + \height - 5) {\bbg{T_3}};
\node[event] (e12) at (\xstart + \width/2, \ystart + \height - 12) {$\wt(y)$};
\node[event] (e13) at (\xstart + \width/2, \ystart + \height - 13) {\ben{T_3}};

% events in t3
\node[event] (e7) at (\xstart + 5*\width/6, \ystart + \height - 7) {\bbg{T_4}};
\node[event] (e8) at (\xstart + 5*\width/6, \ystart + \height - 8) {$\wt(x)$};
\node[event] (e10) at (\xstart + 5*\width/6, \ystart + \height - 10) {$\wt(y)$};
\node[event] (e11) at (\xstart + 5*\width/6, \ystart + \height - 11) {\ben{T_4}};

% e6 -> e8
\draw[po] (e6.east) to (e8.west);
% e10 -> e12
\draw[po] (e10.west) to (e12.east);
% e3 -> e9
\draw[po] (e3.west) to (e9.east);

% Draw border
\draw[thick] (\xstart, \ystart) rectangle ++(\width, \height);

% \draw[thick] (0.5,0.5) rectangle (3.5,12.5);
\end{tikzpicture}
}
\caption{$\tr_1$ is conflict-serializable.}
\label{fig:cs_positive_example}
\end{subfigure}
\quad
\begin{subfigure}[b]{0.45\textwidth}
\centering
\scalebox{\scaleboxvalue} {
\begin{tikzpicture}[xscale=1.15, yscale=0.35, 
every node/.style={font=\small},
transaction/.style={draw, dashed, thick},
event/.style={anchor=center, inner sep=1pt},
-, >={Triangle[width=10pt,length=5pt]}]

% starting coordinates
\def\xstart{0} % starting x coordinate
\def\ystart{0} % starting y coordinate

% width and height of the rectangle
\def\width{3}  % width of the rectangle
\def\height{15} % height of the rectangle

% write labels for the three threads
\node[anchor=south] at (\xstart + \width/6, \ystart + \height) {$t_1$};
\node[anchor=south] at (\xstart + \width/2, \ystart + \height) {$t_2$};
\node[anchor=south] at (\xstart + 5*\width/6, \ystart + \height) {$t_3$};

% Draw the three transaction lines
\draw[dashed] (\xstart + \width/3,\ystart) -- (\xstart + \width/3,\ystart + \height);
\draw[dashed] (\xstart + 2*\width/3, \ystart) -- (\xstart + 2*\width/3, \ystart + \height);

% Draw the grid
\foreach \y in {1,...,14} {
\pgfmathtruncatemacro\yy{15-\y} % Compute reverse ID: 10 -> 1
\node[left] at (\xstart,\ystart+\y) {\yy};  % Keep the coordinates, but reverse the label
}

% events in t1
\node[event] (e4) at (\xstart + \width/6, \ystart + \height - 4) {\bbg{T_1}};   
\node[event] (e5) at (\xstart + \width/6, \ystart + \height - 5) {$\wt(x)$};

\node[event] (e6) at (\xstart + \width/6, \ystart + \height - 6) {$\rd(z)$};
\node[event] (e7) at (\xstart + \width/6, \ystart + \height - 7) {\ben{T_1}};

% events in t2
\node[event] (e1) at (\xstart + \width/2, \ystart + \height - 1) {\bbg{T_2}};
\node[event] (e2) at (\xstart + \width/2, \ystart + \height - 2) {$\wt(z)$};
\node[event] (e3) at (\xstart + \width/2, \ystart + \height - 3) {\ben{T_2}};

\node[event] (e12) at (\xstart + \width/2, \ystart + \height - 12) {\bbg{T_3}};
\node[event] (e13) at (\xstart + \width/2, \ystart + \height - 13) {$\wt(y)$};
\node[event] (e14) at (\xstart + \width/2, \ystart + \height - 14) {\ben{T_3}};

% events in t3
\node[event] (e8) at (\xstart + 5*\width/6, \ystart + \height - 8) {\bbg{T_4}};
\node[event] (e9) at (\xstart + 5*\width/6, \ystart + \height - 9) {$\wt(x)$};
\node[event] (e10) at (\xstart + 5*\width/6, \ystart + \height - 10) {$\wt(y)$};
\node[event] (e11) at (\xstart + 5*\width/6, \ystart + \height - 11) {\ben{T_4}};

% e2 -> e6
\draw[po] (e2.west) to (e6.east);
% e5 -> e9
\draw[po] (e5.east) to (e9.west);
% e10 -> e13
\draw[po] (e10.west) to (e13.east);

% Draw border
\draw[thick] (\xstart, \ystart) rectangle ++(\width, \height);

% \draw[thick] (0.5,0.5) rectangle (3.5,12.5);
\end{tikzpicture}
}
\caption{$\tr_2$ is a serial permutation of $\tr_1$}
\label{fig:cs_positive_example_reordering}
\end{subfigure}

\caption{A non-serial trace that is conflict-serializable.}
\label{fig:cs_counter_example}
\Description{This figure is an example case of serializable trace}
\end{figure}

%% file: figures/inc_example.tex
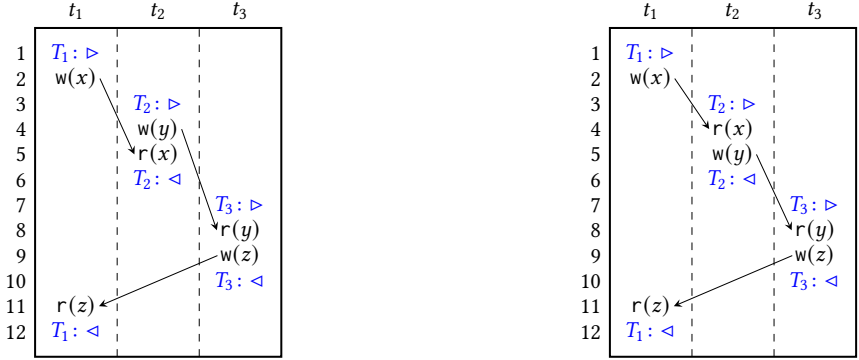
\begin{figure}[htbp]
    \centering
    \def\scaleboxvalue{0.95}
    \begin{subfigure}[b]{0.45\textwidth}
    		\centering
        \scalebox{\scaleboxvalue} {
        \begin{tikzpicture}[xscale=1.15, yscale=0.35, 
            every node/.style={font=\small},
            transaction/.style={draw, dashed, thick},
            event/.style={anchor=center, inner sep=1pt},
            -, >={Triangle[width=10pt,length=5pt]}]
        
            % starting coordinates
            \def\xstart{0} % starting x coordinate
            \def\ystart{0} % starting y coordinate
            
            % width and height of the rectangle
            \def\width{3}  % width of the rectangle
            \def\height{13} % height of the rectangle
        
            % write labels for the three threads
            \node[anchor=south] at (\xstart + \width/6, \ystart + \height) {$t_1$};
            \node[anchor=south] at (\xstart + \width/2, \ystart + \height) {$t_2$};
            \node[anchor=south] at (\xstart + 5*\width/6, \ystart + \height) {$t_3$};
            
            % Draw the three transaction lines
            \draw[dashed] (\xstart + \width/3,\ystart) -- (\xstart + \width/3,\ystart + \height);
            \draw[dashed] (\xstart + 2*\width/3, \ystart) -- (\xstart + 2*\width/3, \ystart + \height);
            
            % Draw the grid
            \foreach \y in {1,...,12} {
                \pgfmathtruncatemacro\yy{13-\y} % Compute reverse ID: 12 -> 1
                \node[left] at (\xstart,\ystart+\y) {\yy};  % Keep the coordinates, but reverse the label
            }
            
            % events in t1
            \node[event] (e1) at (\xstart + \width/6, \ystart + \height - 1) {\bbg{T_1}};   
            \node[event] (e2) at (\xstart + \width/6, \ystart + \height - 2) {$\wt(x)$};
            \node[event] (e11) at (\xstart + \width/6, \ystart + \height - 11) {$\rd(z)$};
            \node[event] (e12) at (\xstart + \width/6, \ystart + \height - 12) {\ben{T_1}}; 
            
            % events in t2
            \node[event] (e3) at (\xstart + \width/2, \ystart + \height - 3) {\bbg{T_2}};
            \node[event] (e4) at (\xstart + \width/2, \ystart + \height - 4) {$\wt(y)$};
            \node[event] (e5) at (\xstart + \width/2, \ystart + \height - 5) {$\rd(x)$};
            \node[event] (e6) at (\xstart + \width/2, \ystart + \height - 6) {\ben{T_2}};
            
            % events in t3
            \node[event] (e7) at (\xstart + 5*\width/6, \ystart + \height - 7) {\bbg{T_3}};
            \node[event] (e8) at (\xstart + 5*\width/6, \ystart + \height - 8) {$\rd(y)$};
            \node[event] (e9) at (\xstart + 5*\width/6, \ystart + \height - 9) {$\wt(z)$};
            \node[event] (e10) at (\xstart + 5*\width/6, \ystart + \height - 10) {\ben{T_3}};
            
            % e2 -> e5
            \draw[->] (e2.east) -- (e5.west);
            % e4 -> e8
            \draw[->] (e4.east) -- (e8.west);
            % e9 -> e11
            \draw[->] (e9.west) -- (e11.east);
            
            % Draw border
            \draw[thick] (\xstart, \ystart) rectangle ++(\width, \height);
        
            % \draw[thick] (0.5,0.5) rectangle (3.5,12.5);
        \end{tikzpicture}
        }
        \caption{$\tr_3$ witnesses a conflict-serializability violation.}
        \label{fig:cs_example1}
    \end{subfigure}
\hfill
    \begin{subfigure}[b]{0.45\textwidth}
    \centering
    \scalebox{\scaleboxvalue} {
    \begin{tikzpicture}[xscale=1.15, yscale=0.35, 
        every node/.style={font=\small},
        transaction/.style={draw, dashed, thick},
        event/.style={anchor=center, inner sep=1pt},
        -, >={Triangle[width=10pt,length=5pt]}]
    
        % starting coordinates
        \def\xstart{0} % starting x coordinate
        \def\ystart{0} % starting y coordinate
        
        % width and height of the rectangle
        \def\width{3}  % width of the rectangle
        \def\height{13} % height of the rectangle
    
        % write labels for the three threads
        \node[anchor=south] at (\xstart + \width/6, \ystart + \height) {$t_1$};
        \node[anchor=south] at (\xstart + \width/2, \ystart + \height) {$t_2$};
        \node[anchor=south] at (\xstart + 5*\width/6, \ystart + \height) {$t_3$};
        
        % Draw the three transaction lines
        \draw[dashed] (\xstart + \width/3,\ystart) -- (\xstart + \width/3,\ystart + \height);
        \draw[dashed] (\xstart + 2*\width/3, \ystart) -- (\xstart + 2*\width/3, \ystart + \height);
        
        % Draw the grid
        \foreach \y in {1,...,12} {
            \pgfmathtruncatemacro\yy{13-\y} % Compute reverse ID: 12 -> 1
            \node[left] at (\xstart,\ystart+\y) {\yy};  % Keep the coordinates, but reverse the label
        }
        
        % events in t1
        \node[event] (e1) at (\xstart + \width/6, \ystart + \height - 1) {\bbg{T_1}};   
        \node[event] (e2) at (\xstart + \width/6, \ystart + \height - 2) {$\wt(x)$};
        \node[event] (e11) at (\xstart + \width/6, \ystart + \height - 11) {$\rd(z)$};
        \node[event] (e12) at (\xstart + \width/6, \ystart + \height - 12) {\ben{T_1}}; 
        
        % events in t2
        \node[event] (e3) at (\xstart + \width/2, \ystart + \height - 3) {\bbg{T_2}};
        \node[event] (e4) at (\xstart + \width/2, \ystart + \height - 4) {$\rd(x)$};
        \node[event] (e5) at (\xstart + \width/2, \ystart + \height - 5) {$\wt(y)$};
        \node[event] (e6) at (\xstart + \width/2, \ystart + \height - 6) {\ben{T_2}};
        
        % events in t3
        \node[event] (e7) at (\xstart + 5*\width/6, \ystart + \height - 7) {\bbg{T_3}};
        \node[event] (e8) at (\xstart + 5*\width/6, \ystart + \height - 8) {$\rd(y)$};
        \node[event] (e9) at (\xstart + 5*\width/6, \ystart + \height - 9) {$\wt(z)$};
        \node[event] (e10) at (\xstart + 5*\width/6, \ystart + \height - 10) {\ben{T_3}};
        
        % e2 -> e4
        \draw[->] (e2.east) -- (e4.west);
        % e5 -> e8
        \draw[->] (e5.east) -- (e8.west);
        % e9 -> e11
        \draw[->] (e9.west) -- (e11.east);
        
        % Draw border
        \draw[thick] (\xstart, \ystart) rectangle ++(\width, \height);
    
        % \draw[thick] (0.5,0.5) rectangle (3.5,12.5);
    \end{tikzpicture}
    }
    \caption{$\tr_4$ witnesses an increasing-cycle violation.}
    \label{fig:inc_example}
    \end{subfigure}
    \caption{Two non-conflict-serializable traces.}
    \Description{This figure is an example case of increasing cycle.}
\end{figure}

%% file: sections/serializability-main.tex
%!TEX root = ../main.tex
\section{Detecting Atomicity Violations}\label{sec:atomicity_testing}

In this section we present $\ouralgo$ for detecting conflict-serializability violations.
We start with some basic lemmas that capture serializability (\cref{subsec:serializability_conditions}), 
followed by the presentation of $\ouralgo$ (\cref{subsec:atomsanitizer}), and its correctness and complexity (\cref{subsec:ouralgo_correctness_complexity}).
We conclude with some lower bounds on the time and space complexity of monitoring atomicity (\cref{subsec:lower_bounds}).

\input{sections/serializability_conditions}
\input{sections/ouralgo}

\input{sections/ouralgo_example}
\input{sections/ouralgo_correctness_complexity}
\input{sections/lower_bounds}

%% file: sections/serializability_conditions.tex
\subsection{Serializability Conditions}\label{subsec:serializability_conditions}

\Paragraph{Transaction-happens-before.}
Given a trace $\tr$, the \emph{transaction-happens-before} relation $\thb{\tr}$ is the smallest transitive relation over the transactions of $\tr$ with the property that, for any two distinct transactions $T_1$ and $T_2$, if there exist $e_1\in T_1$ and $e_2\in T_2$ such that $e_1\chb{\tr}e_2$, then $T_1\thb{\tr}T_2$.
Conceptually, if $\thb{\tr}$ is a partial order, we can arrive at a serial trace $\tr'$ witnessing the serializability of $\tr$ by linearizing $\thb{\tr}$.
This is formally captured in the following lemma.

\begin{lemma}[\cite{Flanagan-2008}]\label{lem:thb_acyclicity_condition}
A trace $\tr$ is conflict-serializable iff $\thb{\tr}$ is irreflexive.
\end{lemma}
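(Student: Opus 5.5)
We prove the two directions separately, working throughout with the transitive relation $\thb{\tr}$ on transactions. Its irreflexivity is exactly acyclicity of the generating ``conflict graph'' whose edges are $T_1\to T_2$ for distinct $T_1,T_2$ with $e_1\in T_1$, $e_2\in T_2$, $e_1\chb{\tr}e_2$. The key observation used in both directions is the following: since $\chb{\tr}$ is the transitive closure of conflicting pairs ordered by $\traceOrd{\tr}$, and same-thread events always conflict, every $\chb{\tr}$ chain between events of different transactions can be cut into steps. Each step is either inside a single transaction or a direct conflict between two events of distinct transactions.

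\emph{($\Rightarrow$)} Assume $\tr$ is conflict-equivalent to a serial trace $\tr'$. We show that $T_1\thb{\tr}T_2$ implies that $T_1$ entirely precedes $T_2$ in $\tr'$; irreflexivity then follows immediately, since no transaction entirely precedes itself.
\begin{compactenum}
\item Consider a generating edge, i.e.\ $e_1\chb{\tr}e_2$ with $e_1\in T_1$, $e_2\in T_2$, $T_1\neq T_2$.
\item Because $\chb{\tr}=\chb{\tr'}$, we get $e_1\chb{\tr'}e_2$, and hence $e_1\traceOrd{\tr'}e_2$, as $\chb{\tr'}$ is contained in $\traceOrd{\tr'}$.
\item Since $\tr'$ is serial and $T_1\neq T_2$, the two transactions occupy disjoint contiguous blocks of $\tr'$. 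So $e_1\traceOrd{\tr'}e_2$ forces the whole block of $T_1$ to precede the block of $T_2$.
\item Block precedence is a strict order, so it is preserved under the transitive closure that defines $\thb{\tr}$.
\end{compactenum}

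\emph{($\Leftarrow$)} Assume $\thb{\tr}$ is irreflexive; being transitive, it is then a strict partial order on transactions.
\begin{compactenum}
\item Fix a linear extension $T^1,\dots,T^m$ of $\thb{\tr}$.
\item Define $\tr'$ as the concatenation of the transactions in this order, keeping each transaction's internal events in their $\tr$-order. Then $\tr'$ is serial and $\events{\tr'}=\events{\tr}$.
\item Check well-formedness of $\tr'$. Thread order is respected because two transactions of the same thread are $\thb{\tr}$-ordered via the thread-order conflict. Lock discipline holds because each critical section lies inside one transaction (we assume no nesting across transaction boundaries, as for well-formed traces), and critical sections on the same lock in different transactions are ordered by the conflicting $\rel/\acq$ pair, which the extension respects.
\item Show $\chb{\tr}=\chb{\tr'}$. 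Since both relations are transitive closures of the conflict relation oriented by trace order, and the conflict relation $\asymp$ does not depend on the trace, it suffices to show that for every conflicting pair $\cf{e_1}{e_2}$ we have $e_1\traceOrd{\tr}e_2$ iff $e_1\traceOrd{\tr'}e_2$.
\item If $e_1,e_2$ lie in the same transaction, their relative order is preserved by construction.
\item Otherwise $e_1\traceOrd{\tr}e_2$ gives $\txof(e_1)\thb{\tr}\txof(e_2)$, so $\txof(e_1)$ comes earlier in the linear extension and $e_1\traceOrd{\tr'}e_2$. The converse follows because the orientation of the conflicting pair in $\tr'$ is determined in the same way.
\end{compactenum}

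The main obstacle is this last step: checking that the reordering neither creates nor destroys $\chb{}$ edges. The argument handles it by reducing to single conflicting pairs, so that no reasoning about long chains is needed; the well-formedness of $\tr'$ with respect to locks is the secondary point requiring care.
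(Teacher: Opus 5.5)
Your proof is correct and follows the route the paper indicates. The paper gives no proof of its own (it cites Flanagan et al.) and only remarks that a serial witness is obtained by linearizing $\thb{\tr}$; that is exactly your ($\Leftarrow$) construction, and your ($\Rightarrow$) block-precedence argument is the standard converse.

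One small point: step 3 of ($\Leftarrow$) does not need the assumption that critical sections lie inside single transactions, which the paper never states. Well-formedness of $\tr'$ already follows from steps 5--6, because all same-thread events and all $\acq(\ell)/\rel(\ell)$ events pairwise conflict and therefore keep their relative order in $\tr'$.
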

For example, the trace $\tr_3$ (\cref{fig:cs_example1}) yields a cyclic $\thb{\tr_3}$ due to the orderings $T_1\to T_2\to T_3\to T_1$.
It is easy to see that $\thb{\tr}$ satisfies the following monotonicity property.

\begin{remark}\label{rem:thb_monotonic}
For any event $e$ of $\tr$, we have $\thb{\tr[{:}e]}\subseteq \thb{\tr}$.
\end{remark}

Monotonicity implies that there is a smallest prefix of $\tr$ when a conflict-serializability violation occurs (i.e., $\thb{\tr[:e]}$ becomes cyclic).
Different atomicity testers employ different techniques for detecting cycles in $\thb{\tr}$ by performing a single pass (i.e., streaming-style) of $\tr$ and checking whether $\thb{\tr}$ is irreflexive at each step.
$\ouralgo$ is based on the notions of latest remote and earliest local transactions, that we develop below.

\Paragraph{Latest remote transactions and earliest local transactions.}
For a trace $\tr$ and two threads $t_1$, $t_2$, we define the \emph{latest remote transaction} of $t_2$ known to $t_1$, $\LatestRemoteTransaction{\tr}{t_1}(t_2)$, as follows.
\begin{compactitem}
\item If $t_1$ is not executed in $\tr$, then $\LatestRemoteTransaction{\tr}{t_1}(t_2)=\bot$. 
Otherwise, let $T_1$ be the latest transaction of $t_1$ in $\tr$.
\item If $t_2$ has executed a transaction $T_2$ such that $T_2\thb{\tr}T_1$, then $\LatestRemoteTransaction{\tr}{t_1}(t_2)$ is the latest such transaction $T_2$, otherwise $\LatestRemoteTransaction{\tr}{t_1}(t_2)=\bot$.
\end{compactitem}

If $\LatestRemoteTransaction{\tr}{t_1}(t_2)\neq \bot$, we define the \emph{earliest local transaction} in $t_1$ for $t_2$ as $\EarliestReceiverTransaction{\tr}{t_1}(t_2)=T'_1$, where $T'_1$ is the earliest transaction in $t_1$ such that $\LatestRemoteTransaction{\tr}{t_1}(t_2)\thb{\tr}T'_1$.
Note that such a transaction must exist, being either $T_1$ or some earlier transaction.
On the other hand, if $\LatestRemoteTransaction{\tr}{t_1}(t_2) = \bot$, we simply let $\EarliestReceiverTransaction{\tr}{t_1}(t_2)=\bot$.
In words, $\LatestRemoteTransaction{\tr}{t_1}(t_2)$ is the latest transaction of thread $t_2$ that $t_1$ is aware of,
while $\EarliestReceiverTransaction{\tr}{t_1}(t_2)$ is the earliest transaction of $t_1$ that was made aware of that transaction of $t_2$.

\Paragraph{Example.}
Consider the trace $\tr_5$ in \cref{fig:cs_involved_example}.
At the event $e_{6}$, we have $\LatestRemoteTransaction{\tr_5[:e_{6}]}{t_3}(t_1) = T_1$ and $\EarliestReceiverTransaction{\tr_5[:e_{6}]}{t_3}(t_1) = T_2$.
At the event $e_{13}$, we have $\LatestRemoteTransaction{\tr_5[:e_{13}]}{t_3}(t_2) = T_3$ and $\EarliestReceiverTransaction{\tr_5[:e_{13}]}{t_3}(t_2) = T_5$.
Finally, at the end of $\tr_5$, we have $\LatestRemoteTransaction{\tr_5}{t_3}(t_2) = T_4$ and $\EarliestReceiverTransaction{\tr_5}{t_3}(t_2) = T_6$, as well as 
$\LatestRemoteTransaction{\tr_5}{t_3}(t_1) = T_1$ and $\EarliestReceiverTransaction{\tr_5}{t_3}(t_1) = T_2$.

The operating principle of $\ouralgo$ is based on the following lemma, which characterizes atomicity violations in terms of $\chb{\tr}$, $\LatestRemoteTransaction{\tr}{t_1}(t_2)$, and $\EarliestReceiverTransaction{\tr}{t_1}(t_2)$.

\begin{restatable}{lemma}{lemopenthbcondition}\label{lem:open_thb_condition}
A trace $\tr$ is not conflict-serializable iff there exist two distinct events $e'$ and $e''$ such that the following conditions hold, where $t'=\threadof{e'}$ and $t''=\threadof{e''}$:
\begin{enumerate*}[label=(\roman*)]
\item $t'\neq t''$ and $e'$ is the $\chb{\tr}$-immediate predecessor of $e''$,
\item $\LatestRemoteTransaction{\tr[{:}e]}{t'}(t'')=\txof(e'')$, and
\item $\EarliestReceiverTransaction{\tr[{:}e]}{t'}(t'').\bg \threadOrdR{\tr} \txof(e').\bg$,
where $e$ is a $\traceOrd{\tr}$-immediate predecessor of $e''$.
\end{enumerate*}
\end{restatable}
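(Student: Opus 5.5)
We will prove both directions using \cref{lem:thb_acyclicity_condition} together with the monotonicity of \cref{rem:thb_monotonic}. Throughout, $e$ is the $\traceOrd{\tr}$-immediate predecessor of $e''$, so $\tr[{:}e]$ is exactly the prefix before $e''$. We will repeatedly use one elementary fact. Any two distinct transactions $T_a, T_b$ of the same thread with $T_a.\bg \threadOrd{\tr} T_b.\bg$ satisfy $T_a\thb{\tr}T_b$, because same-thread events conflict.

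\emph{($\Leftarrow$)} Let $T''=\txof(e'')$, $T'=\txof(e')$, and $E=\EarliestReceiverTransaction{\tr[{:}e]}{t'}(t'')$.
\begin{compactenum}
\item By (ii) and the definition of $\EarliestReceiverTransaction{}{}$, we have $T''\thb{\tr[{:}e]}E$, hence $T''\thb{\tr}E$ by monotonicity.
\item By (iii), $E=T'$ or $E\thb{\tr}T'$, by the same-thread fact.
\item Since $t'\neq t''$, the transactions $T'$ and $T''$ are distinct. As $e'\chb{\tr}e''$, we get $T'\thb{\tr}T''$.
\end{compactenum}
Chaining these gives $T''\thb{\tr}T''$, so $\tr$ is not conflict-serializable by \cref{lem:thb_acyclicity_condition}.

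\emph{($\Rightarrow$)} By monotonicity, choose the shortest prefix $\tr[{:}e'']$ on which $\thb{}$ becomes reflexive, and let $T''=\txof(e'')$ and $t''=\threadof{e''}$.
\begin{compactenum}
\item \textbf{Shape of the new cycle.} Appending $e''$ only creates base edges of the form $X\to T''$, where $X$ contains some $e_0\chb{\tr}e''$. Hence every newly created $\thb{}$ pair factors through such an edge. A cycle that is new must therefore pass through $T''$, and yields some $X$ with $T''\thb{\tr[{:}e]}X$ (old relation, reflexively) and $X\to T''$ new.
\item \textbf{Refine $e_0$ to an immediate predecessor.} Let $e'$ be a $\chb{\tr}$-immediate predecessor of $e''$ with $e_0\POR{\tr}{CHB}e'$.
\begin{compactitem}
\item If $\threadof{e'}=t''$, then $X\thb{\tr[{:}e]}\txof(e')$, where $\txof(e')$ equals $T''$ or an earlier transaction of $t''$. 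Combined with $T''\thb{\tr[{:}e]}X$, and using the same-thread fact in the earlier-transaction case, this gives a cycle already in $\tr[{:}e]$, contradicting minimality.
\item Hence $t'=\threadof{e'}\neq t''$, and $T''\thb{\tr[{:}e]}X\thbR{\tr[{:}e]}\txof(e')$.
\end{compactitem}
This gives (i).
\item \textbf{Degenerate cases.} If $e''$ is the $\bg$ of $T''$, then $T''$ has no outgoing edges in $\tr[{:}e]$, so no such $X$ exists. Thus $T''$ is already open. It is the latest transaction of $t''$, so nothing later of $t''$ can be known to $t'$. Moreover $T''\thb{\tr[{:}e]}\txof(e')$, and $\txof(e')$ is at or before the latest transaction of $t'$. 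Together these give (ii), $\LatestRemoteTransaction{\tr[{:}e]}{t'}(t'')=T''$.
\item \textbf{Condition (iii).} Since $\txof(e')$ is a transaction of $t'$ receiving $T''$, the earliest such one begins no later than it.
\end{compactenum}

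The main obstacle is step 1: making rigorous that the new $\thb{}$ pairs all factor through an edge into $T''$, and that the old relation supplies the return path $T''\thb{\tr[{:}e]}X$. We would argue this by writing $\thb{\tr[{:}e'']}$ as the transitive closure of $\thb{\tr[{:}e]}$ together with the new base edges, and taking a shortest cycle. A shortest cycle uses exactly one new edge, since every new edge ends at $T''$ and a shortest cycle visits $T''$ only once.
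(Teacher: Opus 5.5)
Your proposal is correct and follows the same route as the paper: for ($\Leftarrow$) you chain $\txof(e'')\thb{\tr}\EarliestReceiverTransaction{\tr[{:}e]}{t'}(t'')\thbR{\tr}\txof(e')\thb{\tr}\txof(e'')$, and for ($\Rightarrow$) you take the first cyclic prefix and use that every new $\thb{}$ edge ends at $\txof(e'')$ to obtain $\txof(e'')\thb{\tr[{:}e]}\txof(e')$ for some $\chb{\tr}$-immediate predecessor $e'$. Your version is more careful than the paper's, which leaves implicit that $t'\neq t''$, that $\txof(e'')$ is already open in $\tr[{:}e]$, and how (ii) and (iii) follow from $\txof(e'')\thb{\tr[{:}e]}\txof(e')$; its ``successors of $\txof(e'')$ are unchanged'' step also tacitly needs $\txof(e')\neq\txof(e'')$, which your shortest-cycle argument and same-thread case analysis supply.
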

\input{figures/cs_involved_example}
Intuitively, condition (i) captures that $\txof(e')\thb{\tr}\txof(e'')$ by the definition of $\thb{\tr}$, while conditions (ii) and (iii) imply that $\txof(e'')\thb{\tr}\txof(e')$.
This is because (ii) states that $t'$ has learned the transaction $\txof(e'')$, and (iii) states that there exists a transaction $T'$ in $t'$ such that $\txof(e'') \thb{\tr} T' \thbR{\tr} \txof(e')$.
Thus, we have a violation as per \cref{lem:thb_acyclicity_condition}, since $\thb{\tr}$ is not irreflexive.

\Paragraph{Example.}
Consider the trace $\tr_3$ in \cref{fig:cs_example1}. 
At the event $e_{11}$, \cref{lem:open_thb_condition}, reports a conflict-serializability violation.
The events $e_9$ and $e_{11}$ correspond to $e'$ and $e''$ in \cref{lem:open_thb_condition}, respectively.
Here we have that
(i)~$e_9$ is a $\chb{\tr_3}$-immediate predecessor of $e_{11}$,
(ii)~$\LatestRemoteTransaction{\tr_3[:e_{10}]}{t_3}(t_1)=T_1$, and
(iii)~$\EarliestReceiverTransaction{\tr_3[:e_{10}]}{t_3}(t_1).\bg \threadOrdR{\tr_3} T_3.\bg$.
Now consider the trace $\tr_1$ in \cref{fig:cs_positive_example}.
At the event $e_{12}$, we have that $e_{10}$ is a $\chb{\tr_1}$-immediate predecessor of $e_{12}$, but $\LatestRemoteTransaction{\tr_1[:e_{11}]}{t_3}(t_2)=T_2 \neq \txof(e_{12})$, thus \cref{lem:open_thb_condition} does not report a conflict-serializability violation. 

%% file: figures/cs_involved_example.tex
%!TEX root = ../main.tex
\begin{wrapfigure}[15]{r}{0.32\textwidth}
	% \hfill
	\vspace{-0.5cm}
	% \begin{subfigure}[b]{0.35\textwidth}
		\def\scaleboxvalue{1}
		\centering
		% \scalebox{\scaleboxvalue}{
		\begin{tikzpicture}[
			xscale=1.15,
			yscale=0.35,
			every node/.style={font=\small},
			transaction/.style={draw, dashed, thick},
			event/.style={anchor=center, inner sep=1pt},
			-,
			>={Triangle[width=10pt,length=5pt]}
		]
			% starting coordinates
			\def\xstart{0} % starting x coordinate
			\def\ystart{0} % starting y coordinate

			% width and height of the rectangle
			\def\width{3} % width of the rectangle
			\def\height{17} % height of the rectangle

			% write labels for the three threads
			\node[anchor=south] at (\xstart + \width/6, \ystart + \height) {$t_{1}$};
			\node[anchor=south] at (\xstart + \width/2, \ystart + \height) {$t_{2}$}; \node[anchor=south]
			at (\xstart + 5*\width/6, \ystart + \height) {$t_{3}$};

			% Draw the three transaction lines
			\draw[dashed] (\xstart + \width/3, \ystart) -- (\xstart + \width/3, \ystart + \height);
			\draw[dashed] (\xstart + 2*\width/3, \ystart) -- (\xstart + 2*\width/3, \ystart + \height);

			% Draw the grid
			\foreach \y in {1,...,16}
			{ \pgfmathtruncatemacro\yy{17-\y} % Compute reverse ID: 16 -> 1
			\node[left] at (\xstart,\ystart+\y) {\yy}; % Keep the coordinates, but reverse the label
			}

			% events in t1
			\node[event] (e1) at (\xstart + \width/6, \ystart + \height - 1) {\bbg{T_1}};
			\node[event] (e2) at (\xstart + \width/6, \ystart + \height - 2) {$\wt(x)$};
            % \node[event] (e3) at (\xstart + \width/6, \ystart + \height - 3) {\ben{T_1}};

			% events in t2
			\node[event] (e6) at (\xstart + \width/2, \ystart + \height - 6) {\bbg{T_3}};
            \node[event] (e7) at (\xstart + \width/2, \ystart + \height - 7) {$\wt(y)$};
            \node[event] (e8) at (\xstart + \width/2, \ystart + \height - 8) {$\wt(x)$};
            \node[event] (e9) at (\xstart + \width/2, \ystart + \height - 9) {\ben{T_3}};

			\node[event] (e10) at (\xstart + \width/2, \ystart + \height - 10) {\bbg{T_4}};
            \node[event] (e11) at (\xstart + \width/2, \ystart + \height - 11) {$\wt(z)$};

			% events in t3
			\node[event] (e3) at (\xstart + 5*\width/6, \ystart + \height - 3) {\bbg{T_2}};
            \node[event] (e4) at (\xstart + 5*\width/6, \ystart + \height - 4) {$\rd(x)$};
            \node[event] (e5) at (\xstart + 5*\width/6, \ystart + \height - 5) {\ben{T_2}};

			\node[event] (e12) at (\xstart + 5*\width/6, \ystart + \height - 12) {\bbg{T_5}};
			\node[event] (e13) at (\xstart + 5*\width/6, \ystart + \height - 13) {$\rd(y)$};
			% \node[event] (e17) at (\xstart + 5*\width/6, \ystart + \height - 17) {$\rd(y)$};
			\node[event] (e14) at (\xstart + 5*\width/6, \ystart + \height - 14) {\ben{T_5}}; 

			\node[event] (e15) at (\xstart + 5*\width/6, \ystart + \height - 15) {\bbg{T_6}};
			\node[event] (e16) at (\xstart + 5*\width/6, \ystart + \height - 16) {$\rd(z)$};

			% e2 -> e4
			\draw[->] (e2.east) -- (e4.west);
			% e2 -> e8
			\draw[->] (e2.east) -- (e8.west);
			% e4 -> e8
			\draw[->] (e4.west) -- (e8.east);
			% e7 -> e13
			\draw[->] (e7.east) -- (e13.west);
			% e11 -> e16
			\draw[->] (e11.east) -- (e16.west);

			% Draw border
			\draw[thick] (\xstart, \ystart) rectangle ++(\width, \height);

			% \draw[thick] (0.5,0.5) rectangle (3.5,12.5);
		\end{tikzpicture}
		% }
		\Description{This figure is an example case of serializable trace}
		\caption{A trace $\tr_5$.}
		\label{fig:cs_involved_example}
	% \end{subfigure}
\end{wrapfigure}

%% file: sections/ouralgo.tex
\subsection{The AtomSanitizer Algorithm}\label{subsec:atomsanitizer}

We now present $\ouralgo$ for detecting conflict-serializability violations based on \cref{lem:open_thb_condition}.

\Paragraph{Intuition.}
$\ouralgo$ performs a single pass on the input trace $\tr$, and produces a warning after processing the smallest prefix of $\tr$ that contains a conflict-serializability violation (recall \cref{rem:thb_monotonic}).
Similarly to other conflict-serializability algorithms, after processing an event $e$, it maintains the maximal events of $\chb{\tr[{:}e]}$, in order to derive corresponding $\thb{\tr[{:}e]}$ orderings.
Its primary (and distinctive from other works) data structure consists of two vector clocks $\XClock_{t}$ and $\YClock_{t}$, for each thread $t$, of size $\numThreads$ each.
After processing an event $e$, for each thread $t_2$, $\XClock_{t_1}(t_2)$ and $\YClock_{t_1}(t_2)$ store identifiers of a transaction $T_2$ of $t_2$ and a transaction $T_1$ of $t_1$, respectively, such that $T_2\thb{\tr[{:}e]}T_1$.
As a first attempt, in order to report violations based on \cref{lem:open_thb_condition}, it is natural to try and maintain at all events $e$ the following invariant
\[
\XClock_{t_1}(t_2)=\txid{\LatestRemoteTransaction{\tr[{:}e]}{t_1}(t_2)} \qquad \text{and} \qquad \YClock_{t_1}(t_2)=\txid{\EarliestReceiverTransaction{\tr[{:}e]}{t_1}(t_2)}
\numberthis\label{eq:invariant}
\]

as, together with the $\chb{\tr[{:}e]}$-maximal events, it allows to test for the conditions of \cref{lem:open_thb_condition}.

Unfortunately, \cref{eq:invariant} appears difficult to maintain efficiently \emph{at all events $e$}.
This is because, when processing an event $e$ of thread $t_1$, a transaction-happens-before path might be formed from $\LatestRemoteTransaction{\tr}{t_1}(t_2)$ to $t_1$ via some intermediate thread $t_3$.
If the transaction $\LatestRemoteTransaction{\tr}{t_1}(t_2)$ is not open, $t_3$ might be aware of a later transaction of $t_2$, and the information about $\LatestRemoteTransaction{\tr}{t_1}(t_2)$ is lost (see \cref{fig:lrt_lost_example}).

\input{figures/lrt_lost_example}

In order to overcome the above obstacle and regain the invariant of \cref{eq:invariant}, it is natural to attempt to store some additional information in the state of the algorithm, perhaps at the cost of impacting its running time.
One key idea behind $\ouralgo$ is that this invariant is, in fact, \emph{not necessary}.
In particular, $\ouralgo$ guarantees that \cref{eq:invariant} holds \emph{if $\LatestRemoteTransaction{\tr[{:}e]}{t_1}(t_2)$ is open in $\tr[{:}e]$}.
This is sufficient since, in \cref{lem:open_thb_condition}, $e''$ is the currently processed event of $t''$ and thus $\LatestRemoteTransaction{\tr[{:}e]}{t'}(t'')=\txof(e'')$ must be open in $\tr[{:}e]$.
Moreover, this relaxation makes the $\XClock_{t_1}$ and $\YClock_{t_1}$ clocks easy to maintain inductively, in terms of the corresponding clocks of the other threads.

\Paragraph{AtomSanitizer.}
We now describe $\ouralgo$ in detail.
The state of the algorithm is represented in the following components.
\begin{compactitem}
\item For each thread $t$, we have the vector clocks $\XClock_{t}$ and $\YClock_{t}$, as outlined above.
\item For each location $x$, we have 
\begin{enumerate*}[label=(\roman*)]
\item a variable $\lastWrite{x}$, storing the most recent write on $x$,
\item a map $\lastReads{x}$, storing the most recent read of each thread, and
\item a set $\lrd{x}$, storing the threads that have performed a read on $x$ after the most recent write on $x$.
\end{enumerate*}
\item For each lock $\ell$, we have a variable $\lastRelease{\ell}$ storing the most recent release on $\ell$.
\end{compactitem}

$\ouralgo$ is implemented in three parts.
\begin{compactitem}
\item \cref{algo:bookkeping} is the top level, defining a set of handlers for processing each event $e$ according to its type.
It keeps track of (a superset of) the maximal events of $\chb{\tr[{:}e]}$, and
performs checks on whether $\tr[{:}e]$ witnesses a conflict-serializability violation, by calling \cref{algo:cs}.
\item \cref{algo:cs} handles the conflict-serializability checks, based on the clocks $\XClock_{t_1}(t_2)$ and $\YClock_{t_1}(t_2)$ maintained collectively (for all threads) in a data structure called $\ds$.
It also updates these vector clocks by performing the appropriate data structure call in \cref{algo:ds}.
\item \cref{algo:ds} implements the data structure $\ds$ for maintaining the clocks $\XClock_{t_1}(t_2)$ and $\YClock_{t_1}(t_2)$.
\end{compactitem}

\input{material/algorithms/bookkeping}

\Paragraph{AtomSanitizer event handlers (\cref{algo:bookkeping}).} 
We now describe each handler, referring to \cref{algo:bookkeping} for details.
We do not list handlers for transaction $\bg$ and $\en$ events,
as these are only required to keep track of the id of each transaction, in a straightforward manner.
We use $w, r$ and $rel$ to denote the write, read and release events here.

\SubParagraph{Function \FunctionRead{$e$}.}
When processing a read event $e=\rd(x)$, it may case an ordering $\wt(x)\chb{\tr[{:}e]}e$, where $\wt(x)$ is the latest write on $x$.
The algorithm calls $\FunctionCheckAndUpdate$ (\cref{line:algo_atomsanitizer_read_chb_write}) to check whether this ordering creates a cycle in $\thb{\tr[{:}e]}$, and to update $\ds$ accordingly.

\SubParagraph{Function \FunctionWrite{$e$}.}
When processing a write event $e=\wt(x)$, this can result in $\numThreads$ orderings $\rd(x)\chb{\tr[{:}e]} e$, one for the last read $\rd(x)$ of each thread $t'$,
as well as one additional ordering  $\wt'(x)\chb{\tr[{:}e]}e$, where $\wt'(x)$ is the latest write on $x$.
For each of these events, the algorithm calls $\FunctionCheckAndUpdate$ (\cref{line:algo_atomsanitizer_write_chb_reads} and \ref{line:algo_atomsanitizer_write_chb_write}), as in the case of \FunctionRead{$e$}.

\SubParagraph{Function \FunctionAcquire{$e$}.}
When processing an acquire event $e=\acq(\ell)$, this can result in one ordering $\rel(\ell)\chb{\tr[{:}e]}e$, where $\rel(\ell)$ is the latest release on $\ell$.
This is handled as previously (\cref{line:algo_atomsanitizer_acquire_chb_release}).

\SubParagraph{Function \FunctionRelease{$e$}.}
When processing a release event $e=\rel(\ell)$, the relation $\chb{\tr[{:}e]}$ is only updated trivially, since all events that are $\chb{\tr[{:}e]}$-ordered before $e$ are also $\chb{\tr[{:}e']}$ before $e'$, where $e'$ is the $\threadOrd{\tr}$-immediate predecessor of $e$.
Therefore conflict-serializability checks can be avoided here, and the function only performs simple bookkeeping.

\input{material/algorithms/cs}

\Paragraph{Conflict-serializability checking (\cref{algo:cs}).}
This part of the algorithm handles new $\chb{\tr}$ orderings passed by the handlers of \cref{algo:bookkeping}, so as to check for violations and update the conflict graph.

\SubParagraph{Function \FunctionCheckAndUpdate{$v, e$}.} 
When a handler of \cref{algo:bookkeping} discovers a new ordering $v\chb{\tr[{:}e]}e$, it invokes $\FunctionCheckAndUpdate{v, e}$.
This function first queries $\ds$ to check whether \cref{lem:open_thb_condition} applies, yielding a conflict-serializability violation (\cref{line:algo_csc_reachability}).
Otherwise, it proceeds to record the new edge $\txof(v)\thb{\tr[{:}e]}\txof(e)$ in $\ds$ (\cref{line:algo_csc_insert_edge}).

\SubParagraph{Function \FunctionInsertEdge{$\pair{t_1}{j_1}, \pair{t_2}{j_2}$}.}
This function handles an edge insertion recording the ordering $\txof(v)\thb{\tr[{:}e]}\txof(e)$ in $\ds$.
Each transaction $T$ is represented as a pair $\pair{t}{j}$, where $t$ is the thread of $T$ and $j$ is the id of the begin event of $T$.
Since $\thb{\tr[{:}e]}$ is transitively closed, this function inserts in $\ds$ all transitive orderings implied through this edge, which can be discovered by making appropriate calls to $\ds$ (\cref{line:algo_csc_call_pred} and \cref{line:algo_csc_call_suc}).

\input{material/algorithms/ds} 
\Paragraph{Data Structure (\cref{algo:ds}).}
The data structure $\ds$ succinctly maintains a conflict graph by manipulating the vector clocks $\XClock_{t_1}(t_2)$ and $\YClock_{t_1}(t_2)$, for every pair of threads $t_1$ and $t_2$, initially set to $\bot$ (\cref{line:algo_ds_initialize_at}).
It provides the following interface, which is used by \cref{algo:cs}.

\SubParagraph{Function \FunctionSucc{$\pair{t_1}{j_1}, t_2$}.}
This function returns the earliest transaction $T_2$ in thread $t_2$ such that $T_1\thb{\tr[{:}e]}T_2$, where $T_1$ is the $j_1$-th transaction of thread $t_1$.
It first checks whether $t_2$ is aware of $T_1$ (i.e., whether $T_1\thb{\tr[{:}e]}T'_2$, where $T'_2$ is the latest transaction of $t_2$) by using $\XClock_{t_2}(t_1)$, which stores the id of the latest transaction in $t_1$ that $t_2$ knows of (thus, $t_2$ knows of all the earlier transactions of $t_1$).
If not, then no such transaction $T_2$ exists, and the function returns $\bot$.
Otherwise, $\txid{T_2}$ is guaranteed to be stored in $\YClock_{t_2}(t_1)$, which is returned.

\SubParagraph{Function \FunctionPred{$\pair{t_1}{j_1}, t_2$}.}
This function is symmetric to \FunctionSucc, and returns the id of the latest transaction $T_2$ of $t_2$ such that $T_2\thb{\tr[{:}e]}T_1$, where $T_1$ is the $j_1$-th transaction of $t_1$.

\SubParagraph{Function \FunctionReachable{$\pair{t_1}{j_1}, \pair{t_2}{j_2}$}.}
This function returns True iff $T_1\thb{\tr[{:}e]}T_2$, where each $T_i$ is the $j_i$-th transaction of thread $t_i$.
It is implemented via a simple $\FunctionSucc$ query.

\SubParagraph{Function \FunctionAddSucc{$\pair{t_1}{j_1}, \pair{t_2}{j_2}$}.}
This function records a new ordering $T_1\thb{\tr[{:}e]}T_2$, where each $T_i$ is the $j_i$-th transaction of thread $t_i$.
It starts by checking if $T_1$ is a later transaction of $t_1$ known to $t_2$ than the one $t_2$ is currently aware of, and if so, it updates $\XClock_{t_2}(t_1)$ and $\YClock_{t_2}(t_1)$ accordingly (\cref{line:ds-cs-interface-addsucc-1}).
Otherwise, it checks if $T_1$ is the latest transaction of $t_1$ known to $t_2$, but $T_2$ is a later transaction of $t_2$ that becomes aware of $T_1$ than the current transaction of $t_2$.
If so, it updates $\YClock_{t_2}(t_1)$, accordingly.
If none of the above holds, it is guaranteed that the ordering $T_1\thb{\tr[{:}e]}T_2$ is already present in the data structure, and is ignored.

%% file: figures/lrt_lost_example.tex
\begin{wrapfigure}[14]{r}{0.4\textwidth}
\centering
\vspace{-0.5cm}
% \begin{subfigure}[b]{0.45\textwidth}
\begin{tikzpicture}[xscale=1.5, yscale=0.35, 
every node/.style={font=\small},
transaction/.style={draw, dashed, thick},
event/.style={anchor=center, inner sep=1pt},
-, >={Triangle[width=10pt,length=5pt]}]

% starting coordinates
\def\xstart{0} % starting x coordinate
\def\ystart{0} % starting y coordinate

% width and height of the rectangle
\def\width{3}  % width of the rectangle
\def\height{14} % height of the rectangle

% write labels for the three threads
\node[anchor=south] at (\xstart + \width/6, \ystart + \height) {$t_1$};
\node[anchor=south] at (\xstart + \width/2, \ystart + \height) {$t_2$};
\node[anchor=south] at (\xstart + 5*\width/6, \ystart + \height) {$t_3$};

% Draw the three transaction lines
\draw[dashed] (\xstart + \width/3,\ystart) -- (\xstart + \width/3,\ystart + \height);
\draw[dashed] (\xstart + 2*\width/3, \ystart) -- (\xstart + 2*\width/3, \ystart + \height);

% Draw the grid
\foreach \y in {1,...,13} {
\pgfmathtruncatemacro\yy{14-\y} % Compute reverse ID: 12 -> 1
\node[left] at (\xstart,\ystart+\y) {\yy};  % Keep the coordinates, but reverse the label
}

% events in t2
\node[event] (e1) at (\xstart + \width/2, \ystart + \height - 1) {\bbg{T_1}};   
\node[event] (e2) at (\xstart + \width/2, \ystart + \height - 2) {$\wt(x)$};
\node[event] (e3) at (\xstart + \width/2, \ystart + \height - 3) {\ben{T_1}};
\node[event] (e8) at (\xstart + \width/2, \ystart + \height - 8) {\bbg{T_2}}; 
\node[event] (e9) at (\xstart + \width/2, \ystart + \height - 9) {$\wt(y)$};
% \node[event] (e12) at (\xstart + \width/6, \ystart + \height - 12) {\ben{T_2}};

% events in t3
\node[event] (e4) at (\xstart + 5*\width/6, \ystart + \height - 4) {\bbg{T_3}};
\node[event] (e5) at (\xstart + 5*\width/6, \ystart + \height - 5) {$\wt(z)$};
\node[event] (e6) at (\xstart + 5*\width/6, \ystart + \height - 6) {$\rd(x)$};
\node[event] (e7) at (\xstart + 5*\width/6, \ystart + \height - 7) {\ben{T_3}};
\node[event] (e10) at (\xstart + 5*\width/6, \ystart + \height - 10) {\bbg{T_4}};
\node[event] (e11) at (\xstart + 5*\width/6, \ystart + \height - 11) {$\rd(y)$};
% \node[event] (e13) at (\xstart + \width/2, \ystart + \height - 13) {\ben{T_3}};

% events in t1
\node[event] (e12) at (\xstart + \width/6, \ystart + \height - 12) {\bbg{T_5}};
\node[event] (e13) at (\xstart + \width/6, \ystart + \height - 13) {$\rd(z)$};
% \node[event] (e14) at (\xstart + 5*\width/6, \ystart + \height - 14) {\ben{T_4}};

% e2 -> e6
\draw[->] (e2.east) -- (e6.west);
% e9 -> e11
\draw[->] (e9.east) -- (e11.west);
% e5 -> e13
\draw[->, bend left=5] (e5.west) to (e13.east);

% Draw border
\draw[thick] (\xstart, \ystart) rectangle ++(\width, \height);

% \draw[thick] (0.5,0.5) rectangle (3.5,12.5);
\end{tikzpicture}
\caption{
A trace $\tr_6$.
$\LatestRemoteTransaction{\tr_6}{t_1}(t_2) = T_1$ via $T_3$ of $t_3$, but this fact is not recoverable from $\LatestRemoteTransaction{\tr_6}{t_3}(t_2)$, as it has progressed to $T_2$.
}
\label{fig:lrt_lost_example}
\Description{This figure is an example case of conflict serializable}
% \end{subfigure}
\end{wrapfigure}

%% file: material/algorithms/bookkeping.tex
%!TEX root = ../../main.tex
%\setlength{\textfloatsep}{12pt}

\begin{algorithm*}[ht!]
	%\small
	\DontPrintSemicolon
	\SetInd{0.28em}{0.35em}
	\BlankLine
	\vspace*{-\multicolsep}
	\begin{multicols}{3}
        \MyFunction{\FunctionInitialize{}} {
            %$\thrTxn{} \gets \bot$\\
			\ForEach{$x \in \variables{\tr}$} {
				$\lastWrite{x} \gets \bot$\\
                $\lastReads{x} \gets \bot$\\
                $\lrd{x} \gets \emptyset$\\
			}
			\ForEach{$\ell \in \locks{\tr}$} {
				$\lastRelease{\ell} \gets \bot$\\
			}
        }    
        \BlankLine

        \MyFunction{\FunctionRelease{$e$}} {
            \label{line:bookkeping-release-start}
            $\ell \gets \var(e)$\\
            $\lastRelease{\ell} \gets e$\\
        }
        \BlankLine

        \MyFunction{\FunctionRead{$e$}} {
            \label{line:bookkeping-read-start}
            $\pair{x}{t} \gets \pair{var(e)}{\tid(e)}$\\
            $w \gets \lastWrite{x}$\\
            % $t \gets \tid(e)$\\
            $\lastReads{x}[t] \gets e$\\
            $\lrd{x} \gets \lrd{x} \cup \{t\}$\\ \label{line:algo_atomsanitizer_update_lrds}
            $\FunctionCheckAndUpdate(w, e)$\\ \label{line:algo_atomsanitizer_read_chb_write}
        }
        \BlankLine

        \MyFunction{\FunctionAcquire{$e$}} {
            \label{line:bookkeping-acquire-start}
            $\ell \gets \var(e)$\\
            $rel \gets \lastRelease{\ell}$\\
            $\FunctionCheckAndUpdate(rel, e)$ \label{line:algo_atomsanitizer_acquire_chb_release}
        }
        \BlankLine

        \MyFunction{\FunctionWrite{$e$}} {
            \label{line:bookkeping-write-start}
            $x \gets var(e)$\\
            $tids \gets \lrd{x}$\\
            \ForEach{$t \in tids$} {\label{line:algo_atomsanitizer_loop_over_lrds}
                \label{line:bookkeeping-write-foreach-start}
                $r \gets \lastReads{x}[t]$\\
                $\FunctionCheckAndUpdate(r, e)$\label{line:algo_atomsanitizer_write_chb_reads}
            }
            $w \gets \lastWrite{x}$\\
            $\lastWrite{x} \gets e$\\
            $\lrd{x} \gets \emptyset$\\ \label{line:algo_atomsanitizer_write_empty_lrds}
            $\FunctionCheckAndUpdate(w, e)$\\ \label{line:algo_atomsanitizer_write_chb_write}
        }
        \BlankLine

    \end{multicols}
    \BlankLine

\caption{
$\ouralgo$ event handlers.
}
\label{algo:bookkeping}
\vspace*{-\multicolsep}
\end{algorithm*}
\normalsize

%% file: material/algorithms/cs.tex
%!TEX root = ../../main.tex
%\setlength{\textfloatsep}{12pt}

\begin{algorithm}[ht!]
	\small
	\DontPrintSemicolon
	\SetInd{0.28em}{0.35em}
	\BlankLine
	\vspace*{-\multicolsep}
	\begin{multicols}{2}
		\MyFunction{\FunctionCheckAndUpdate{$v$, $e$}} {
			\uIf{$\tid(v)\neq \tid(e)$}{
			$n_1 \gets \pair{\tid(v)}{\txid{\txof(v)}}$\\ \label{line:conflict-serializability-check-start1}
			$n_2 \gets \pair{\tid(e)}{\txid{\txof(e)}}$\\ \label{line:conflict-serializability-check-start2}
			\If{$\ds.\reachable(n_2, n_1)$} {
				\label{line:algo_csc_reachability}
				$\reportError{Conflict serializability violation}$\label{line:algo_ccs_violation}
			}
			$\insedge(n_1, n_2)$\label{line:algo_csc_insert_edge}
		}
		}    
		\BlankLine

			\MyFunction{\FunctionInsertEdge{$\pair{t_1}{j_1}, \pair{t_2}{j_2}$}}{
				\ForEach{$t_{1}' \in \threads{\tr}$}{\label{line:ds-cs-interface-ins-1}
					\ForEach{$t_{2}' \in \threads{\tr} \setminus \set{t_1'}$}{\label{line:ds-cs-interface-ins-2}
						
						\lIf{$t_{1}' = t_1$}{$j_{1}' \xleftarrow[]{}  j_1$\label{line:ds-cs-interface-ins-3}}
						\lElse{$j_{1}' \xleftarrow[]{} \ds.\pred(\pair{t_1}{ j_1}, t_{1}')$\label{line:algo_csc_call_pred}} 
						
						\lIf{$t_{2}' = t_2$}{$j_{2}' \xleftarrow[]{} j_2$\label{line:ds-cs-interface-ins-5}}
						\lElse{$j_{2}' \xleftarrow[]{} \ds.\suc(\pair{t_2}{j_2}, t_{2}')$\label{line:algo_csc_call_suc}} 
						
						$\ds.\addsuc(\pair{t_{1}'}{j_{1}'}, \pair{t_{2}'}{j_{2}'})$\label{line:ds-cs-interface-ins-8}
					}
				}
			}
			\BlankLine
		
		\end{multicols}
		\BlankLine
		
		\caption{
			Checking conflict serializability.
		}
		\label{algo:cs}
	\end{algorithm}
	\normalsize

%% file: material/algorithms/ds.tex
%!TEX root = ../../main.tex
%\setlength{\textfloatsep}{12pt}

\begin{algorithm}[h!]
	\small
	\DontPrintSemicolon
	\SetInd{0.28em}{0.35em}
	\BlankLine
	\vspace*{-\multicolsep}
	\begin{multicols}{2}
		\MyFunction{\FunctionInitialize{}} {
			\ForEach{$t \in \threads{\tr}$}{
				\ForEach{$t' \in \threads{\tr}$}{
					$\XClock_t(t') \gets \bot$; \label{line:algo_ds_initialize_at}
					$\YClock_t(t') \gets \bot$ 
				}
			}
		}    
		\BlankLine
		
		\MyFunction{\FunctionReachable{$\pair{t_1}{j_1}$, $\pair{t_2}{j_2}$}} {
			\lIf{$t_1 = t_2$}{\Return $j_1 \leq j_2$}
			\uIf{$\suc(\pair{ t_1}{ j_1}, t_2) \leq j_2$}{\Return True}
			\lElse{\Return False}
		}    
		\BlankLine
		
		\MyFunction{\FunctionSucc{$\pair{t_1}{j_1}$, $t_2$}} {
			\lIfElse{$\XClock_{t_2}(t_1) \geq j_1$}{\Return $\YClock_{t_2}(t_1)$}{\Return $\bot$} \label{line:ds-cs-interface-succ}
		}    
		\BlankLine
		
		\MyFunction{\FunctionPred{$\pair{t_1}{j_1}$, $t_2$}} {
			 \lIfElse{$\YClock_{t_1}(t_2) \leq j_1$}{\Return $\XClock_{t_1}(t_2)$}{\Return $\bot$} \label{line:ds-cs-interface-pred}
		}    
		\BlankLine

		\MyFunction{\FunctionAddSucc{$\pair{t_1}{j_1}$, $\pair{t_2}{j_2}$}} {
			\If{$\XClock_{t_2}(t_1) < j_1$}{	\label{line:ds-cs-interface-addsucc-1}
				$\XClock_{t_2}(t_1) \gets j_1$;
				$\YClock_{t_2}(t_1) \gets j_2$ 
			} 
			\ElseIf{$\XClock_{t_2}(t_1) = j_1 \land \YClock_{t_2}(t_1) > j_2$} {
				$\YClock_{t_2}(t_1) \gets j_2$
			}
		}    
		\BlankLine
	\end{multicols}
	\BlankLine
	
	\caption{
		Data structure.
	}
	\label{algo:ds}
	% \vspace*{-\multicolsep}
\end{algorithm}
\normalsize

%% file: sections/ouralgo_example.tex
\Paragraph{Example.}
For brevity, we write $\XClock_{i}(j) = T$ and $\YClock_i(j) = T$ to mean that they store $\txid{T}$.
Consider the trace $\tr_7$ in \cref{fig:cs_algo_example}.
Edge numbers depict the order in which $\chb{\tr_7}$ relations are processed.
The data structure $\ds$ is updated as follows.
First, executing $e_7$ updates $\ds$ to $\XClock_{t_4}(t_3) = T_2$ and $\YClock_{t_4}(t_3) = T_3$, capturing that $T_2\thb{\tr_7[:e_7]}T_3$.
Second, executing $e_{11}$ updates $\ds$ to $\XClock_{t_2}(t_1) = T_1$ and $\YClock_{t_2}(t_1) = T_4$, capturing that $T_1\thb{\tr_7[:e_{11}]}T_4$.
\input{figures/cs_algo_example}%
Third, executing $e_{17}$ updates $\ds$ to $\XClock_{t_2}(t_1) = T_5$ and $\YClock_{t_2}(t_1) = T_7$, capturing that $T_5\thb{\tr_7[:e_{17}]}T_7$.
Fourth, executing $e_{18}$ updates $\ds$ to $\XClock_{t_4}(t_2) = T_4$ and $\YClock_{t_4}(t_2) = T_6$, capturing that $T_4\thb{\tr_7[:e_{18}]}T_6$.
Note that $\XClock_{t_4}(t_1)$ is still $\bot$, missing the ordering $T_1\thb{\tr_7[{:}e_{18}]}T_6$.
This is because $T_1\thb{\tr_7[{:}e_{18}]}T_6$ is formed transitively, through $T_4$, but $t_2$ is aware of the later transaction $T_5$ of $t_1$ at this point (due to edge 3).
Fifth, executing $e_{19}$ updates $\ds$ to $\XClock_{t_3}(t_2) = T_7$ and $\YClock_{t_3}(t_2) = T_2$, capturing that $T_7\thb{\tr_7[:e_{19}]}T_2$.
In addition, $\ds$ also updates $\XClock_{t_3}(t_1) = T_5$ and $\YClock_{t_3}(t_1) = T_2$,
$\XClock_{t_4}(t_2) = T_7$ and $\YClock_{t_4}(t_2) = T_3$,
$\XClock_{t_4}(t_1) = T_5$ and $\YClock_{t_4}(t_1) = T_3$ by computing the transitive closure, capturing respectively
$T_5\thb{\tr_7[:e_{19}]}T_2$, $T_7\thb{\tr_7[{:}e_{19}]}T_3$, and $T_5\thb{\tr_7[{:}e_{19}]}T_3$.
Finally, when executing $e_{20}$, we have $\XClock_{t_4}(t_1)=T_5$ and $\YClock_{t_4}(t_1)=T_3$.
Since $T_3.\bg \threadOrdR{\tr_7} \txof(e_7).\bg$ (in fact $\txof(e_7)=T_3$), $\ouralgo$ reports a conflict-serializability violation.

%% file: figures/cs_algo_example.tex
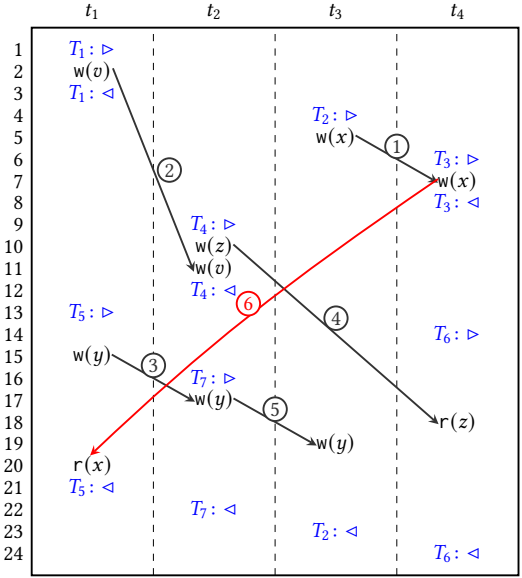
\begin{wrapfigure}[20]{r}{0.51\textwidth}
    \vspace{-0.2cm}
% \begin{subfigure}[htbp]{0.5\textwidth}
    % \centering
    % \begin{subfigure}[t]{\textwidth}
    \def\scaleboxvalue{0.9}
    \scalebox{\scaleboxvalue} {
    \begin{tikzpicture}[xscale=1.8, yscale=0.32, 
        every node/.style={font=\small},
        transaction/.style={draw, dashed, thick},
        event/.style={anchor=center, inner sep=1pt},
        -, >={Triangle[width=10pt,length=5pt]}]
    
        % starting coordinates
        \def\xstart{0} % starting x coordinate
        \def\ystart{0} % starting y coordinate
        
        % width and height of the rectangle
        \def\width{4}  % width of the rectangle
        \def\height{25} % height of the rectangle
    
        % write labels for the three threads
        \node[anchor=south] at (\xstart + \width/8, \ystart + \height) {$t_1$};
        \node[anchor=south] at (\xstart + 3*\width/8, \ystart + \height) {$t_2$};
        \node[anchor=south] at (\xstart + 5*\width/8, \ystart + \height) {$t_3$};
        \node[anchor=south] at (\xstart + 7*\width/8, \ystart + \height) {$t_4$};
        
        % Draw the three transaction lines
        \draw[dashed] (\xstart + \width/4,\ystart) -- (\xstart + \width/4,\ystart + \height);
        \draw[dashed] (\xstart + 2*\width/4, \ystart) -- (\xstart + 2*\width/4, \ystart + \height);
        \draw[dashed] (\xstart + 3*\width/4, \ystart) -- (\xstart + 3*\width/4, \ystart + \height);
        
        % Draw the grid
        \foreach \y in {1,...,24} {
            \pgfmathtruncatemacro\yy{25-\y} % Compute reverse ID: 25 -> 1
            \node[left] at (\xstart,\ystart+\y) {\yy};  % Keep the coordinates, but reverse the label
        }
        
        % events in t1
        % T1: rhd, wt(v), T1: lhd
        \node[event] (e1) at (\xstart + \width/8, \ystart + \height - 1) {\bbg{T_1}};   
        \node[event] (e2) at (\xstart + \width/8, \ystart + \height - 2) {$\wt(v)$};
        \node[event] (e3) at (\xstart + \width/8, \ystart + \height - 3) {\ben{T_1}};
        % T4: rhd, wt(y), rd(x), T2: lhd
        \node[event] (e13) at (\xstart + \width/8, \ystart + \height - 13) {\bbg{T_5}};
        \node[event] (e15) at (\xstart + \width/8, \ystart + \height - 15) {$\wt(y)$};
        \node[event] (e20) at (\xstart + \width/8, \ystart + \height - 20) {$\rd(x)$};
        \node[event] (e21) at (\xstart + \width/8, \ystart + \height - 21) {\ben{T_5}};

        % events in t2
        % T4:rhd, wt(v) T4:lhd T6: rhd, wt(y), T6: lhd
        \node[event] (e9)  at (\xstart + 3*\width/8, \ystart + \height - 9) {\bbg{T_4}};
        \node[event] (e10)  at (\xstart + 3*\width/8, \ystart + \height - 10) {$\wt(z)$};
        \node[event] (e11)  at (\xstart + 3*\width/8, \ystart + \height - 11) {$\wt(v)$};
        \node[event] (e12)  at (\xstart + 3*\width/8, \ystart + \height - 12) {\ben{T_4}};

        \node[event] (e16) at (\xstart + 3*\width/8, \ystart + \height - 16) {\bbg{T_7}};
        \node[event] (e17) at (\xstart + 3*\width/8, \ystart + \height - 17) {$\wt(y)$};
        \node[event] (e22) at (\xstart + 3*\width/8, \ystart + \height - 22) {\ben{T_7}};

        % events in t3
        % T2: rhd, wt(x), wt(y), T2: lhd
        \node[event] (e4) at (\xstart + 5*\width/8, \ystart + \height - 4) {\bbg{T_2}};
        \node[event] (e5) at (\xstart + 5*\width/8, \ystart + \height - 5) {$\wt(x)$};
        \node[event] (e19) at (\xstart + 5*\width/8, \ystart + \height - 19) {$\wt(y)$};
        \node[event] (e23) at (\xstart + 5*\width/8, \ystart + \height - 23) {\ben{T_2}};

        % events in t4
        % T3: rhd, rd(x), T3: lhd
        \node[event] (e6)  at (\xstart + 7*\width/8, \ystart + \height - 6) {\bbg{T_3}};
        \node[event] (e7) at (\xstart + 7*\width/8, \ystart + \height - 7) {$\wt(x)$};
        \node[event] (e8) at (\xstart + 7*\width/8, \ystart + \height - 8) {\ben{T_3}};

        % T5: rhd, rd(v), wt(z), T4: lhd
        \node[event] (e14) at (\xstart + 7*\width/8, \ystart + \height - 14) {\bbg{T_6}};
        \node[event] (e18) at (\xstart + 7*\width/8, \ystart + \height - 18) {$\rd(z)$};
        \node[event] (e24) at (\xstart + 7*\width/8, \ystart + \height - 24) {\ben{T_6}};
        
        % e5 -> e7
        \draw[hb, ->] (e5.east) to node[above, circled]{$1$} (e7.west);
        % e2 -> e11
        \draw[hb, ->, bend right=0] (e2.east) to node[right=0.05,circled]{$2$} (e11.west);
        % e10 -> e18
        \draw[hb, ->, bend left=0] (e10.east) to node[above=0.05,circled]{$4$} (e18.west);
        % e15 -> e17
        \draw[hb, ->, bend right=0] (e15.east) to node[left, above, circled]{$3$} (e17.west);
        % e17 -> e19
        \draw[hb, ->, bend left=0] (e17.east) to node[above, circled]{$5$} (e19.west);
        % e7 -> e20
        \draw[vio, ->, bend right=2] (e7.west) to node[above, circled]{$6$} (e20.north);

        % e9 -> e4
        %\draw[thb, dashed, bend left=6] (e9.east) to node[left, circled]{$5$} (e4.west);
        % e9 -> e6
        %\draw[thb, dashed, bend left=6] (e9.east) to node[below=0.3,left, circled]{$6$} (e6.west);
        
        % Draw border
        \draw[thick] (\xstart, \ystart) rectangle ++(\width, \height);
    
        % \draw[thick] (0.5,0.5) rectangle (3.5,12.5);
    \end{tikzpicture}
    }
    % \end{subfigure}
    \caption{
    A trace $\tr_7$ processed by $\ouralgo$.
    Numbers indicate the sequence of $\chb{\tr_7}$ edges processed by $\ouralgo$.
    Edge 5 leads to a violation report.
}
    \label{fig:cs_algo_example}
    \Description{This figure is an example case of conflict serializable}
    % \end{subfigure}
\end{wrapfigure}

%% file: sections/ouralgo_correctness_complexity.tex
\subsection{Correctness and Complexity}\label{subsec:ouralgo_correctness_complexity}

We now state the correctness and complexity of $\ouralgo$.
The principle of operation of $\ouralgo$ is by keeping track of the latest (open) remote transactions and earliest local transactions, as described in \cref{subsec:serializability_conditions}.
Formally, it maintains the following invariant.

\begin{restatable}{lemma}{lematomsanitizerinvariant}\label{lem:atomsanitizer_invariant}
After $\ouralgo$ has processed an event $e$, for all threads $t_1$, $t_2$, if $\LatestRemoteTransaction{\tr[{:}e]}{t_1}(t_2)$ is open in $\tr[{:}e]$, we have $\XClock_{t_1}(t_2)=\txid{\LatestRemoteTransaction{\tr[{:}e]}{t_1}(t_2)}$ and $\YClock_{t_1}(t_2)=\txid{\EarliestReceiverTransaction{\tr[{:}e]}{t_1}(t_2)}$.
\end{restatable}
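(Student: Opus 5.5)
We prove \cref{lem:atomsanitizer_invariant} by induction on the length of the processed prefix, strengthened by a soundness invariant. The strengthening says that whenever $\XClock_{t_1}(t_2)=j_2\neq\bot$ and $\YClock_{t_1}(t_2)=j_1$, then the $j_2$-th transaction of $t_2$ is $\thb{\tr[{:}e]}$-before the $j_1$-th transaction of $t_1$. Soundness is easy to maintain: every $\addsuc$ call in $\insedge$ combines a $\pred$ answer, the new edge, and a $\suc$ answer. By the induction hypothesis these certify $T'_1\thbR{} T_1\thb{} T_2\thbR{} T'_2$, so the recorded pair is a genuine $\thb{\tr[{:}e]}$ ordering. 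Consequently $\XClock_{t_1}(t_2)\leq\txid{\LatestRemoteTransaction{\tr[{:}e]}{t_1}(t_2)}$ always holds. Likewise, when $\XClock$ equals the id of the latest remote transaction, $\YClock_{t_1}(t_2)\geq\txid{\EarliestReceiverTransaction{\tr[{:}e]}{t_1}(t_2)}$. It therefore remains to prove the reverse inequalities in the case where the latest remote transaction is open.

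For the inductive step, fix the new event $e$ and split on its type. Events $\bg$, $\en$ and $\rel$ do not change $\thb{}$ edges across threads. The only delicate point is that a $\bg$ of $t_1$ starts a new latest transaction $T_1$, but $\thb{}$ into it from $t_2$ can only arise through $t_1$'s earlier transactions via thread order. Since thread order between distinct transactions of the same thread is not itself a $\thb{}$ edge, I plan to use that $\chb{}$ includes thread order, so that $T\thb{} T'$ for any earlier $T$ of $t_1$ whenever the two contain conflicting events ordered by $\traceOrd{}$. From this, $\LatestRemoteTransaction{}{t_1}(t_2)$ and $\EarliestReceiverTransaction{}{t_1}(t_2)$ are unchanged. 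Also, the transaction becoming closed only weakens the hypothesis. For read, write and acquire events, $\thb{\tr[{:}e]}$ equals the transitive closure of $\thb{\tr[{:}e']}$ plus the edges $\txof(v)\to\txof(e)$ passed to $\FunctionCheckAndUpdate$. That every new $\chb{}$-immediate predecessor across threads is among these $v$ is a routine check of \cref{algo:bookkeping}, using $\lrd{x}$ and the last write/release.

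The key step is the edge-insertion case. Suppose $L=\LatestRemoteTransaction{\tr[{:}e]}{t'_1}(t'_2)$ is open and is newly learned or its receiver moved earlier. Then there is a path $L\thbR{}\,T\thb{}$ (new edge) $T''\thbR{} R$, where $T$ is in $\tid(v)=t_1$, $T''=\txof(e)$ is in $t_2$, and $R$ is a transaction of $t'_1$. I will show that $\pred(\pair{t_1}{j_1},t'_2)$ returns exactly $\txid{L}$. This is where openness is essential. Since $L$ is open and $L\thbR{}\txof(v)$, $L$ is also the latest transaction of $t'_2$ known to $t_1$, because any later transaction of $t'_2$ has not begun. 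Hence $L=\LatestRemoteTransaction{\tr[{:}e']}{t_1}(t'_2)$ is open, and the induction hypothesis applies to $t_1$. For the $\pred$ test I need $\YClock_{t_1}(t'_2)\leq j_1$, which follows from the hypothesis since $L\thb{}\txof(v)$. A similar argument for $\suc(\pair{t_2}{j_2},t'_1)$ requires that $\XClock_{t'_1}(t_2)$ be exact. That exactness holds when the latest remote transaction from $t_2$ is $\txof(e)$ itself, which is open. If the latest remote transaction from $t_2$ is not open, I instead use that no later transaction of $t_2$ exists beyond $\txof(e)$, and that $\suc$ only needs $\XClock\geq j_2$ together with a correct $\YClock$. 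The intention is to argue this via the soundness bound plus minimality of $\YClock$ maintained by the $\min$ branch of $\addsuc$.

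I expect this $\suc$ side to be the main obstacle. There the relevant latest remote transaction of $t_2$ need not be open, so the induction hypothesis does not directly apply. If the above argument fails, my fallback is to strengthen the invariant to cover every $t_2$-transaction that is currently the latest of its thread, whether open or not, and to check that \cref{algo:ds} still preserves this stronger statement. Finally, $\addsuc$ installs the pair: it overwrites when $\XClock$ is smaller, and otherwise takes the minimum of $\YClock$. This yields equality with $\txid{L}$ and $\txid{\EarliestReceiverTransaction{}{t'_1}(t'_2)}$. Multiple $\FunctionCheckAndUpdate$ calls per write are handled by iterating the argument over the sequence of inserted edges.
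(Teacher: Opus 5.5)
Your proposal is correct and follows essentially the same route as the paper: an induction over the trace in which every $\chb{\tr}$-immediate predecessor $v$ of $e$ yields an $\insedge$ call whose two nested loops capture all new transitive $\thb{\tr[{:}e]}$ orderings. The difference is that you spell out the soundness half and the openness-based exactness of the $\pred$ answer, where the paper just says ``straightforward induction''. The $\suc$ case you flag as the main obstacle is in fact vacuous: because $\txof(e)$ is the latest transaction of $t_2$, any transaction $R$ of a thread $t'_1\neq t_2$ with $\txof(e)\thb{\tr[{:}e]}R$ (such successors already exist before $e$) forces $\txof(e)$ itself to be the latest remote transaction of $t_2$ known to $t'_1$, which is open, so the induction hypothesis applies directly; if no such $R$ exists, your soundness bound gives $\XClock_{t'_1}(t_2)<\txid{\txof(e)}$ and $\suc$ correctly returns $\bot$, so no fallback strengthening is needed.
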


In turn, the invariant of \cref{lem:atomsanitizer_invariant} combines with \cref{lem:open_thb_condition} to yield the correctness of $\ouralgo$, as stated below.

\begin{restatable}{lemma}{lematomsanitizercorrectness}\label{lem:atomsanitizer_correctness}
$\ouralgo$ reports a conflict-serializability violation iff $\tr$ is not conflict-serializable.
Moreover, if $\tr$ is not conflict-serializable, then $\ouralgo$ reports the first event $e$ of $\tr$ such that $\thb{\tr[{:}e]}$ is cyclic.
\end{restatable}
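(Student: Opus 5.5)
We prove the two directions separately, using \cref{lem:atomsanitizer_invariant}, \cref{lem:open_thb_condition} and the monotonicity of \cref{rem:thb_monotonic}. The overall strategy is to show that, at every call $\FunctionCheckAndUpdate(v,e)$ with $\tid(v)\neq\tid(e)$, the test $\ds.\reachable(\pair{\tid(e)}{\txid{\txof(e)}},\pair{\tid(v)}{\txid{\txof(v)}})$ returns True exactly when conditions (ii) and (iii) of \cref{lem:open_thb_condition} hold with $e'=v$ and $e''=e$, evaluated on the prefix ending at the $\traceOrd{\tr}$-immediate predecessor of $e$. We also show that the calls made by the handlers of \cref{algo:bookkeping} enumerate a superset of the cross-thread $\chb{\tr}$-immediate predecessors of $e$.

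\textbf{Step 1 (coverage of immediate predecessors).} We argue that every cross-thread $\chb{\tr}$-immediate predecessor $e'$ of a read, write or acquire $e''$ is among the events passed to $\FunctionCheckAndUpdate$ when $e''$ is processed. For reads this is $\lastWrite{x}$. For writes they are $\lastWrite{x}$ or $\lastReads{x}[t]$ for $t\in\lrd{x}$, since a read of $t$ older than the last write is $\chb{\tr}$-before that write. For acquires it is $\lastRelease{\ell}$. Releases, begins and ends have no new cross-thread immediate predecessors beyond those of their thread predecessor. Conversely, every pair actually passed satisfies $v\chb{\tr}e$, so any reported cycle is genuine.

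\textbf{Step 2 (soundness).} Suppose a report happens at $e$, i.e.\ $\reachable(n_2,n_1)$ holds before $\insedge$. Then $\suc(n_2,t_1)\le j_1$ means $\XClock_{t_1}(t_2)\ge \txid{\txof(e)}$ and $\YClock_{t_1}(t_2)\le\txid{\txof(v)}$. We must show these stored values reflect true $\thb{}$ orderings. The cleanest route is a secondary soundness invariant, which I would prove alongside (or extract from the proof of) \cref{lem:atomsanitizer_invariant}: whenever $\XClock_{t_1}(t_2)=a\ne\bot$ and $\YClock_{t_1}(t_2)=b$, the $a$-th transaction of $t_2$ is $\thb{}$-before the $b$-th transaction of $t_1$. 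This follows because $\addsuc$ is only ever called on pairs derived transitively from recorded edges. Since $\txof(e)$ is the current (open) transaction of $t_2$, $\XClock_{t_1}(t_2)\ge\txid{\txof(e)}$ forces equality. We then obtain $\txof(e)\thb{}T\thbR{}\txof(v)\thb{}\txof(e)$, a cycle, and \cref{lem:thb_acyclicity_condition} gives non-serializability.

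\textbf{Step 3 (completeness and earliest report).} Let $e$ be the first event with $\thb{\tr[{:}e]}$ cyclic; such an $e$ exists by \cref{rem:thb_monotonic}. Apply \cref{lem:open_thb_condition} to $\tr[{:}e]$, which yields $e',e''$. By minimality of the prefix, $e''=e$, since otherwise a shorter prefix would already satisfy the lemma. Hence $\LatestRemoteTransaction{}{t'}(t'')=\txof(e)$, which is open. \cref{lem:atomsanitizer_invariant}, applied at the predecessor of $e$, therefore gives exact clock values, and condition (iii) translates to $\YClock_{t'}(t'')\le\txid{\txof(e')}$, so $\reachable$ returns True. By Step 1 the call with $v=e'$ occurs.

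One subtlety remains: several $\FunctionCheckAndUpdate$ calls happen while processing a single write, and earlier calls already insert edges. I would handle this by treating each call as a sub-step at which the invariant still holds for the intermediate relation. Alternatively, I would note that inserting edges into $\txof(e)$ never changes $\LatestRemoteTransaction{}{t'}(t'')$ for $t''=\tid(e)$ unless a cycle is already present. This intra-event interleaving, together with establishing the secondary soundness invariant of Step 2, is where I expect the main difficulty.
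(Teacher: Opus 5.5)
Your proposal is correct and takes essentially the paper's route. The handlers pass every cross-thread $\chb{\tr}$-immediate predecessor to $\mathtt{checkAndUpdate}$, and \cref{lem:atomsanitizer_invariant} turns the $\reachable$ test into conditions (ii)--(iii) of \cref{lem:open_thb_condition}. Your explicit soundness invariant on $\XClock$/$\YClock$ and your handling of several $\mathtt{checkAndUpdate}$ calls within one write are details the paper's short proof leaves implicit; its ``iff'' via \cref{lem:atomsanitizer_invariant} tacitly needs that soundness direction, so your additions fill that in rather than change the argument.
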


Next, we turn our attention to the complexity of $\ouralgo$.
At a high level, each event $e$ that is not a write event results in a single call to \FunctionCheckAndUpdate, which takes $O(\numThreads^2)$ time, determined by the two nested loops in \cref{algo:cs} (\cref{line:ds-cs-interface-ins-1} and \cref{line:ds-cs-interface-ins-2}) for performing a transitive closure operation after inserting a single edge in \cref{line:algo_csc_insert_edge}.
On the other hand, when $e$ is a write event, \FunctionCheckAndUpdate may be called $O(\numThreads)$ times, due to the loop in \cref{line:algo_atomsanitizer_loop_over_lrds} of \cref{algo:bookkeping}, resulting in $O(\numThreads^3)$ time.
Owing to the use of the sets $\lrd{x}$, however, an amortization argument shows that across all events of $\tr$, the total number of calls to \FunctionCheckAndUpdate is $O(\numEvents)$, leading to $O(\numThreads^2)$ time per event of $\tr$, on average.
In particular, this holds because $\lrd{x}$ is reset after each write on $x$ (\cref{line:algo_atomsanitizer_write_empty_lrds}), and thus the iterations of that loop can be amortized to the read events that were executed after the last write on $x$.

\begin{restatable}{lemma}{lematomsanitizercomplexity}\label{lem:atomsanitizer_complexity}
Given a trace $\tr$ of $\numEvents$ events, $\numThreads$ threads, $\numLocations$ variables and $\numLocks$ locks,
$\ouralgo$ runs in $O(\numEvents \numThreads^2)$ time and $O(\numThreads(\numThreads+\numLocations) + \numLocks)$ space.
\end{restatable}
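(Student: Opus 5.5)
We prove the time and space bounds of \cref{lem:atomsanitizer_complexity} separately. For time, we first bound the cost of a single call to \FunctionCheckAndUpdate, then bound the total number of such calls over the trace by an amortization argument.

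\textbf{Cost of one call.} Each operation of $\ds$ in \cref{algo:ds} reads or writes $O(1)$ entries of the clocks $\XClock_t$ and $\YClock_t$, so it takes $O(1)$ time. This covers $\reachable$, $\suc$, $\pred$ and $\addsuc$; $\reachable$ makes a single $\suc$ call. \FunctionInsertEdge runs two nested loops over the threads (\cref{line:ds-cs-interface-ins-1} and \cref{line:ds-cs-interface-ins-2}). Each iteration makes at most one $\pred$, one $\suc$ and one $\addsuc$ call. Hence \FunctionInsertEdge, and therefore \FunctionCheckAndUpdate, runs in $O(\numThreads^2)$ time. Maintaining transaction ids for $\bg$ and $\en$ events, and handling a release, costs $O(1)$.

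\textbf{Number of calls.} Every read and every acquire makes exactly one call. A write makes one call for the previous write, plus one call per thread in $\lrd{x}$. We charge each thread $t\in\lrd{x}$ iterated at \cref{line:algo_atomsanitizer_loop_over_lrds} to the read event of $t$ on $x$ that inserted $t$ into $\lrd{x}$ at \cref{line:algo_atomsanitizer_update_lrds}. That read occurred after the last write on $x$, because $\lrd{x}$ is emptied at \cref{line:algo_atomsanitizer_write_empty_lrds}.

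The only subtlety is that a thread may read $x$ several times between two writes, and we must not charge the same read twice. Since $\lrd{x}$ is a set, each thread contributes at most one loop iteration per write. We charge that iteration to, say, the first read of $t$ on $x$ after the last write. After the current write empties the set, that read can never be charged again. So each read is charged at most once, and the total number of loop iterations over the whole trace is at most $\numEvents$.

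The total number of calls to \FunctionCheckAndUpdate is therefore $O(\numEvents)$. Assuming $\lrd{x}$ supports $O(1)$ insertion, iteration and clearing (for instance a list paired with a per-thread flag), the running time is $O(\numEvents\numThreads^2)$. This amortization is the main, though mild, obstacle.

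\textbf{Space.} The clocks $\XClock_t$ and $\YClock_t$, over all threads $t$, occupy $O(\numThreads^2)$ space. For each variable $x$ we store $\lastWrite{x}$ in $O(1)$ space, and $\lastReads{x}$ and $\lrd{x}$ in $O(\numThreads)$ space each, giving $O(\numThreads\numLocations)$ over all variables. Each lock $\ell$ stores one event $\lastRelease{\ell}$, for $O(\numLocks)$ in total. Stored events only need their thread and transaction id, so each takes $O(1)$ words. Per-thread bookkeeping of current transaction ids adds $O(\numThreads)$. Summing gives $O(\numThreads(\numThreads+\numLocations)+\numLocks)$.
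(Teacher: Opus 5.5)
Your proposal is correct and follows the paper's proof essentially step for step. The matching steps are: $O(\numThreads^2)$ per call to \FunctionCheckAndUpdate from the nested loops of \FunctionInsertEdge; an $O(\numEvents)$ bound on the number of calls, obtained by charging each iteration of the write loop to a read of that thread on $x$ since the previous write (using that $\lrd{x}$ is emptied at every write); and the same per-thread, per-variable and per-lock space accounting. The only item the paper lists that you leave implicit is the initialization cost $O(\numLocations+\numLocks+\numThreads^2)$, which is absorbed into $O(\numEvents\numThreads^2)$ because $\numLocations,\numLocks\le\numEvents$.
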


We remark that the number of locks contribute to the space complexity by a term $O(\numLocks)$, as opposed to $O(\numThreads\numLocks)$ in existing algorithms.
Moreover, although $\ouralgo$ uses $O(\numThreads)$ space for each location $x$ in the worst case, a more precise bound is $O(\numThreads_x)$, where $\numThreads_x$ is the number of threads reading from $x$.
In practice, most variables are accessed/read by a few threads, decreasing the space complexity  to $O(\numThreads^2+\numLocations+\numLocks)$.
The following theorem concludes the guarantees of $\ouralgo$.

\begin{restatable}{theorem}{thmatomsanitizer}\label{thm:atomsanitizer}
Let $\tr$ be an input trace of $\numEvents$ events, $\numThreads$ threads, $\numLocations$ variables and $\numLocks$ locks.
$\ouralgo$ reports a violation iff $\tr$ is not conflict-serializable, and uses $O(\numEvents \numThreads^2)$ time and $O(\numThreads(\numThreads+\numLocations) + \numLocks)$ space.
If $\tr$ is not conflict-serializable, $\ouralgo$ reports the first event $e$ of $\tr$ such that $\thb{\tr[{:}e]}$ is cyclic.
\end{restatable}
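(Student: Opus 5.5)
The theorem combines the two preceding lemmas, so the proof is mostly assembly. For the correctness part, we invoke \cref{lem:atomsanitizer_correctness} directly. It states that $\ouralgo$ reports a violation iff $\tr$ is not conflict-serializable, and that in the non-serializable case the report occurs at the first event $e$ such that $\thb{\tr[{:}e]}$ is cyclic. For the resource bounds, we invoke \cref{lem:atomsanitizer_complexity}, which gives $O(\numEvents\numThreads^2)$ time and $O(\numThreads(\numThreads+\numLocations)+\numLocks)$ space.

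If the proof has to be self-contained rather than a citation of these lemmas, I would structure it as follows. For correctness, the first step is to establish \cref{lem:atomsanitizer_invariant} by induction on the processed prefix. The induction has two cases.
\begin{compactitem}
\item For begin and end events, the $\XClock/\YClock$ clocks are unchanged. The case for open transactions remains valid, because a newly begun transaction has no $\thb{}$ predecessors yet.
\item For an event $e$ that induces new $\chb{}$ edges, we argue that each new $\thb{}$ ordering $T_a \thb{\tr[{:}e]} T_b$ whose source $T_a$ is open factors through the inserted edge $\txof(v)\to\txof(e)$. The preceding part of the path is then recovered by $\pred$ and the following part by $\suc$, both applied to the inductive clocks.
\end{compactitem}
The key point is that if $T_a$ is the latest remote transaction and is still open, the intermediate threads have not moved past it. Hence $\pred$ returns exactly $T_a$, and $\addsuc$ then sets both clocks correctly: it takes the larger $\XClock$ value, and on ties the smaller $\YClock$ value.

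Given the invariant, soundness and completeness follow from \cref{lem:open_thb_condition}. The transaction $\txof(e'')$ is open at the $\traceOrd{}$-predecessor of $e''$, so the $\reachable(n_2,n_1)$ query made by $\FunctionCheckAndUpdate$ evaluates exactly conditions (ii)–(iii). The handlers in \cref{algo:bookkeping} supply every $\chb{}$-immediate predecessor $e'$ of $e''$ from another thread; release events are the only ones skipped, and they add no new immediate predecessors. Minimality of the reported prefix follows from \cref{rem:thb_monotonic}: the algorithm checks after each event and stops at the first one.

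For complexity, each call to $\FunctionCheckAndUpdate$ costs $O(\numThreads^2)$, coming from the double loop with $O(1)$ data-structure operations per iteration. The number of such calls is amortized. Each read adds at most one thread to $\lrd{x}$, and every write clears this set, so the total number of write-loop iterations is at most the number of reads. The total number of calls is therefore $O(\numEvents)$, giving $O(\numEvents\numThreads^2)$ time.

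The space bound is a direct count:
\begin{compactitem}
\item $2\numThreads^2$ clock entries;
\item per variable, $O(\numThreads)$ for $\lastReads{x}$ and $\lrd{x}$;
\item one entry per lock for $\lastRelease{\ell}$.
\end{compactitem}
This totals $O(\numThreads(\numThreads+\numLocations)+\numLocks)$.

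The main obstacle is the inductive step of the invariant. Specifically, one must show that a relevant open latest remote transaction is never lost through an intermediate thread that has advanced its own knowledge, the situation of \cref{fig:lrt_lost_example}. I would handle this by arguing that a path from an open $T_a$ through a thread $t_3$ forces $\XClock_{t_3}(t_a)$ to equal $\txid{T_a}$. Thread $t_a$ cannot have any later transaction while $T_a$ is open, so no later transaction of $t_a$ exists for $t_3$ to learn about.
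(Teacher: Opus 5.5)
Your proposal is correct and matches the paper, whose proof consists exactly of combining \cref{lem:atomsanitizer_correctness} and \cref{lem:atomsanitizer_complexity}; your optional self-contained sketch also follows the paper's (much terser) arguments for those lemmas. One small slip in the sketch: a newly begun transaction does have $\thb{}$ predecessors (at least the previous transaction of its thread), so the correct reason the clocks need no update at a begin event is different---the begin event creates no new cross-thread orderings, and the new transaction has no $\thb{}$ successors and inherits exactly the predecessors of its thread's previous transaction, so no latest remote or earliest local transaction changes.
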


We further remark that the space usage of $\ouralgo$ is also bounded by $O(\numEvents + \numThreads^2)$, but the space bound of \cref{thm:atomsanitizer} is preferred as long as $\numThreads\numLocations < \numEvents$ (which is typically the case).

%% file: sections/lower_bounds.tex
\subsection{Lower Bounds}\label{subsec:lower_bounds}

Finally, in this section we establish two fundamental lower bounds on the time and space complexity of monitoring conflict-serializability.

\Paragraph{A super-linear online time lower bound.}
Recall that, if we relinquish the practical requirement that a violation is reported in an online fashion as soon as it occurs, conflict-serializability violations can be detected in $O(\numEvents)$ time, by simply recording the direct $\thb{\tr}$ orderings and checking for the existence of a cycle.
$\ouralgo$ operates in an online setting, and although it also achieves linear running time when $\numThreads=O(1)$, its complexity is generally superlinear.
Here we prove that such an overhead for the online problem is likely unavoidable, using techniques from fine-grained complexity theory.
In particular, the Online Matrix-Vector multiplication (OMv) problem is stated as follows.
Given a $m\times m$ matrix $M$ and an online sequence of vectors $v_1,\dots, v_m$, 
after a possibly polynomial-time preprocessing of $M$, the task is to compute each product $Mv_i$ before $v_{i+1}$ is revealed.
The corresponding OMv hypothesis~\cite{Henzinger2015}
states that any algorithm for OMv requires $\Omega(m^{3-\epsilon})$ time, for any fixed $\epsilon>0$.
We establish the following super-linear lower bound based on OMv, which states that the detection on conflict-serializability violations in an online manner is a fundamentally more difficult problem.

\begin{restatable}{lemma}{lemtimelowerbound}\label{lem:time_lower_bound}
Any  algorithm for conflict-serializability violations over traces $\tr$ that reports a violation when it occurs (in particular, at the first event $e$ for which $\thb{\tr[:e]}$ is cyclic, or even when $\txof(e)$ closes)
must take $\Omega(\numEvents^{3/2-\epsilon})$ time, for any fixed $\epsilon>0$, 
under the OMv hypothesis.
\end{restatable}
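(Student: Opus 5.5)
We plan a reduction from OMv, in its equivalent online vector-matrix-vector form (OuMv) as shown in~\cite{Henzinger2015}. In OuMv, an $m\times m$ Boolean matrix $M$ is preprocessed, and then $m$ pairs $(u_q,v_q)$ arrive online; each bit $u_q^\top M v_q$ must be output before the next pair is revealed, and this needs $\Omega(m^{3-\epsilon})$ total time under OMv.

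The idea is to build a trace $\tr$ in two parts. The first part is a prefix of $O(m^2)$ events encoding $M$ once. It is followed by $m$ blocks of $O(m)$ events each, one block per query. Hence $\numEvents=\Theta(m^2)$. If an online conflict-serializability monitor ran in $O(\numEvents^{3/2-\epsilon})=O(m^{3-2\epsilon})$ time, we would solve OuMv too fast, contradicting the hypothesis. This is the only reason for the exponent $3/2$: the matrix costs $m^2$ events, while each query costs only $m$ events.

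\textbf{Encoding $M$.} We use a row thread $a_i$ and a column thread $b_j$ for each index. Each runs a single transaction $A_i$, respectively $B_j$, that stays open for the whole trace. In the prefix, for every $M_{ij}=1$ we let $A_i$ write a fresh variable $w_{ij}$ and let $B_j$ read it. This gives $A_i\thb{\tr}B_j$ exactly when $M_{ij}=1$. Since only row-to-column edges are created, the prefix is acyclic by \cref{lem:thb_acyclicity_condition}.

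\textbf{Encoding a query $(u_q,v_q)$.} A dedicated query thread $c$ begins a fresh transaction $T_q$. For every $j$ with $v_{q,j}=1$, $B_j$ writes a per-query fresh variable $p_{q,j}$ and $T_q$ reads it, giving $B_j\thb{\tr}T_q$. For every $i$ with $u_{q,i}=1$, $T_q$ writes a fresh variable $s_{q,i}$ and $A_i$ then reads it, giving $T_q\thb{\tr}A_i$. Finally $T_q$ ends.

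We then prove the key claim. The first cycle of $\thb{\tr}$ appears inside block $q$, at the last of these events or at the latest when $T_q$ closes, if and only if $u_q^\top M v_q=1$. The ``if'' direction is the cycle $T_q\to A_i\to B_j\to T_q$. For ``only if'', we use that all edges incident to query transactions go from $B$'s into $T$'s and from $T$'s into $A$'s. Since thread $c$ serializes $T_1,\dots,T_q$ in order, any cycle must pass through a single $T_{q'}$ and use $u_{q'}$, $M$ and $v_{q'}$. The reduction therefore reads off each answer from whether the monitor has reported by the end of block $q$, which is allowed by the statement's relaxation to ``when $\txof(e)$ closes''.

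\textbf{Main obstacle.} The persistent edges are a real problem. Once some query answers $1$, the trace already contains a cycle, and a monitor that reports only the first violation tells us nothing about later queries. Worse, edges from earlier queries, such as $T_{q'}\to A_i$, combine with later ones like $B_j\to T_q$, so the long-lived $A_i,B_j$ transactions can create spurious cycles through thread order on $c$.

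To avoid the first issue, we would first try reducing from the OMv variant in which it suffices to detect whether any query answers $1$ within a batch. We would then binary-search or split the queries into groups, re-encoding $M$ per group. The goal is to keep the total re-encoding cost at $o(m^{3-\epsilon})$ while keeping $\numEvents$ linear in the work, which requires balancing the group size against $m^2$.

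If that balancing fails, the fallback is to make the row and column transactions per-query copies connected to persistent ones only through $O(m)$ fresh events. The single-direction edge discipline would then make cross-query cycles impossible by construction, and we would re-verify the ``only if'' direction for that gadget.
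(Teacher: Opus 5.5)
Your template is the paper's: reduce from OuMv, encode $M$ once in an $O(m^2)$-event prefix with $2m$ transactions kept open, and spend $O(m)$ events per query, so $\numEvents=\Theta(m^2)$. The gap is your key claim. It is false from the second query on, and the thread order on $c$ is exactly what breaks your ``only if'' argument. Same-thread events conflict, so $T_{q'}\thb{\tr}T_q$ for all $q'<q$. Hence a new edge $T_q\to A_i$ together with an old edge $B_j\to T_{q'}$ closes the cycle $T_q\to A_i\to B_j\to T_{q'}\to T_q$ whenever $u_q^\top M v_{q'}=1$ for some $q'<q$. This is not the combination you name in your obstacle paragraph. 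Already for $m=1$, $M=(1)$, $u_1=v_2=(0)$, $u_2=v_1=(1)$, a violation appears in block $2$ although both products are $0$.

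The paper sidesteps this particular issue by giving each query two fresh threads, with transactions $T^u_q,T^v_q$ and a closing edge $T^v_q\to T^u_q$. That does not cure the underlying problem. $\thb{\tr}$ is transitive regardless of when its edges were created, and the long-lived row and column transactions are hubs shared by all queries. So even with fresh threads you get the cycle $T_q\to A_i\to B_j\to T_{q'}\to A_{i'}\to B_{j'}\to T_q$ whenever $u_q^\top Mv_{q'}=u_{q'}^\top Mv_q=1$. For example, $M=I_2$, $u_1=v_2=(1,0)$, $u_2=v_1=(0,1)$ gives a cycle after block $2$, yet $u_1^\top Mv_1=u_2^\top Mv_2=0$; the paper's gadget behaves identically. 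Independently of this, violations persist, so after the first positive answer no further answer can be read off a first-violation monitor.

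Neither of your remedies closes the gap. A first-violation monitor certifies at most one positive answer per run, so re-encoding $M$ costs $\Theta(m^2)$ events per positive answer, up to $\Theta(m^3)$ events overall. Grouping or binary search does not reduce the number of positives that must be certified. Per-query copies do not help either: a closed transaction can never receive a new incoming conflict edge. So every query must enter the matrix gadget through transactions that have stayed open since the prefix, and the cross-query cycles reappear through these shared transactions.

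The paper's proof does not supply the missing idea. It simply asserts that the prefix after block $q$ is non-serializable iff $u_q^\top Mv_q=1$, which the example above refutes.

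What this template does give is a \emph{worst-case per-event} bound. Keep an undo log and roll the monitor back to its post-prefix state after each block, so it only ever sees the prefix plus one block. Your per-query equivalence is then correct, since every cycle passes through the single query transaction. A monitor with $O(\numEvents^{1/2-\epsilon})$ worst-case time per event would then answer all $m$ queries in $O(m^{3-2\epsilon})$ time.

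For total time the obstacle appears intrinsic. A first-violation monitor is an instance of incremental cycle detection on the transaction graph with $O(\numEvents)$ edge insertions, by the amortization in \cref{lem:atomsanitizer_complexity}. Almost-linear total-time incremental cycle detection is known (Chen, Kyng, Liu, Meierhans and Probst Gutenberg, STOC 2024). Proving the amortized $\Omega(\numEvents^{3/2-\epsilon})$ bound under OMv would therefore amount to refuting OMv.
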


We further remark that the lower bound of \cref{lem:time_lower_bound} holds for a ``reasonable'' number of threads, in particular, for $\numThreads=\Theta(\sqrt{\numEvents})$ (as opposed to, e.g., $\numThreads=\Theta(\numEvents)$).
In this case, $\ouralgo$ runs in $O(\numEvents^2)$ time, as per \cref{thm:atomsanitizer}, and is thus within a  $\sqrt{\numEvents}$ factor (i.e., sublinear) from optimal.

\Paragraph{A linear space lower bound.}
Note that the space usage of $\ouralgo$ can become linear in $\numEvents$ when there are many locations, e.g., $\numLocations=\Omega(\numEvents)$.
Can we detect atomicity violations using always sublinear space (i.e., not only when $\numLocations=o(\numEvents)$)?
The following lemma states that this is impossible.

\begin{restatable}{lemma}{lemspacelowerbound}\label{lem:space_lower_bound}
Any streaming algorithm for conflict-serializability/increasing-path violations even over traces $\tr$ with only two threads and two transactions must use $\Omega(\numEvents)$ space.
\end{restatable}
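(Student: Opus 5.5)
We prove \cref{lem:space_lower_bound} by a reduction from the one-way communication complexity of \textsc{Index}. In that problem Alice holds a bit string $b\in\{0,1\}^N$, Bob holds an index $i\in[N]$, and only Alice sends a message. Any randomized protocol in which Bob outputs $b_i$ with probability at least $2/3$ requires $\Omega(N)$ bits. Given a streaming algorithm $\mathcal{A}$ with space $S(\numEvents)$, Alice runs $\mathcal{A}$ on a prefix of the trace encoding $b$ and sends the memory state to Bob. Bob then continues the run on a suffix encoding $i$ and reads off $b_i$ from whether $\mathcal{A}$ reports a violation. The message has $S$ bits, so $S=\Omega(N)$. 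Both the prefix and the suffix will have $O(N)$ events, which gives $\Omega(\numEvents)$.

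The trace uses threads $t_1,t_2$, one transaction $T_1$ of $t_1$, one transaction $T_2$ of $t_2$, and variables $x_1,\dots,x_N$ (so $\numLocations=\Theta(\numEvents)$, consistent with the remark preceding the lemma).

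Alice's prefix is: $\langle t_1,\txbegin\rangle$, then for each $j$ with $b_j=1$ the event $\langle t_1,\wt(x_j)\rangle$, then $\langle t_2,\txbegin\rangle$. At this point no events of $T_1$ and $T_2$ conflict, so $\thb{}$ is empty and the prefix is serializable.

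Bob's suffix is $\langle t_2,\wt(x_i)\rangle$, $\langle t_2,\txend\rangle$, $\langle t_1,\rd(y)\rangle$ preceded by a $\langle t_2,\wt(y)\rangle$ placed before $T_2$ ends, and finally $\langle t_1,\txend\rangle$. Concretely, the suffix is $\langle t_2,\wt(x_i)\rangle,\langle t_2,\wt(y)\rangle,\langle t_2,\txend\rangle,\langle t_1,\rd(y)\rangle,\langle t_1,\txend\rangle$ for a fresh variable $y$. The edge $T_2\thb{}T_1$ always arises via $y$. The edge $T_1\thb{}T_2$ arises iff $t_1$ wrote $x_i$, i.e., iff $b_i=1$.

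So the full trace is non-serializable iff $b_i=1$ (by \cref{lem:thb_acyclicity_condition}). It is also an increasing-cycle instance on $T_1$: we have $T_1.\bg\chb{}\wt(x_i)\chb{}\wt(y)\chb{}\rd(y)\chb{}T_1.\en$ with the middle events outside $T_1$. Hence the same reduction covers increasing-path violations.

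The subtle step is making sure that the answer is determined only after Bob's part. This rules out any trivial saving where the algorithm could, say, pre-commit to a smaller summary once it sees Bob's index. Here the argument is direct: Alice's simulation runs before $i$ is known, so whatever state $\mathcal{A}$ holds must let Bob recover any chosen $b_i$.

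Finally, we must check well-formedness: both threads end their transactions, there are no locks, and every event lies in a transaction. The number of events is at most $N+8$, so $N=\Theta(\numEvents)$.
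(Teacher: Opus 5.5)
Your argument is correct, but you extract the lower bound differently from the paper; the hard instance itself is essentially the paper's. There, $t_1$ opens $T_1$ and writes $x_a$ for each $a$ in a set $A\subseteq\{1,\dots,\numEvents\}$. Then $t_2$ executes a whole transaction $T_2$ consisting of $\wt(x_\ell)$ followed by $\wt(y)$. Finally $t_1$ writes $y$ and closes $T_1$, so $\tr_{A,\ell}$ is non-serializable (with an increasing cycle on $T_1$) iff $\ell\in A$. Your use of $\rd(y)$ instead of $\wt(y)$, and moving $t_2$'s $\txbegin$ into the prefix, are cosmetic changes.

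The paper uses a direct pigeonhole argument. With fewer than $\numEvents$ bits of memory there are fewer than $2^{\numEvents}$ states, so two sets $A\neq A'$ lead to the same state after the first segment. Picking $\ell\in A\setminus A'$ (wlog), the identical continuations force a wrong answer on one of $\tr_{A,\ell}$, $\tr_{A',\ell}$. This is self-contained but only covers deterministic algorithms. Your reduction from one-way \textsc{Index} also rules out bounded-error randomized streaming algorithms. The cost is citing the randomized \textsc{Index} lower bound, whose deterministic special case is exactly the paper's counting argument.

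Two small things to tidy:
\begin{itemize}
\item The prefix length depends on the Hamming weight of $b$, so you only have $\numEvents\le N+7$, not $N=\Theta(\numEvents)$. That bound is all you need; alternatively, pad with writes to a dummy variable.
\item For the increasing-cycle ``iff'', say why $b_i=0$ yields no increasing cycle. For example, the trace is then serializable, while any increasing cycle on a transaction $T$ forces $T\thb{\tr}T$.
\end{itemize}
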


%% file: sections/increasing_cycle.tex
%!TEX root = ../main.tex
\section{Detecting Increasing-Cycle Violations}\label{sec:increasing_cycle}

In this section we turn our attention to increasing-cycle violations. 
Towards this, we adapt the conditions of \cref{lem:open_thb_condition} to work for the increasing-cycle setting.
This will allow us to tune $\ouralgo$ to report increasing-cycle violations by making slight changes to \cref{algo:cs}.

\Paragraph{Latest remote begins and earliest local events.}
For a trace $\tr$ and two threads $t_1$, $t_2$, we define the \emph{latest remote begin} of $t_2$ known to $t_1$ $\LatestRemoteBegin{\tr}{t_1}(t_2)$, as follows.
\begin{compactitem}
\item If $t_1$ is not executed in $\tr$, then $\LatestRemoteBegin{\tr}{t_1}(t_2)=\bot$. 
Otherwise, let $e_1$ be the latest event of $t_1$ in $\tr$.
\item If $t_2$ has a transaction begin event $\bg_2$ such that  $\bg_2\chb{\tr}e_1$, then $\LatestRemoteBegin{\tr}{t_1}(t_2)$ is the latest such begin event $\bg_2$, otherwise $\LatestRemoteBegin{\tr}{t_1}(t_2)=\bot$.
\end{compactitem}

\input{figures/lrb_ele_example}

If $\LatestRemoteBegin{\tr}{t_1}(t_2)\neq \bot$, we define the \emph{earliest local event} in $t_1$ for $t_2$ as $\EarliestLocalEvent{\tr}{t_1}(t_2)=e'_1$, where $e'_1$ is the earliest event in $t_1$ such that $\LatestRemoteBegin{\tr}{t_1}(t_2)\chb{\tr}e'_1$.

\Paragraph{Example.}
Consider the trace $\tr_8$ in \cref{fig:lrb_ele_example}.
At event $e_7$, we have $\LatestRemoteTransaction{\tr_8[:e_7]}{t_3}(t_1) = T_1$, while $\LatestRemoteBegin{\tr_8[:e_7]}{t_3}(t_1) = \bot$.
At event $e_{9}$, $\LatestRemoteBegin{\tr_8[:e_{9}]}{t_3}(t_1) = T_1$, and $\EarliestLocalEvent{\tr_8[:e_{9}]}{t_3}(t_1) = e_9$.

The following lemma is analogous to \cref{lem:open_thb_condition}, and states the principle of operation of $\ouralgo$ for detecting increasing-cycle violations.

\begin{restatable}{lemma}{lemincreasingpathcondition}\label{lem:increasing_cycle_condition}
The trace $\tr$ has an increasing-cycle violation iff there exist two distinct events $e_1$ and $e_2$ with $\threadof{e_1}=t_1$ and $\threadof{e_2}=t_2$ such that the following conditions hold:
\begin{enumerate*}[label*=(\roman*)]
\item $t_1\neq t_2$ and $e_1$ is a $\chb{\tr}$-immediate predecessor of $e_2$,
\item $\LatestRemoteBegin{\tr[{:}e]}{t_1}(t_2)$ is the $\bg$ event of $\txof(e_2)$, and
\item $\EarliestLocalEvent{\tr[{:}e]}{t_1}(t_2) \threadOrdR{\tr} e_1$,
\end{enumerate*}
where $e$ is a $\traceOrd{\tr}$-immediate predecessor of $e_2$.
\end{restatable}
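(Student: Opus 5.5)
We prove both directions separately, mirroring the argument for \cref{lem:open_thb_condition}, but now working with events and $\chb{\tr}$ directly instead of transactions and $\thb{\tr}$.

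\emph{($\Leftarrow$)} Assume events $e_1,e_2$ satisfy (i)--(iii). Let $b=\LatestRemoteBegin{\tr[{:}e]}{t_1}(t_2)$; by (ii), $b=\txof(e_2).\bg$.
\begin{compactenum}
\item Since $b\neq\bot$, $\EarliestLocalEvent{\tr[{:}e]}{t_1}(t_2)$ is defined, and by definition $b\chb{\tr[{:}e]}\EarliestLocalEvent{\tr[{:}e]}{t_1}(t_2)$.
\item Condition (iii) together with thread order gives $b\chb{\tr}e_1$.
\item Condition (i) gives $e_1\chb{\tr}e_2$.
\item Because $e_2$ is an event of $\txof(e_2)$ other than its begin (the begin has no cross-thread immediate predecessor that matters here, as $e_1\in t_1\neq t_2$), we get $e_2\threadOrdR{\tr}\txof(e_2).\en$.
\end{compactenum}
Hence $\txof(e_2).\bg\chb{\tr}e_1\chb{\tr}\txof(e_2).\en$. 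Moreover $e_1\notin\txof(e_2)$, since $t_1\neq t_2$. So $\txof(e_2)$ has an increasing-cycle violation, with witness $e_3=e_1$.

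\emph{($\Rightarrow$)} Assume $T$ in thread $t_2$ has begin $b$ and end $b'$, and some $e_3\notin T$ satisfies $b\chb{\tr}e_3\chb{\tr}b'$.
\begin{compactenum}
\item Since $\chb{\tr}$ is the transitive closure of direct conflicts, take a chain from $b$ to $b'$ through $e_3$. This chain must leave $T$ (as $e_3\notin T$) and re-enter $t_2$.
\item Let $e_2$ be the first event of $T$, in thread order, having a $\chb{\tr}$-immediate predecessor $e_1$ in another thread $t_1\neq t_2$ with $b\chb{\tr}e_1$. Such an $e_2$ exists: the chain re-enters $t_2$ at some event $\le b'$. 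That event lies in $T$, because any $t_2$-event between $b$ and $b'$ in thread order is in $T$. It is not $b$ itself, by acyclicity of $\chb{\tr}$ on the trace order.
\item Justify the choice of $e_1$ as an \emph{immediate} predecessor: pick the edge along the chain that is the last cross-thread conflict edge into $e_2$ and refine it via the standard remark that every $\chb{\tr}$ ordering decomposes into immediate steps. This gives (i).
\item For (ii), let $e$ be the $\traceOrd{\tr}$-immediate predecessor of $e_2$. Then $e_1\in\tr[{:}e]$ and $b\chb{\tr[{:}e]}e_1'$ for the latest event $e_1'$ of $t_1$ in $\tr[{:}e]$. Hence $\LatestRemoteBegin{\tr[{:}e]}{t_1}(t_2)$ is $b$ or a later begin of $t_2$. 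No later begin of $t_2$ exists in $\tr[{:}e]$, since $T$ is still open when $e_2$ occurs. So it equals $b$.
\item For (iii), $\EarliestLocalEvent{\tr[{:}e]}{t_1}(t_2)$ is the earliest event of $t_1$ that $b$ reaches. Since $b\chb{\tr}e_1$ with $e_1$ in $t_1$, that earliest event is $\threadOrdR{\tr}e_1$. The edge $b\chb{\tr}e_1$ holds already in $\tr[{:}e]$ because $\chb{}$ is monotone under prefixes.
\end{compactenum}

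\textbf{Main obstacle.} The delicate point is step 2 of ($\Rightarrow$): the ``$\chb{\tr}$-immediate predecessor'' requirement in (i) when the chain enters $t_2$ via a direct conflict edge that is not immediate. We resolve it by replacing $e_1$ with the $\chb{}$-maximal event strictly below $e_2$ in some other thread lying on a conflict path from $b$. This preserves $b\chb{\tr}e_1$ and yields immediacy, and it is exactly the event-level analogue of the argument used for \cref{lem:open_thb_condition}.
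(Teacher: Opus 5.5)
Your ($\Leftarrow$) direction and your checks of (ii) and (iii) match the paper's argument. The ($\Rightarrow$) direction, however, has a gap exactly where you flag the obstacle: you never show that some event of $T$ has a $\chb{\tr}$-immediate predecessor in another thread lying $\chb{\tr}$-after $b$. In step 2 you define $e_2$ through that property, but your existence argument only exhibits an event of $T$ with a \emph{direct} conflict from some event outside $t_2$.

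Your repair keeps $e_2$ and replaces $e_1$ by the $\chb{\tr}$-maximal other-thread event between $b$ and $e_2$. This does not yield immediacy, because the event separating that maximal event from $e_2$ can itself lie in $t_2$. For instance, take the trace $\langle t_2,\bg\rangle,\langle t_2,\wt(x)\rangle,\langle t_1,\bg\rangle,\langle t_1,\rd(x)\rangle,\langle t_1,\wt(y)\rangle,\langle t_2,\rd(y)\rangle,\langle t_2,\wt(y)\rangle,\langle t_2,\en\rangle,\langle t_1,\en\rangle$.
\begin{itemize}
\item A conflict chain through $\langle t_1,\rd(x)\rangle$ may re-enter $t_2$ at $\langle t_2,\wt(y)\rangle$ via $\langle t_1,\wt(y)\rangle$, which is also the maximal candidate.
\item Yet $\langle t_1,\wt(y)\rangle\chb{\tr}\langle t_2,\rd(y)\rangle\chb{\tr}\langle t_2,\wt(y)\rangle$, and the only immediate predecessor of $\langle t_2,\wt(y)\rangle$ is $\langle t_2,\rd(y)\rangle$.
\item The correct witness, $e_2=\langle t_2,\rd(y)\rangle$, is reached only by moving $e_2$ backwards.
\end{itemize}

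The missing idea is to impose minimality on the weaker reachability property rather than on immediacy. Let $e_2$ be the thread-order-first event of $T$ for which $b\chb{\tr}g\chb{\tr}e_2$ holds for some $g$ with $\threadof{g}\neq t_2$; the paper instead takes the trace-earliest such event overall. Such an $e_2$ exists since $b'$ qualifies via $e_3$. Let $e_1$ be a $\chb{\tr}$-maximal such $g$, and suppose $e_1\chb{\tr}h\chb{\tr}e_2$.
\begin{itemize}
\item If $\threadof{h}\neq t_2$, this contradicts the maximality of $e_1$.
\item If $\threadof{h}=t_2$, then $h$ lies in $T$ strictly before $e_2$ and satisfies $b\chb{\tr}e_1\chb{\tr}h$, contradicting the minimality of $e_2$.
\end{itemize}
Hence $e_1$ is immediate, (i) holds, and your steps 4 and 5 go through unchanged. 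This is precisely the step the paper compresses into ``let $e_1$ be any $\chb{\tr}$-immediate predecessor of $e_2$ such that $\bg\chb{\tr[:e]}e_1$''.

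A smaller point concerns step 5. The fact you need there is that $\chb{\tr}$-orderings between events of $\tr[{:}e]$ already hold in $\chb{\tr[{:}e]}$. This is true because conflict chains increase in trace order, but it is the converse of the prefix-monotonicity you cite.
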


\input{material/algorithms/increasing_cycle}

\Paragraph{AtomSanitizer for increasing-cycle violations.}
We modify $\ouralgo$ to detect increasing-cycle violations by maintaining in the data structure $\ds$ the latest remote begins $\LatestRemoteBegin{\tr[{:}e]}{t_1}(t_2)$ and earliest local events $\EarliestLocalEvent{\tr[{:}e]}{t_1}(t_2)$, when processing event $e$ of $\tr$, as opposed to  latest remote transactions $\LatestRemoteTransaction{\tr[{:}e]}{t_1}(t_2)$ and earliest local transactions $\EarliestReceiverTransaction{\tr[{:}e]}{t_1}(t_2)$, respectively.
We achieve this by replacing \cref{algo:cs} with \cref{algo:increasing-cycle}.
Moreover, now the vector clocks $\XClock$ and $\YClock$ store event ids (see \cref{line:algo_increasing_cycle_bg_id1} and \cref{line:ds-inc-interface-ins-3}) instead of the transaction ids.
The overall $\ouralgo$ framework remains the same, in that \cref{algo:increasing-cycle} also implements \FunctionCheckAndUpdate and \FunctionInsertEdge, only \FunctionInsertEdge is now somewhat simpler.

\SubParagraph{Function \FunctionCheckAndUpdate{$v$, $e$}.}
This function is similar to the corresponding one in \cref{algo:cs}, with the difference that $n_1$ in \cref{line:algo_increasing_cycle_n1}  represents the event $v$ conflicting with $e$, as opposed to the transaction of $v$.

\SubParagraph{Function \FunctionInsertEdge{$u,v$}.}
This function is similar to the corresponding one in \cref{algo:cs}, and is responsible for registering a new ordering $v\chb{\tr[{:}e]}e$ captured in \FunctionCheckAndUpdate (\cref{line:algo_increasing_cycle_insert_edge}).
The main difference is that the transitive closure only requires backward propagation (\cref{line:ds-inc-interface-ins-1}), as opposed to both forward and backward done in \cref{algo:cs}.
This is because $e$ cannot have any $\chb{\tr[{:}e]}$ successors.
In turn, this reduces the running time for inserting an edge to $O(\numThreads)$, as opposed to $O(\numThreads^2)$ in the case of \cref{algo:cs}.

\input{figures/inc_algo_example}

\Paragraph{Example.}
For brevity, we write $\XClock_{i}(j) = e$ and $\YClock_i(j) = e$ to mean that they store $\FunctionId(e)$ .
Consider the trace $\tr_9$ in Figure \ref{fig:inc_algo_example}.
First, executing $e_4$ updates $\ds$ to $\XClock_{t_2}(t_1) = e_1$ and $\YClock_{t_2}(t_1) = e_4$, capturing that $e_1 \chb{\tr_9[:e_4]} e_4$.
Second, executing $e_9$, updates $\ds$ to $\XClock_{t_3}(t_2) = e_8$ and $\YClock_{t_3}(t_2) = e_9$, capturing that $e_8 \chb{\tr_9[:e_9]} e_9$.
In addition, $\ds$ also updates $\XClock_{t_3}(t_1) = e_1$ and $\YClock_{t_3}(t_1) = e_9$, capturing $e_1 \chb{\tr_9[:e_9]} e_9$, by computing the transitive closure.
Finally, when executing $e_{12}$, we have $\XClock_{t_3}(t_1) = e_1$ and $\YClock_{t_3}(t_1) = e_9$.
Since $e_9 \threadOrdR{\tr_9} e_{10}$, $\ouralgo$ reports an increasing-cycle violation.

\Paragraph{Correctness and complexity.}
The correctness of $\ouralgo$ for increasing-cycle violations follows from \cref{lem:increasing_cycle_condition}, while its time complexity is similar to that for conflict-serializability (\cref{lem:atomsanitizer_complexity}), taking into consideration that each call to \FunctionInsertEdge now takes $O(\numThreads)$ amortized time, as opposed to $O(\numThreads^2)$.
Formally, we have the following theorem.

\begin{restatable}{theorem}{thmatomsanitizerincreasingpath}\label{thm:atomsanitizer_increasing_cycle}
Let $\tr$ be an input trace of $\numEvents$ events, $\numThreads$ threads, $\numLocations$ variables and $\numLocks$ locks.
$\ouralgo$ (increasing-cycle) reports a violation iff $\tr$ has an increasing-cycle violation , and uses $O(\numEvents \numThreads)$ time and $O(\numThreads(\numThreads+\numLocations) + \numLocks)$ space.
\end{restatable}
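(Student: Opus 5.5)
The proof splits into correctness and complexity, mirroring the structure of \cref{thm:atomsanitizer}. For correctness, we would first establish an invariant analogous to \cref{lem:atomsanitizer_invariant}: after processing an event $e$, for all threads $t_1,t_2$, if $\LatestRemoteBegin{\tr[{:}e]}{t_1}(t_2)$ is the begin event of a transaction that is open in $\tr[{:}e]$, then $\XClock_{t_1}(t_2)=\idof{\LatestRemoteBegin{\tr[{:}e]}{t_1}(t_2)}$ and $\YClock_{t_1}(t_2)=\idof{\EarliestLocalEvent{\tr[{:}e]}{t_1}(t_2)}$.

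The invariant is proved by induction on the prefix. Only \FunctionCheckAndUpdate{$v$, $e$} creates new $\chb{\tr}$ orderings across threads; same-thread orderings and releases are handled implicitly, since thread order is respected by the ids. So it suffices to show that a single call to \FunctionInsertEdge with the new ordering $v\chb{\tr[{:}e]}e$ correctly updates all affected pairs. Two facts make this work.
\begin{compactitem}
\item Because $e$ is the last event of the trace, it has no $\chb{}$-successors. The only new orderings are therefore $u\chb{} e$ for $u$ with $u\chbR v$ in some thread $t'$.
\item By the inductive hypothesis, the latest begin of $t'$ before $v$ is recoverable from $\XClock_{\tid(v)}(t')$ whenever that begin's transaction is open. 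This is the only case that matters: if the relevant transaction of $t'$ is closed, the value recorded for $e$'s thread is irrelevant to the invariant. Stale information is harmless exactly as in \cref{fig:lrt_lost_example}.
\end{compactitem}
The single backward loop over $t'$ then sets $\XClock_{\tid(e)}(t')$ to a later begin, or keeps the old one, in which case the earliest local event is unchanged. The delicate case, which we expect to be the main obstacle, is when $v$ itself lies in an open transaction of $\tid(v)$ whose begin must be recorded directly via $v$'s transaction begin id. Here we must argue that the predecessor returned through $\tid(v)$'s clocks is exactly the latest open begin, and in particular that a later closed begin never masks an open one. We would handle this by noting that a transaction of $t'$ ordered after an open transaction of $t'$ must start after it, so the latest begin known is the open one's begin or later. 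Anything later is either also open (impossible: there is at most one open transaction per thread) or was closed, which contradicts openness of the earlier one.

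With the invariant in place, correctness follows from \cref{lem:increasing_cycle_condition}. At event $e_2$ with $\chb{}$-immediate predecessor $e_1$, $\txof(e_2)$ is open, so the invariant applies to $\LatestRemoteBegin{}{t_1}(t_2)$. The check $\reachable$ then tests exactly conditions (ii) and (iii). Conversely, a report yields events satisfying these conditions, which by the same lemma gives an increasing cycle. Completeness also requires that the handlers enumerate a superset of the $\chb{}$-immediate cross-thread predecessors (last write, last reads since the last write, last release). This is argued as in \cref{lem:atomsanitizer_correctness}.

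For complexity, the key steps are:
\begin{compactitem}
\item Each \FunctionInsertEdge now performs one loop over $\numThreads$ threads with $O(1)$ data-structure calls, so it costs $O(\numThreads)$.
\item The number of \FunctionCheckAndUpdate calls is $O(\numEvents)$ by the same amortization over $\lrd{x}$ used for \cref{lem:atomsanitizer_complexity}: each read is charged at most once, by the next write.
\item Together these give $O(\numEvents\numThreads)$ total time.
\item Space is unchanged from \cref{lem:atomsanitizer_complexity}, namely $2\numThreads^2$ clock entries, $O(\numThreads)$ per variable for $\lastReads{x}$ and $\lrd{x}$, and $O(1)$ per lock.
\end{compactitem}
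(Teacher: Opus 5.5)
Your decomposition is the same as the paper's. Correctness comes from a clock invariant plus \cref{lem:increasing_cycle_condition}. Time comes from the $\lrd{x}$ amortization with $O(\numThreads)$ per edge insertion, and the space accounting is unchanged.

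\textbf{Where your route differs, in your favour.} The substantive difference is the invariant. The paper's proof asserts it unconditionally: $\XClock_{t_1}(t_2)=\idof{\LatestRemoteBegin{\tr[{:}e]}{t_1}(t_2)}$ and $\YClock_{t_1}(t_2)=\idof{\EarliestLocalEvent{\tr[{:}e]}{t_1}(t_2)}$ for all $t_1,t_2$. You instead condition it on openness, as in \cref{lem:atomsanitizer_invariant}, and your version is the one that actually holds. Consider the prefix in which
\begin{itemize}
\item $t$ runs $\txbegin\,(b),\wt(x),\txend$;
\item then $t_1$ runs $\txbegin,\rd(x),\wt(z)$;
\item then $t$ runs $\txbegin\,(b'),\wt(y),\txend$;
\item then $t_1$ runs $\rd(y)$;
\item then $t_2$ runs $\txbegin,\rd(z)$.
\end{itemize}
Via $\wt(x),\rd(x),\wt(z),\rd(z)$ we get $\LatestRemoteBegin{\tr}{t_2}(t)=b$. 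However, at the last event $\YClock_{t_1}(t)=\idof{\rd(y)}>\idof{\wt(z)}$, so $\pred(\pair{t_1}{\idof{\wt(z)}},t)$ returns $\bot$ and $\XClock_{t_2}(t)$ stays $\bot$. Since $b$'s transaction is closed, your invariant is unaffected. Your observation that an open begin is the last begin of its thread, so no later begin can supersede it, is exactly what makes the conditional version inductive.

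\textbf{The gap.} The price of the weakening is that your invariant is one-sided, and you use the missing side twice.
\begin{enumerate}
\item For ``a report yields events satisfying these conditions'': if $\LatestRemoteBegin{\tr[{:}e]}{t_1}(t_2)$ is $\bot$ or a closed begin, your invariant says nothing about $\XClock_{t_1}(t_2)$. So it does not exclude $\XClock_{t_1}(t_2)\geq\idof{\txof(e_2).\bg}$ together with $\YClock_{t_1}(t_2)\leq\idof{e_1}$, i.e., a false report.
\item In the inductive step, let $e$ be the first event of its thread to learn an open begin $b$ of $t'$. Concluding $\YClock_{\threadof{e}}(t')=\idof{e}$ requires $\XClock_{\threadof{e}}(t')<\idof{b}$ beforehand. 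This is again a case where your hypothesis is vacuous.
\end{enumerate}

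\textbf{The fix.} Prove a soundness invariant simultaneously: whenever $\XClock_{t_1}(t_2)\neq\bot$, the begin with that id $\chb{\tr}$-precedes the event of $t_1$ with id $\YClock_{t_1}(t_2)$. It is preserved for two reasons:
\begin{itemize}
\item $\pred(\pair{t_1}{\idof{u}},t)$ returns $\XClock_{t_1}(t)$ only if $\YClock_{t_1}(t)\leq\idof{u}$;
\item the branch $t=\threadof{u}$ records $\txof(u).\bg$, which thread-precedes $u$.
\end{itemize}
With it, a positive $\reachable$ check for $(e_1,e_2)$ gives $\txof(e_2).\bg\chb{\tr}e'\threadOrdR{\tr}e_1\chb{\tr}e_2$ directly, where $e'$ is the event with id $\YClock_{t_1}(t_2)$. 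The stored begin must be $\txof(e_2).\bg$, because that is the latest begin of $t_2$ so far. This settles (1); (2) follows because soundness makes $\XClock_{\threadof{e}}(t')=\idof{b}$ impossible before some event of $\threadof{e}$ knows $b$.

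\textbf{A minor restriction.} State both invariants only for $t_1\neq t_2$. The diagonal entries $\XClock_t(t)$ are never set to $t$'s own open begin, so your statement is literally false there, although those entries are never read.
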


\Paragraph{Comparison to existing work.}
Standard algorithms for the conflict-happens-before order, and thus increasing-cycle violations, operate in $O(\numEvents\numThreads)$ time and $O(\numThreads(\numThreads+\numLocations+\numLocks))$ space~\cite{Ma-2021,Mathur-2022}.
$\ouralgo$ has the same time complexity, but uses a factor $\numThreads$ less per lock in space.
Moreover, as already stated in the case of conflict-serializability violations in \cref{subsec:atomsanitizer},
$\ouralgo$ only uses $O(1)$ space for each pair $(t,x)$ such that thread $t$ accesses variable $x$, which in practice is less than storing a vector clock per variable, in contrast to~\cite{Ma-2021,Mathur-2022}.

%% file: figures/lrb_ele_example.tex
\begin{wrapfigure}[9]{r}{0.40\textwidth}
\centering
 \vspace{-0.6cm}
% \begin{subfigure}[b]{0.45\textwidth}
\begin{tikzpicture}[xscale=1.5, yscale=0.35, 
every node/.style={font=\small},
transaction/.style={draw, dashed, thick},
event/.style={anchor=center, inner sep=1pt},
-, >={Triangle[width=10pt,length=5pt]}]

% starting coordinates
\def\xstart{0} % starting x coordinate
\def\ystart{0} % starting y coordinate

% width and height of the rectangle
\def\width{3}  % width of the rectangle
\def\height{12} % height of the rectangle

% write labels for the three threads
\node[anchor=south] at (\xstart + \width/6, \ystart + \height) {$t_1$};
\node[anchor=south] at (\xstart + \width/2, \ystart + \height) {$t_2$};
\node[anchor=south] at (\xstart + 5*\width/6, \ystart + \height) {$t_3$};

% Draw the three transaction lines
\draw[dashed] (\xstart + \width/3,\ystart) -- (\xstart + \width/3,\ystart + \height);
\draw[dashed] (\xstart + 2*\width/3, \ystart) -- (\xstart + 2*\width/3, \ystart + \height);

% Draw the grid
\foreach \y in {1,...,11} {
\pgfmathtruncatemacro\yy{12-\y} % Compute reverse ID: 12 -> 1
\node[left] at (\xstart,\ystart+\y) {\yy};  % Keep the coordinates, but reverse the label
}

% events in t1
\node[event] (e1) at (\xstart + \width/6, \ystart + \height - 1) {\bbg{T_1}};   
\node[event] (e2) at (\xstart + \width/6, \ystart + \height - 2) {$\wt(v)$};
\node[event] (e11) at (\xstart + \width/6, \ystart + \height - 11) {$\rd(z)$};

% events in t2
\node[event] (e3) at (\xstart + \width/2, \ystart + \height - 3) {\bbg{T_2}};
\node[event] (e4) at (\xstart + \width/2, \ystart + \height - 4) {$\wt(x)$};
\node[event] (e7) at (\xstart + \width/2, \ystart + \height - 7) {$\rd(v)$};
\node[event] (e8) at (\xstart + \width/2, \ystart + \height - 8) {$\wt(y)$};

% events in t3
\node[event] (e5) at (\xstart + 5*\width/6, \ystart + \height - 5) {\bbg{T_3}};
\node[event] (e6) at (\xstart + 5*\width/6, \ystart + \height - 6) {$\rd(x)$};
\node[event] (e9) at (\xstart + 5*\width/6, \ystart + \height - 9) {$\rd(y)$};
\node[event] (e10) at (\xstart + 5*\width/6, \ystart + \height - 10) {$\wt(z)$};

% e2 -> e7
\draw[->] (e2.east) -- (e7.west);
% e4 -> e6
\draw[->] (e4.east) -- (e6.west);
% e8 -> e9
\draw[->] (e8.east) -- (e9.west);
% e10 -> e11
\draw[->] (e10.west) to (e11.east);

% Draw border
\draw[thick] (\xstart, \ystart) rectangle ++(\width, \height);

% \draw[thick] (0.5,0.5) rectangle (3.5,12.5);
\end{tikzpicture}
\caption{
A trace $\tr_8$.
% $\LatestRemoteTransaction{\tr_6}{t_1}(t_2) = T_1$ via the transaction $T_3$ of $t_3$, but this fact is not recoverable from $\LatestRemoteTransaction{\tr_6}{t_3}(t_2)$, which has progressed to $T_2$.
}
\label{fig:lrb_ele_example}
\Description{}
% \end{subfigure}
\end{wrapfigure}

%% file: material/algorithms/increasing_cycle.tex
%!TEX root = ../../main.tex
%\setlength{\textfloatsep}{12pt}

\begin{algorithm*}[ht!]
	\small
	\DontPrintSemicolon
	\SetInd{0.28em}{0.35em}
	\BlankLine
	\vspace*{-\multicolsep}
	\begin{multicols}{2}
		\MyFunction{\FunctionCheckAndUpdate{$v$, $e$}} {
				\uIf{$\tid(v)\neq \tid(e)$}{
				$n_1 \gets \pair{\tid(v)}{\FunctionId(v)}$\\ \label{line:algo_increasing_cycle_n1}
				$n_2 \gets \pair{\tid(e)}{\FunctionId(\txof(e).\bg)}$\\ \label{line:algo_increasing_cycle_bg_id1}
				\If{$\ds.\reachable(n_2, n_1)$} {
					\label{line:inc-check-reachability}
					$\reportError{Increasing-cycle violation}$
				}
				$\insedge(v, e)$\label{line:algo_increasing_cycle_insert_edge}
			}
			}    
			\BlankLine

			\MyFunction{\FunctionInsertEdge{$u, v$}}{
				$t_1 \gets \tid(u)$\\
				$n_v \gets \pair{\tid(v)}{\FunctionId(v)}$\\
				\ForEach{$t \in \threads{\sigma}$}{\label{line:ds-inc-interface-ins-1}
					\lIf{$t = t_1$}{
						$j \xleftarrow[]{}  \FunctionId(\txof(u).\bg)$\label{line:ds-inc-interface-ins-3}
					}
					\lElse{
						$j \xleftarrow[]{} \ds.\pred(\pair{t_1}{\FunctionId(u)}, t)$\label{line:ds-inc-interface-ins-4}
					} 
				
					$\ds.\addsuc(\pair{t}{j}, n_v)$\label{line:ds-inc-interface-ins-8}
				}
			}
			\BlankLine
						
		\end{multicols}
		\BlankLine
		
		\caption{
			Checking increasing-cycle violations with $\ouralgo$.
		}
		\label{algo:increasing-cycle}
		\vspace*{-\multicolsep}
	\end{algorithm*}
	\normalsize

%% file: figures/inc_algo_example.tex
\begin{wrapfigure}[14]{r}{0.50\textwidth}
    \vspace{-11mm}
%\begin{subfigure}[htbp]{0.50\textwidth}
    \centering
    \begin{tikzpicture}[xscale=1.6, yscale=0.35, 
        every node/.style={font=\small},
        transaction/.style={draw, dashed, thick},
        event/.style={anchor=center, inner sep=1pt},
        -, >={Triangle[width=10pt,length=5pt]}]
    
        % starting coordinates
        \def\xstart{0} % starting x coordinate
        \def\ystart{0} % starting y coordinate
        
        % width and height of the rectangle
        \def\width{3}  % width of the rectangle
        \def\height{15} % height of the rectangle
    
        % write labels for the three threads
        \node[anchor=south] at (\xstart + \width/6, \ystart + \height) {$t_1$};
        \node[anchor=south] at (\xstart + \width/2, \ystart + \height) {$t_2$};
        \node[anchor=south] at (\xstart + 5*\width/6, \ystart + \height) {$t_3$};
        
        % Draw the three transaction lines
        \draw[dashed] (\xstart + \width/3,\ystart) -- (\xstart + \width/3,\ystart + \height);
        \draw[dashed] (\xstart + 2*\width/3, \ystart) -- (\xstart + 2*\width/3, \ystart + \height);
        
        % Draw the grid
        \foreach \y in {1,...,14} {
            \pgfmathtruncatemacro\yy{15-\y} % Compute reverse ID: 14 -> 1
            \node[left] at (\xstart,\ystart+\y) {\yy};  % Keep the coordinates, but reverse the label
        }
        
        % events in t1
        \node[event] (e1) at (\xstart + \width/6, \ystart + \height - 1) {\bbg{T_1}};   
        \node[event] (e2) at (\xstart + \width/6, \ystart + \height - 2) {$\wt(x)$};
        \node[event] (e12) at (\xstart + \width/6, \ystart + \height - 12) {$\rd(z)$};
        \node[event] (e13) at (\xstart + \width/6, \ystart + \height - 13) {\ben{T_1}};
        
        % events in t2
        \node[event] (e3) at (\xstart + \width/2, \ystart + \height - 3) {\bbg{T_2}};
        \node[event] (e4) at (\xstart + \width/2, \ystart + \height - 4) {$\rd(x)$};
        \node[event] (e5) at (\xstart + \width/2, \ystart + \height - 5) {\ben{T_2}};

        \node[event] (e7) at (\xstart + \width/2, \ystart + \height - 7) {\bbg{T_3}};
        \node[event] (e8) at (\xstart + \width/2, \ystart + \height - 8) {$\wt(y)$};
        \node[event] (e14) at (\xstart + \width/2, \ystart + \height - 14) {\ben{T_3}};
        
        % events in t3
        \node[event] (e6) at (\xstart + 5*\width/6, \ystart + \height - 6) {\bbg{T_4}};
        \node[event] (e9) at (\xstart + 5*\width/6, \ystart + \height - 9) {$\rd(y)$};
        \node[event] (e10) at (\xstart + 5*\width/6, \ystart + \height - 10) {$\wt(z)$};
        \node[event] (e11) at (\xstart + 5*\width/6, \ystart + \height - 11) {\ben{T_4}};
        
        % e2 -> e4
        \draw[po] (e2.east) to node[above, circled]{1} (e4.west);
        % e8 -> e9
        \draw[po] (e8.east) to node[above, circled]{2} (e9.west);
        % e10 -> e12
        \draw[vio, bend left=0] (e10.west) to node[above, circled]{3} (e12.east);

        % e1 -> e9
        %\draw[thb,dash pattern=on 5pt off 5pt, bend left=6] (e1.east) to node[right, circled]{3} (e9.west);
        
        % Draw border
        \draw[thick] (\xstart, \ystart) rectangle ++(\width, \height);
    
        % \draw[thick] (0.5,0.5) rectangle (3.5,12.5);
    \end{tikzpicture}
    \caption{
    A trace $\tr_9$ with an increasing-cycle violation.
    Numbers indicate the order of $\chb{\tr_9}$ edges inferred by $\ouralgo$.
    Edge 3 leads to a violation report.
    }
    \label{fig:inc_algo_example}
    \Description{This figure is an example case of $\ouralgo$ finding increasing cycle.}    
    %\end{subfigure}
\end{wrapfigure}
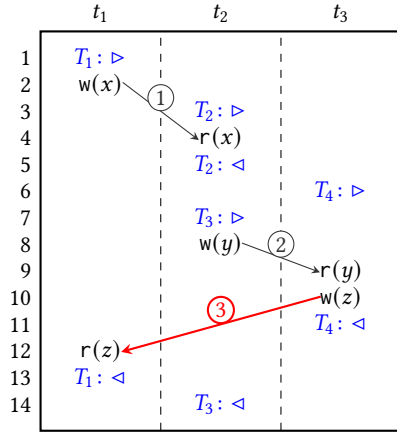

%% file: sections/online.tex
\input{figures/online_locking}

\section{Concurrent Runtime Setting}\label{sec:online}

We now turn our attention to the online setting, where atomicity must be monitored during program execution.
This poses the challenge of sharing the data structure $\ds$  among threads, and accessing/updating it concurrently.
To avoid data races (which can lead to wrong results), such accesses must be appropriately locked.
Existing atomicity checkers address this challenge by forcing each thread to acquire a global lock at each step, which can cause considerable congestion.
Instead, the data structure $\ds$ of $\ouralgo$ optimizes for the case when the transaction of the thread executing the current step is not known to any other thread, which is by far the most common case.
In such cases, updates on $\ds$ can be performed concurrently via a combination of read and write locks, while completely avoiding the use of a global lock, and therefore reducing congestion.
\cref{fig:online_locking} provides an illustration.

\Paragraph{Online $\ouralgo$.}
\cref{algo:ds-online} shows the pseudocode of a wrapper that allows concurrent accesses to $\ds$.
For the online setting, $\ouralgo$ consists of the same components (\cref{algo:bookkeping}, \cref{algo:cs}, \cref{algo:ds}), but
calls to $\ds.\FunctionQuery$ and $\ds.\FunctionInsertEdge$ from \cref{algo:cs} are replaced by calls to the same function in \cref{algo:ds-online}.
Conceptually, $\ds$ is shared between threads, so that $\XClock_{t}$ and $\YClock_{t}$ belong to thread $t$.
Concurrent accesses to $\ds$ are protected by
\begin{enumerate*}[label=(\roman*)]
\item a pair of read/write locks $\elock_t$ for each thread $t$, and
\item a global read/write lock $\elock$.
\end{enumerate*}
We now describe the functions of \cref{algo:ds-online} in more detail.

\SubParagraph{Function $\ds.\FunctionQuery$.}
This function simply wraps the reachability call to $\ds$ around read-acquiring the lock $\elock_{t_2}$ of the target thread $t_2$.
The function is called from thread $t_1$, and no other thread ever write-acquires $\elock_{t_1}$, thus $t_1$ does not need to read-acquire its own lock.

\SubParagraph{Function $\FunctionInsertEdge$.}
This function operates as follows.
First, it calls $\FunctionInsertEdgeBackwards$, which attempts to insert the edge while making only backwards propagation (\cref{line:ds-online-insedgeback}). 
If it fails, $\FunctionInsertEdge$ performs the full transitive closure, by acquiring the global lock $\elock$, as well as all locks $\{ \elock_{t_i} \}_{i}$ (\cref{line:ds-online-elock-1,line:ds-online-local-lock}), to avoid racing with any other thread updating $\ds$.
Finally, the variable $\ltxn_t$ stores the id of the latest transaction $T$ that thread $t$ has executed, and it is appropriately initialized every time $t$ starts a new transaction (not explicitly shown in the pseudocode).
The flag $\oeflag_t$ marks whether $\ltxn_t$ has outgoing edges, helping to avoid locking the whole $\ds$ in future updates.
It is set to True when inserting an edge $(v,e)$ where $v$ belongs to the latest transaction of its thread, and $T$ is not unary (\cref{line:ds-online-oeflag-1}) and is set to False when $T$ is closed (not explicitly shown in the pseudocode).

\SubParagraph{Function $\FunctionInsertEdgeBackwards$.}
Finally, this function attempts to perform a backwards edge propagation from the thread $t_2=\threadof{e}$.
If successful, this only modifies the clocks $\XClock_{t_2}$ and $\YClock_{t_2}$ owned by $t_2$.
In high level, it first read-acquires the global lock $\elock$ (\cref{line:ds-online-elock-2}) to avoid racing with a thread that might be performing a full transitive closure computation, thereby also modifying the clocks  $\XClock_{t_2}$ and $\YClock_{t_2}$.
To perform the backwards propagation, $t_2$ also needs to read the clocks $\XClock_{t_1}$ and $\YClock_{t_1}$.
For this, it read-acquires the lock $\elock_{t_1}$.
This acquisition is done in a loop ( \cref{line:ds-online-while-lock}) to avoid deadlocks occurring when $t_1$ simultaneously read-acquires $\elock_{t_2}$.
After $t_2$ has read-acquired $\elock_{t_1}$, it again checks whether $\txof(e)$ has an outgoing edge.
Note that such an edge might have been added between the call to $\FunctionInsertEdgeBackwards$ made by $\FunctionInsertEdge$ (\cref{line:ds-online-insedgeback}) and when $\elock$ was read-acquired in $\FunctionInsertEdgeBackwards$ (\cref{line:ds-online-elock-2}).
If $\txof(e)$ indeed has an outgoing edge at this point, $\FunctionInsertEdgeBackwards$ fails and returns False to $\FunctionInsertEdge$, forcing the latter to perform a full transitive closure (\cref{line:ds-online-transitive}).
Otherwise, $\FunctionInsertEdgeBackwards$ proceeds to perform the backward propagation (\cref{line:ds-online-ins-1}).
Finally, lines \ref{line:ds-online-ltxn-2}–\ref{line:ds-online-oeflag-2} update the flag $\oeflag_{t_1}$ to indicate whether $\ltxn_{t_1}$ has outgoing edges, similarly to lines \ref{line:ds-online-ltxn-1}–\ref{line:ds-online-oeflag-1} in $\FunctionInsertEdge$.

\input{material/algorithms/online}

\Paragraph{Correctness.}
We now outline the correctness for \cref{algo:ds-online}, by focusing on three key aspects.

\SubParagraph{Deadlock-freeness.}
First, note that the algorithm cannot deadlock:~$\FunctionReachable$ does not have nested locking, whereas in all other cases, acquiring a thread lock of the form $\elock_{t_i}$ (in lines \ref{line:ds-online-local-lock1}, \ref{line:ds-online-llock-2}, \ref{line:ds-online-lwlock}) is preceded by accessing the global lock $\elock$ (in lines \ref{line:ds-online-elock-1}, \ref{line:ds-online-elock-2}).
Note, however, that the hot path is in lines \ref{line:ds-online-while-lock} to \ref{line:ds-online-lunlock} in $\FunctionInsertEdgeBackwards$, where $\elock$ is read-acquired, thus allows the concurrent execution of multiple threads, avoiding congestion.
If multiple readers hold $\elock$, deadlocks are avoided due to the use of $\trywritelock(\elock_{t_2})$ at line \ref{line:ds-online-lwlock}.
This can cause a livelock, which in practice resolves quickly.

\SubParagraph{Mutual exclusion.}
Second, the algorithm guarantees mutual exclusion:~every time a full transitive closure is started by a thread, the data structure is protected by $\elock$ (line \ref{line:ds-online-elock-1}) so that no other thread can modify it.
Moreover, the locks $\elock_t$ of all threads $t$ have been read-acquired (line \ref{line:ds-online-local-lock1}), which also avoids write-read races.
All other data structure updates are such that thread $t_2$ only writes on the $\XClock_{t_2}$ and $\YClock_{t_2}$ clocks at line \ref{line:ds-online-ins-5}, which avoids write-write races.
Moreover, write-read races are avoided because whenever thread $t_2$ accesses the vector clocks $\XClock_{t_1}$ and $\YClock_{t_1}$ of some other thread $t_1$ (line \ref{line:ds-online-ins-4}), it has first read-locked $\elock_{t_1}$ (line \ref{line:ds-online-llock-2}) and write-locked $\elock_{t_2}$ (line \ref{line:ds-online-lwlock}).

\SubParagraph{A benign race condition.}
Finally, notice a subtlety with regards to the accesses of the flag $\oeflag_{t_1}$.
Here, a race condition may cause some thread $t$ executing lines \ref{line:ds-online-oeflag-1} and \ref{line:ds-online-oeflag-2} to set $\oeflag_{t_1}$, even if thread $t_1$ has progressed with a new transaction in the meantime. 
This does not threaten correctness, but may only cause $t_1$ to execute line \ref{line:ds-online-checkoe} later and thus may cause congestion. 
In practice, however, this occurs rarely to warrant the overhead of avoiding this race condition via locking.

%% file: figures/online_locking.tex
\begin{figure}[!tbp]
\centering
\def\scaleboxvalue{0.95}
\begin{subfigure}[b]{0.45\textwidth}
\centering
\scalebox{\scaleboxvalue} {
\begin{tikzpicture}[xscale=1.15, yscale=0.35, 
every node/.style={font=\small},
transaction/.style={draw, dashed, thick},
event/.style={anchor=center, inner sep=1pt},
-, >={Triangle[width=10pt,length=5pt]}]

% starting coordinates
\def\xstart{0} % starting x coordinate
\def\ystart{0} % starting y coordinate

% width and height of the rectangle
\def\width{4}  % width of the rectangle
\def\height{12} % height of the rectangle

% write labels for the three threads
\node[anchor=south] at (\xstart + \width/8, \ystart + \height) {$t_1$};
\node[anchor=south] at (\xstart + 3*\width/8, \ystart + \height) {$t_2$};
\node[anchor=south] at (\xstart + 5*\width/8, \ystart + \height) {$t_3$};
\node[anchor=south] at (\xstart + 7*\width/8, \ystart + \height) {$t_4$};

% Draw the three transaction lines
\draw[dashed] (\xstart + \width/4,\ystart) -- (\xstart + \width/4,\ystart + \height);
\draw[dashed] (\xstart + \width/2, \ystart) -- (\xstart + \width/2, \ystart + \height);
\draw[dashed] (\xstart + 3*\width/4,\ystart) -- (\xstart + 3*\width/4,\ystart + \height);

% Draw the grid
\foreach \y in {1,...,11} {
  \pgfmathtruncatemacro\yy{12-\y} % Compute reverse ID: 11 -> 1
  \node[left] at (\xstart,\ystart+\y) {\yy};  % Keep the coordinates, but reverse the label
}

% events in t1
\node[event] (e1) at (\xstart + \width/8, \ystart + \height - 1) {\bbg{T_1}};   
\node[event] (e2) at (\xstart + \width/8, \ystart + \height - 2) {$\wt(x)$};

% events in t2
\node[event] (e3) at (\xstart + 3*\width/8, \ystart + \height - 3) {\bbg{T_2}};
\node[event] (e4) at (\xstart + 3*\width/8, \ystart + \height - 4) {$\wt(y)$};
\node[event] (e5) at (\xstart + 3*\width/8, \ystart + \height - 5) {$\rd(x)$};
\node[event] (e7) at (\xstart + 3*\width/8, \ystart + \height - 7) {$\wt(z)$};

% events in t3
\node[event] (e6) at (\xstart + 5*\width/8, \ystart + \height - 6) {\bbg{T_3}};
\node[event] (e7f) at (\xstart + 5*\width/8, \ystart + \height - 7) {$\wt(a)$};
\node[event] (e11) at (\xstart + 5*\width/8, \ystart + \height - 11) {$\rd(y)$};

% events in t4
\node[event] (e8) at (\xstart + 7*\width/8, \ystart + \height - 8) {\bbg{T_4}};
\node[event] (e9) at (\xstart + 7*\width/8, \ystart + \height - 9) {$\wt(b)$};
\node[event] (e10) at (\xstart + 7*\width/8, \ystart + \height - 10) {$\rd(z)$};

% e2 -> e5
\draw[->] (e2.east) -- (e5.west);
% e4 -> e11
\draw[->] (e4.east) -- (e11.west);
% e7 -> e10
\draw[->] (e7.east) -- (e10.west);

% Draw border
\draw[thick] (\xstart, \ystart) rectangle ++(\width, \height);

% \draw[thick] (0.5,0.5) rectangle (3.5,12.5);
\end{tikzpicture}
}
\caption{A trace $\tr_{10}$ where $T_3\not \thb{\tr_{10}}T_4$.}
\label{subfig:online_lock_example1}
\end{subfigure}
\hfill
\begin{subfigure}[b]{0.45\textwidth}
\centering
\scalebox{\scaleboxvalue} {
\begin{tikzpicture}[xscale=1.15, yscale=0.35, 
every node/.style={font=\small},
transaction/.style={draw, dashed, thick},
event/.style={anchor=center, inner sep=1pt},
-, >={Triangle[width=10pt,length=5pt]}]

% starting coordinates
\def\xstart{0} % starting x coordinate
\def\ystart{0} % starting y coordinate

% width and height of the rectangle
\def\width{4}  % width of the rectangle
\def\height{12} % height of the rectangle

% write labels for the three threads
\node[anchor=south] at (\xstart + \width/8, \ystart + \height) {$t_1$};
\node[anchor=south] at (\xstart + 3*\width/8, \ystart + \height) {$t_2$};
\node[anchor=south] at (\xstart + 5*\width/8, \ystart + \height) {$t_3$};
\node[anchor=south] at (\xstart + 7*\width/8, \ystart + \height) {$t_4$};

% Draw the three transaction lines
\draw[dashed] (\xstart + \width/4,\ystart) -- (\xstart + \width/4,\ystart + \height);
\draw[dashed] (\xstart + \width/2, \ystart) -- (\xstart + \width/2, \ystart + \height);
\draw[dashed] (\xstart + 3*\width/4,\ystart) -- (\xstart + 3*\width/4,\ystart + \height);

% Draw the grid
\foreach \y in {1,...,11} {
\pgfmathtruncatemacro\yy{12-\y} % Compute reverse ID: 11 -> 1
\node[left] at (\xstart,\ystart+\y) {\yy};  % Keep the coordinates, but reverse the label
}

% events in t1
\node[event] (e1) at (\xstart + \width/8, \ystart + \height - 1) {\bbg{T_1}};   
\node[event] (e2) at (\xstart + \width/8, \ystart + \height - 2) {$\wt(x)$};

% events in t2
\node[event] (e3) at (\xstart + 3*\width/8, \ystart + \height - 3) {\bbg{T_2}};
\node[event] (e4) at (\xstart + 3*\width/8, \ystart + \height - 4) {$\wt(y)$};
\node[event] (e5) at (\xstart + 3*\width/8, \ystart + \height - 5) {$\rd(x)$};
\node[event] (e7) at (\xstart + 3*\width/8, \ystart + \height - 7) {$\wt(z)$};

% events in t3
\node[event] (e6) at (\xstart + 5*\width/8, \ystart + \height - 6) {\bbg{T_3}};
\node[event] (e7f) at (\xstart + 5*\width/8, \ystart + \height - 7) {$\wt(a)$};
\node[event] (e11) at (\xstart + 5*\width/8, \ystart + \height - 11) {$\rd(y)$};

% events in t4
\node[event] (e8) at (\xstart + 7*\width/8, \ystart + \height - 8) {\bbg{T_4}};
\node[event] (e9) at (\xstart + 7*\width/8, \ystart + \height - 9) {$\wt(a)$};
\node[event] (e10) at (\xstart + 7*\width/8, \ystart + \height - 10) {$\rd(z)$};

% e2 -> e5
\draw[->] (e2.east) -- (e5.west);
% e4 -> e11
\draw[->] (e4.east) -- (e11.west);
% e7 -> e10
\draw[->] (e7.east) -- (e10.west);
% e6 -> e8
\draw[red, ->] (e7f.east) -- (e9.west);

% Draw border
\draw[thick] (\xstart, \ystart) rectangle ++(\width, \height);

% \draw[thick] (0.5,0.5) rectangle (3.5,12.5);
\end{tikzpicture}
}
\caption{A trace $\tr_{11}$ where $T_3 \thb{\tr_{11}}T_4$.}
\label{subfig:online_lock_example2}
\end{subfigure}
\caption{\label{fig:online_locking}
In $\tr_{10}$ (\subref{subfig:online_lock_example1}), the transactions $T_3$ and $T_4$ do not have outgoing $\thb{\tr_{10}}$-orderings, and thus \cref{algo:ds-online} can process the events $e_{10}$ and $e_{11}$ concurrently, without acquiring a global lock, even though it has to perform a (backwards) transitive-closure to make $t_3$ and $t_4$ aware of $T_1$.
In contrast, in $\tr_{11}$ (\subref{subfig:online_lock_example2}), $T_3$ has an outgoing ordering (marked in red), making \cref{algo:ds-online} acquire a global lock when processing $e_{11}$.
}
\Description{This figure is an example case of increasing cycle.}
\end{figure}

%% file: material/algorithms/online.tex
\begin{algorithm}[htbp]
\small
\DontPrintSemicolon
\SetInd{0.3em}{0.3em}
\begin{multicols}{2}
\BlankLine
\BlankLine
\MyFunction{\FunctionQuery{$\graphnode{t_1}{ j_1}, \graphnode{t_2}{j_2}$}}{\label{line:ds-online-query-1}
% \greytcp{\texttt{If we never insert any non-increasing paths, then this is guaranteed to be non-blocking. Because no thread can be holding a write lock on $l_{t_2}$.}}
% $\readlock(\elock)$\\
$\readlock(\elock_{t_2})$ \label{line:ds-online-llock-1} \\
$r\gets \ds.\FunctionQuery{ $\graphnode{t_1}{ j_1}$, $\graphnode{t_2}{j_2}$ }$\\
%\lIf{$\pred(\graphnode{ t_2}{ j_2}, t_1) \geq j_1$} { \label{line:ds-online-query-2}
%   $r \xleftarrow[]{}$ True
%}
%\lElse{
%   $r \xleftarrow[]{}$ False
%}
$\readunlock(\elock_{t_2})$\\
% $\readunlock(\elock)$ \\
\Return r
}

\BlankLine
\BlankLine
\MyFunction{\FunctionInsertEdge{$v, e$}}{
$t_1 \gets \tid(v)$\\
\uIf{$\insedgeback(v, e)$} { \label{line:ds-online-insedgeback}
\Return
}

\greytcp{\texttt{executes rarely in practice}}
$T \gets \aload(\ltxn_{t_1})$\\ \label{line:ds-online-ltxn-1}
\uIf{$\unary{v} = \text{False} \land T = \txof(v)$} {
$\astore(\oeflag_{t_1}, \text{True})$ \label{line:ds-online-oeflag-1}
}
$\writelock(\elock)$\\ \label{line:ds-online-elock-1}
\ForEach{$t \in \threads{\tr}$} { \label{line:ds-online-local-lock}
$\readlock(\elock_t)$ \label{line:ds-online-local-lock1}
}
full transitive closure as in \cref{algo:cs}\\ \label{line:ds-online-transitive}
\ForEach{$t \in \threads{\tr}$} {
$\readunlock(\elock_t)$
}
$\writeunlock(\elock)$\\
}
\BlankLine
\BlankLine

% \greytcp{\texttt{We are holding the write lock on $l_{t_2}$, so no other thread can observe a stale value. We are holding the read lock on $l_{t_1}$, so the reachability information that involve $t_1$ cannot change.}}
\MyFunction{\FunctionInsertEdgeBackwards{$v, e$}} {
$t_1 \gets \tid(v)$ ; $t_2 \gets \tid(e)$\\
$n_e \gets \graphnode{t_2}{\txid{\txof(e)}}$\\
$\readlock(\elock)$\\ \label{line:ds-online-elock-2}
\While{True} {\label{line:ds-online-while-lock}
$\readlock(\elock_{t_1})$ \label{line:ds-online-llock-2}\\
\lIf{$\trywritelock(\elock_{t_2})$} {\Break} \label{line:ds-online-lwlock}
\lElse{$\readunlock(\elock_{t_1})$} \label{line:ds-online-lunlock}
}
\uIf{$\aload(\oeflag_{t_2}) = \text{True}$} { \label{line:ds-online-checkoe}
\greytcp{\texttt{executes rarely in practice}}
$\writeunlock(\elock_{t_2})$;
$\readunlock(\elock_{t_1})$;
$\readunlock(\elock)$;\\
\Return False
}
\ForEach{$t \in \threads{\tr}$} { \label{line:ds-online-ins-for}
\label{line:ds-online-ins-1}
\lIf{$t = t_1$}{
\label{line:ds-online-ins-3}
$j \gets \txid{\txof(v)}$
}
\lElse{
$j \gets \ds.\pred(\pair{t_1}{\txid{\txof(v)}}, t)$ \label{line:ds-online-ins-4}
} 
\label{line:ds-online-ins-8}
$\ds$.$\addsuc(\pair{t}{j}, n_e)$\\ \label{line:ds-online-ins-5}
}
$\writeunlock(\elock_{t_2})$;
$\readunlock(\elock_{t_1})$;
$\readunlock(\elock)$\\
$T \gets \aload(\ltxn_{t_1})$\\ \label{line:ds-online-ltxn-2}
\uIf{$\unary{v} = \text{False} \land T = \txof(v)$} {
$\astore(\oeflag_{t_1}, \text{True})$ \label{line:ds-online-oeflag-2}
}
\Return True
}
\end{multicols}
\caption{
Concurrent wrapper around $\ds$. 
}
\label{algo:ds-online}
\end{algorithm}

%% file: sections/experiments-main.tex
%!TEX root = ../main.tex

\section{Experimental Evaluation}\label{sec:experiments}

In this section, we report on an implementation of $\ouralgo$, both as a standalone atomicity checker, and as a runtime monitor inside $\TSAN$.
We also report  on an experimental evaluation of $\ouralgo$ in each setting, and compare it to other atomicity checkers.
All experiments are run on an Ubuntu 22.04 machine with 2.4GHz CPU and 64GB of memory.

\input{sections/experiments-offline}
\input{sections/experiments-online}

%% file: sections/experiments-offline.tex
\subsection{Standalone Experiments}

\Paragraph{Implementation.}
We implement $\ouralgo$ inside the $\RAPID$ framework for dynamic analyses~\cite{Umang-rapid} in Java, closely following the pseudocode in \cref{sec:atomicity_testing} (for conflict-serializability violations) and \cref{sec:increasing_cycle} (for increasing-cycle violations).

\Paragraph{Baselines.}
We compare the performance of $\ouralgo$ to the standard and SOTA atomicity checkers 
(i)~$\velodrome$~\cite{Flanagan-2008},
(ii)~$\aerodrome$~\cite{Mathur-2020}, and
(iii)~$\regiontrack$~\cite{Ma-2021}.
For this, we port $\regiontrack$ into $\RAPID$, which already implements $\velodrome$ and $\aerodrome$.
Each tool processes the same input trace and reports whether it is conflict-serializable.
We also compare $\ouralgo$ and $\regiontrack$ for increasing cycles, since they are the only tools that report increasing cycles in a sound and complete manner.
By default, $\regiontrack$ performs conflict-serializability and increasing-cycle checking simultaneously.
For a fair comparison, we have separated the two components, which reduces its running time for each individual task.

\Paragraph{Benchmarks.}
Our benchmark traces are derived from the DeCaPo benchmark suite~\cite{Blackburn-2006}, Java Grande suite~\cite{Smith-2001}, and microbenchmarks~\cite{vonPraun-2003}, taken from the experimental setup of $\aerodrome$~\cite{Mathur-2020} and also used in other works~\cite{Biswas-2014,Ma-2021}.
Benchmarks ending in \texttt{``\_ext''} are our own variants of existing benchmarks that scale up the number of threads $\numThreads$.
We report the average running time of each tool over 10 runs, imposing a timeout of 2 hours in each run.

\input{tables/tab_offline}
\Paragraph{Results.}
The results are shown in \cref{tab:offline}, for both conflict-serializability and increasing-cycle violations.
For conflict-serializability, we observe that $\ouralgo$ is faster than each of the baselines, on each benchmark.
The geometric mean of the speedup across all benchmarks is $2.0\times$, $4.5\times$, and $12\times$ over $\regiontrack$, $\aerodrome$, and $\velodrome$, respectively.
Overall, $\ouralgo$ and $\regiontrack$ stand out as the faster algorithms compared to $\aerodrome$ and $\velodrome$, with $\ouralgo$ being decisively the fastest.
The comparison on increasing-cycle violations is similar, with $\ouralgo$ being faster than $\regiontrack$ on each benchmark.
The geometric mean of the speedup over all benchmarks is $1.7\times$.

%% file: tables/tab_offline.tex
\begin{table}[tbp]
\centering
\setlength\tabcolsep{1.3pt}
\def\angle{45}
\caption{
Offline experiments on conflict-serializability and increasing-cycle violations.
Entries marked with $^*$ witness a violation (both conflict-serializability and increasing-cycle).
%TO denotes a timeout after 2 hours.
}
\pgfplotstableread[col sep=comma]{data/CS_Offline.csv}\datatable
\small
\setlength\tabcolsep{1.4pt}
\pgfplotstabletypeset[%
columns={Benchmarks,numEvents,numThreads,numLocations,atomsanitizer(ms),regiontrackconflict(ms),aerodrome(ms),velodrome(ms),atomsanitizer-increasing(ms),regiontrack-increasing(ms)
},
assign column name/.style={/pgfplots/table/column name={\textbf{#1}}},
every head row/.style={before row=\toprule, after row=\midrule, font=bold},
every last row/.style={after row=\bottomrule},
columns/{Benchmarks}/.style={column type={l|},string type, column name=,
postproc cell content/.prefix code={%
            \pgfkeyssetvalue{/pgfplots/table/@cell content}%
            {\ttfamily ##1}%
        },
},
columns/{numEvents}/.style={column type={c|},string type, column name=,
postproc cell content/.prefix code={%
        \count0=\pgfplotstablerow
        \advance\count0 by1
        \ifnum\count0=\pgfplotstablerows
            \pgfkeysalso{@cell content=\textbf{##1}}%
        \else
            \ifnum\count0=\numexpr\pgfplotstablerows-1
                % \pgfkeysalso{@cell content={\textbf{##1}}} %
                \pgfkeysalso{@cell content={##1}} %
            \else
                \pgfkeysalso{@cell content={##1}} %
            \fi
        \fi
    },
},
columns/{numThreads}/.style={column type={c|},string type, column name=,
postproc cell content/.prefix code={%
        \count0=\pgfplotstablerow
        \advance\count0 by1
        \ifnum\count0=\pgfplotstablerows
            \pgfkeysalso{@cell content=$\textbf{##1}$}%
        \else
            \ifnum\count0=\numexpr\pgfplotstablerows-1
                % \pgfkeysalso{@cell content={\textbf{##1}}} %
                \pgfkeysalso{@cell content={##1}} %
            \else
                \pgfkeysalso{@cell content={##1}} %
            \fi
        \fi
    },
},
columns/{numLocations}/.style={column type={c|},string type, column name=,
postproc cell content/.prefix code={%
        \count0=\pgfplotstablerow
        \advance\count0 by1
        \ifnum\count0=\pgfplotstablerows
            \pgfkeysalso{@cell content=$\textbf{##1}$}%
        \else
            \ifnum\count0=\numexpr\pgfplotstablerows-1
                % \pgfkeysalso{@cell content={\textbf{##1}}} %
                \pgfkeysalso{@cell content={##1}} %
            \else
                \pgfkeysalso{@cell content={##1}} %
            \fi
        \fi
    },
},
every head row/.append style={before row={
\midrule
\multirow{2}{*}{\textbf{Benchmark}} & \multirow{2}{*}{$\numEvents$} & \multirow{2}{*}{$\numThreads$} & \multirow{2}{*}{$\numLocations$} & \multicolumn{4}{c|}{\textbf{Conflict Serializability (ms)}} & \multicolumn{2}{c}{\textbf{Increasing Cycle (ms)}} \\ \cline{5-10}
}},
columns/{atomsanitizer(ms)}/.style={column type={|r},string type, column name=\begin{turn}{\angle}$\ouralgo$\end{turn},
postproc cell content/.append code={
            \count0=\pgfplotstablerow
            \advance\count0 by1
            \ifnum\count0=\pgfplotstablerows
                \pgfkeysalso{@cell content=$\mathbf{\pgfmathprintnumber[assume math mode=true, fixed, fixed zerofill, precision=0]{##1}}$}%
			\else
                \ifnum\count0=\numexpr\pgfplotstablerows-1
                    % \pgfkeysalso{@cell content=$\mathbf{\pgfmathprintnumber[assume math mode=true, fixed, fixed zerofill, precision=0]{##1}}$}%
                    \pgfkeysalso{@cell content=${\pgfmathprintnumber[assume math mode=true, fixed, fixed zerofill, precision=0]{##1}}$}%
                \else
                    \pgfkeysalso{@cell content=${\pgfmathprintnumber[assume math mode=true, fixed, fixed zerofill, precision=0]{##1}}$}%
                \fi
            \fi
        },
%    },
  },
columns/{regiontrackconflict(ms)}/.style={column type={r},string type, column name=\begin{turn}{\angle}$\regiontrack$\end{turn},
postproc cell content/.append code={
            \count0=\pgfplotstablerow
            \advance\count0 by1
            \ifnum\count0=\pgfplotstablerows
                \pgfkeysalso{@cell content=$\mathbf{\pgfmathprintnumber[assume math mode=true, fixed, fixed zerofill, precision=0]{##1}}$}%
			\else
                \ifnum\count0=\numexpr\pgfplotstablerows-1
                    % \pgfkeysalso{@cell content=$\mathbf{\pgfmathprintnumber[assume math mode=true, fixed, fixed zerofill, precision=0]{##1}}$}%
                    \pgfkeysalso{@cell content=${\pgfmathprintnumber[assume math mode=true, fixed, fixed zerofill, precision=0]{##1}}$}%
                \else
                    \pgfkeysalso{@cell content=${\pgfmathprintnumber[assume math mode=true, fixed, fixed zerofill, precision=0]{##1}}$}%
                \fi
            \fi
        },
%    },
  },
columns/{aerodrome(ms)}/.style={column type={r},string type, column name=\begin{turn}{\angle}$\aerodrome$\end{turn},
postproc cell content/.append code={
            \count0=\pgfplotstablerow
            \advance\count0 by1
            \ifnum\count0=\pgfplotstablerows
                \pgfkeysalso{@cell content=$\mathbf{\pgfmathprintnumber[assume math mode=true, fixed, fixed zerofill, precision=0]{##1}}$}%
			\else
                \ifnum\count0=\numexpr\pgfplotstablerows-1
                    % \pgfkeysalso{@cell content=$\mathbf{\pgfmathprintnumber[assume math mode=true, fixed, fixed zerofill, precision=0]{##1}}$}%
                    \pgfkeysalso{@cell content=${\pgfmathprintnumber[assume math mode=true, fixed, fixed zerofill, precision=0]{##1}}$}%
                \else
                    \pgfkeysalso{@cell content=${\pgfmathprintnumber[assume math mode=true, fixed, fixed zerofill, precision=0]{##1}}$}%
                \fi
            \fi
        },
%    },
},
columns/{velodrome(ms)}/.style={column type={r},string type, column name=\begin{turn}{\angle}$\velodrome$\end{turn},
postproc cell content/.append code={
						\ifthenelse{\equal{\string ##1}{\string TO}}
						 { \pgfkeyssetvalue{/pgfplots/table/@cell content} {##1}}
						{
            \count0=\pgfplotstablerow
            \advance\count0 by1
            \ifnum\count0=\pgfplotstablerows
                \pgfkeysalso{@cell content=$\mathbf{\pgfmathprintnumber[assume math mode=true, fixed, fixed zerofill, precision=0]{##1}}$}%
			\else
                \ifnum\count0=\numexpr\pgfplotstablerows-1
                    % \pgfkeysalso{@cell content=$\mathbf{\pgfmathprintnumber[assume math mode=true, fixed, fixed zerofill, precision=0]{##1}}$}%
                    \pgfkeysalso{@cell content=${\pgfmathprintnumber[assume math mode=true, fixed, fixed zerofill, precision=0]{##1}}$}%
                \else
                    \pgfkeysalso{@cell content=${\pgfmathprintnumber[assume math mode=true, fixed, fixed zerofill, precision=0]{##1}}$}%
                \fi
            \fi
        }
        },
%    },
  },
columns/{atomsanitizer-increasing(ms)}/.style={column type={|r},string type, column name=\begin{turn}{\angle}$\ouralgo$\end{turn},
postproc cell content/.append code={
            \count0=\pgfplotstablerow
            \advance\count0 by1
            \ifnum\count0=\pgfplotstablerows
                \pgfkeysalso{@cell content=$\mathbf{\pgfmathprintnumber[assume math mode=true, fixed, fixed zerofill, precision=0]{##1}}$}%
			\else
                \ifnum\count0=\numexpr\pgfplotstablerows-1
                    % \pgfkeysalso{@cell content=$\mathbf{\pgfmathprintnumber[assume math mode=true, fixed, fixed zerofill, precision=0]{##1}}$}%
                    \pgfkeysalso{@cell content=${\pgfmathprintnumber[assume math mode=true, fixed, fixed zerofill, precision=0]{##1}}$}%
                \else
                    \pgfkeysalso{@cell content=${\pgfmathprintnumber[assume math mode=true, fixed, fixed zerofill, precision=0]{##1}}$}%
                \fi
            \fi
        },
%    },
  },
columns/{regiontrack-increasing(ms)}/.style={column type={r},string type, column name=\begin{turn}{\angle}$\regiontrack$\end{turn},
postproc cell content/.append code={
            \count0=\pgfplotstablerow
            \advance\count0 by1
            \ifnum\count0=\pgfplotstablerows
                \pgfkeysalso{@cell content=$\mathbf{\pgfmathprintnumber[assume math mode=true, fixed, fixed zerofill, precision=0]{##1}}$}%
			\else
                \ifnum\count0=\numexpr\pgfplotstablerows-1
                    % \pgfkeysalso{@cell content=$\mathbf{\pgfmathprintnumber[assume math mode=true, fixed, fixed zerofill, precision=0]{##1}}$}%
                    \pgfkeysalso{@cell content=${\pgfmathprintnumber[assume math mode=true, fixed, fixed zerofill, precision=0]{##1}}$}%
                \else
                    \pgfkeysalso{@cell content=${\pgfmathprintnumber[assume math mode=true, fixed, fixed zerofill, precision=0]{##1}}$}%
                \fi
            \fi
        },
%    },
  },    
every row no 22/.style={
    after row={\midrule},
},
highlight/.append style={
        postproc cell content/.append code={
                \pgfkeysalso{@cell content=\textbf{##1}}%
        },
    },
    highlight last two rows/.style={
        postproc cell content/.append code={
            \count0=\pgfplotstablerow
            \advance\count0 by1
            \ifnum\count0=\pgfplotstablerows
                \pgfkeysalso{@cell content=\textbf{##1}}%
            \fi
            % \ifnum\count0=\numexpr\pgfplotstablerows-1
            %     \pgfkeysalso{@cell content=\textbf{##1}}%
            % \fi
        },
    },
highlight last two rows,
]{\datatable}

\label{tab:offline}

\end{table}

%% file: sections/experiments-online.tex
\subsection{Runtime Experiments}

In this section, our goal is to evaluate the time and memory performance of $\ouralgo$ in a concurrent runtime setting, and compare it to the data race detection engine of $\TSAN$, which is an industry standard in concurrent runtime analyses.

\Paragraph{Implementation.}
We implement $\ouralgo$ inside $\TSAN$~\cite{Serebryany-2009} in C++, following the description in \cref{sec:online}.
We also enable the default optimizations of $\TSAN$, such as thread pooling and storing only the four latest events per memory address.

\Paragraph{Benchmarks.}
Our benchmark set consists of 13 popular open-source projects known for high concurrency, also used in related works (e.g.,~\cite{Tunc-2024}).
Since these programs do not have explicit transactions, we create an atomicity specification for each, following a process used in~\cite{Biswas-2014}, and has been provided the specifications of prior work (e.g., \cite{Mathur-2020,Ma-2021}).
In high level, this process starts with a fine-grained specification which marks each function as atomic.
Then, as long as an atomicity violation is reported, it refines the specification by removing the transactions surrounding the function on which the atomicity violation was reported.
This creates atomicity specifications for which we expect no atomicity violations, while retaining the atomicity requirement of many functions, so that benchmarking remains informative.
The absence of violations is also necessary to meaningfully compare $\ouralgo$ and $\TSAN$, as otherwise an atomicity violation would make $\ouralgo$ halt early, and appear (unfairly) faster.
We report the average running time and memory used by $\ouralgo$ and $\TSAN$ over 20 runs.

\input{tables/online}

\Paragraph{Results.}
Our results are shown in \cref{tab:online}.
As a first observation, we see that the overhead of $\TSAN$ over the uninstrumented program is $3.1\times$ on (geometric) average, and $13.8\times$ in the worst case (\texttt{ImageMagic}), in terms of time.
This matches the expectations of the developers for an overhead $\leq 15\times$ in time~\cite{Serebryany-2009}.
In terms of memory, $\TSAN$ has an overhead of $3.5\times$ on average, and $14.5\times$ in the worst case (\texttt{pigz}).
Although the average case is well within the $\leq 10\times$ bound expected by the developers, the worst case exceeds it.
This, however, occurs on a benchmark with a very small memory footprint (2MB), and thus is insignificant in practice.

Compared to $\TSAN$, $\ouralgo$ incurs a reasonable time overhead, averaging $1.49\times$ over all benchmarks.
The largest overhead occurs on \texttt{x264} and \texttt{7z}, reaching $2.21\times$, and $3.78\times$, respectively.
For the two benchmarks, we observe that $\ouralgo$ performs a fully transitive closure more times, which is the main reason for the slow down, as it requires heavier locking (see \cref{sec:online}).
Finally, we observe that the memory footprint of $\ouralgo$ is nearly identical to $\TSAN$, with the geometric mean of the overhead being only $1.06\times$.
Compared to the original, uninstrumented executions, the geometric mean of the overheads over time and memory are $4.63\times$ and $3.83\times$, respectively.
The time overhead is always $\leq 15\times$, except for one case that goes up to $\simeq 20\times$ (\texttt{7z}).
The memory overhead is always $\leq 6\times$,  except for one case that goes up to $\simeq 21 \times$ (\texttt{pigz}).
Note, however, that $\TSAN$ also suffers a high memory overhead on this benchmark ($\simeq 14 \times$).
Overall, our results support the hypothesis that $\ouralgo$ can be incorporated in a runtime monitoring setting, incurring acceptable overheads in both time and memory.

%% file: tables/online.tex
\begin{table*}[!t]
\caption{
Online experiments on conflict-serializability ($\ouralgo$) and race detection ($\TSAN$) inside the $\TSAN$ framework.
``Origin'' refers to the original, uninstrumented benchmark.
``\ovhtsan'' and ``\ovhorig'' denote the ratios of $\ouralgo$ over $\TSAN$ and $\orig$ respectively.
``Total/Max'' refers to time/memory measurements, respectively.
``Total Overhead'' is the ratio of the total running times; ``Max Overhead'' is the ratio of the maximum memory usages.
}
\def\angle{47}
\pgfplotstableread[col sep=comma]{data/TSan.csv}\datatable
\centering
\small
\setlength\tabcolsep{2.9pt}
\pgfplotstabletypeset[%
columns={Benchmarks,origin(s),atomsanitizer(s),tsan(s),atomsanitizer/origin(time),atomsanitizer/tsan(time),origin(kb),atomsanitizer(kb),tsan(kb),atomsanitizer/origin(memory),atomsanitizer/tsan(memory)},
assign column name/.style={/pgfplots/table/column name={\textbf{#1}}},
every head row/.style={before row=\toprule, after row=\midrule, font=bold},
every last row/.style={after row=\bottomrule},
columns/{Benchmarks}/.style={column type={l|},string type, column name=,
postproc cell content/.prefix code={%
            \pgfkeyssetvalue{/pgfplots/table/@cell content}%
            {\ttfamily ##1}%
        },
},
every head row/.append style={before row={
\midrule
\multirow{2}{*}{\textbf{Benchmark}} & \multicolumn{5}{c|}{\textbf{Time (s)}} & \multicolumn{5}{c}{\textbf{Memory (MB)}} \\ \cline{2-11}
}},
columns/{origin(s)}/.style={column type={|r},string type, column name=\begin{turn}{\angle}$\orig$\end{turn},
postproc cell content/.append code={
            \count0=\pgfplotstablerow
            \advance\count0 by1
            \ifnum\count0=\pgfplotstablerows
                \pgfkeysalso{@cell content=$\textbf{-}$}%
			\else
                \ifnum\count0=\numexpr\pgfplotstablerows-1
                    \pgfkeysalso{@cell content=$\mathbf{\pgfmathprintnumber[assume math mode=true, fixed, fixed zerofill, precision=0]{##1}}$}%
                \else
                    \pgfkeysalso{@cell content=${\pgfmathprintnumber[assume math mode=true, fixed, fixed zerofill, precision=0]{##1}}$}%
                \fi
            \fi
        },
%    },
  },
columns/{atomsanitizer(s)}/.style={column type={r},string type, column name=\begin{turn}{\angle}$\ouralgo$\end{turn},
postproc cell content/.append code={
            \count0=\pgfplotstablerow
            \advance\count0 by1
            \ifnum\count0=\pgfplotstablerows
                \pgfkeysalso{@cell content=$\textbf{-}$}%
			\else
                \ifnum\count0=\numexpr\pgfplotstablerows-1
                    \pgfkeysalso{@cell content=$\mathbf{\pgfmathprintnumber[assume math mode=true, fixed, fixed zerofill, precision=0]{##1}}$}%
                \else
                    \pgfkeysalso{@cell content=${\pgfmathprintnumber[assume math mode=true, fixed, fixed zerofill, precision=0]{##1}}$}%
                \fi
            \fi
        },
%    },
  },
columns/{tsan(s)}/.style={column type={r},string type, column name=\begin{turn}{\angle}$\TSAN$\end{turn},
postproc cell content/.append code={
            \count0=\pgfplotstablerow
            \advance\count0 by1
            \ifnum\count0=\pgfplotstablerows
                \pgfkeysalso{@cell content=$\textbf{-}$}%
			\else
                \ifnum\count0=\numexpr\pgfplotstablerows-1
                    \pgfkeysalso{@cell content=$\mathbf{\pgfmathprintnumber[assume math mode=true, fixed, fixed zerofill, precision=0]{##1}}$}%
                \else
                    \pgfkeysalso{@cell content=${\pgfmathprintnumber[assume math mode=true, fixed, fixed zerofill, precision=0]{##1}}$}%
                \fi
            \fi
        },
%    },
  },
columns/{atomsanitizer/tsan(time)}/.style={column type={r},string type, column name=\begin{turn}{\angle}\ovhtsan\end{turn},
postproc cell content/.append code={
            \count0=\pgfplotstablerow
            \advance\count0 by1
            \ifnum\count0=\pgfplotstablerows
                \pgfkeysalso{@cell content=$\mathbf{\pgfmathprintnumber[assume math mode=true, fixed, fixed zerofill, precision=2]{##1}}$}%
			\else
                \ifnum\count0=\numexpr\pgfplotstablerows-1
                    \pgfkeysalso{@cell content=$\mathbf{\pgfmathprintnumber[assume math mode=true, fixed, fixed zerofill, precision=2]{##1}}$}%
                \else
                    \pgfkeysalso{@cell content=${\pgfmathprintnumber[assume math mode=true, fixed, fixed zerofill, precision=2]{##1}}$}%
                \fi
            \fi
        },
%    },
  },
columns/{atomsanitizer/origin(time)}/.style={column type={r},string type, column name=\begin{turn}{\angle}\ovhorig\end{turn},
postproc cell content/.append code={
            \count0=\pgfplotstablerow
            \advance\count0 by1
            \ifnum\count0=\pgfplotstablerows
                \pgfkeysalso{@cell content=$\mathbf{\pgfmathprintnumber[assume math mode=true, fixed, fixed zerofill, precision=2]{##1}}$}%
			\else
                \ifnum\count0=\numexpr\pgfplotstablerows-1
                    \pgfkeysalso{@cell content=$\mathbf{\pgfmathprintnumber[assume math mode=true, fixed, fixed zerofill, precision=2]{##1}}$}%
                \else
                    \pgfkeysalso{@cell content=${\pgfmathprintnumber[assume math mode=true, fixed, fixed zerofill, precision=2]{##1}}$}%
                \fi
            \fi
        },
%    },
  },
columns/{origin(kb)}/.style={column type={|r},string type, column name=\begin{turn}{\angle}$\orig$\end{turn},
postproc cell content/.append code={
            \count0=\pgfplotstablerow
            \advance\count0 by1
            \ifnum\count0=\pgfplotstablerows
            \pgfkeysalso{@cell content=$\textbf{-}$}%
			\else
                \ifnum\count0=\numexpr\pgfplotstablerows-1
                    \pgfkeysalso{@cell content={{\pgfkeys{/pgf/fpu=true}\pgfmathparse{##1/1024}$\mathbf{\pgfmathprintnumber[assume math mode=true, fixed, fixed zerofill, precision=0]{\pgfmathresult}}$}}} %
                \else
                    \pgfkeysalso{@cell content={{\pgfkeys{/pgf/fpu=true}\pgfmathparse{##1/1024}${\pgfmathprintnumber[assume math mode=true, fixed, fixed zerofill, precision=0]{\pgfmathresult}}$}}} %
                \fi
            \fi
},
},
columns/{atomsanitizer(kb)}/.style={column type={r},string type, column name=\begin{turn}{\angle}$\ouralgo$\end{turn},
postproc cell content/.append code={
            \count0=\pgfplotstablerow
            \advance\count0 by1
            \ifnum\count0=\pgfplotstablerows
                \pgfkeysalso{@cell content=$\textbf{-}$}%
			\else
                \ifnum\count0=\numexpr\pgfplotstablerows-1
                    \pgfkeysalso{@cell content={{\pgfkeys{/pgf/fpu=true}\pgfmathparse{##1/1024}$\mathbf{\pgfmathprintnumber[assume math mode=true, fixed, fixed zerofill, precision=0]{\pgfmathresult}}$}}} %
                \else
                    \pgfkeysalso{@cell content={{\pgfkeys{/pgf/fpu=true}\pgfmathparse{##1/1024}${\pgfmathprintnumber[assume math mode=true, fixed, fixed zerofill, precision=0]{\pgfmathresult}}$}}} %
                \fi
            \fi
},
},
columns/{tsan(kb)}/.style={column type={r},string type, column name=\begin{turn}{\angle}$\TSAN$\end{turn},
postproc cell content/.append code={
            \count0=\pgfplotstablerow
            \advance\count0 by1
            \ifnum\count0=\pgfplotstablerows
            \pgfkeysalso{@cell content=$\textbf{-}$}%
			\else
                \ifnum\count0=\numexpr\pgfplotstablerows-1
                    \pgfkeysalso{@cell content={{\pgfkeys{/pgf/fpu=true}\pgfmathparse{##1/1024}$\mathbf{\pgfmathprintnumber[assume math mode=true, fixed, fixed zerofill, precision=0]{\pgfmathresult}}$}}} %
                \else
                    \pgfkeysalso{@cell content={{\pgfkeys{/pgf/fpu=true}\pgfmathparse{##1/1024}${\pgfmathprintnumber[assume math mode=true, fixed, fixed zerofill, precision=0]{\pgfmathresult}}$}}} %
                \fi
            \fi
},
},
columns/{atomsanitizer/tsan(memory)}/.style={column type={r},string type, column name=\begin{turn}{\angle}\ovhtsan\end{turn},
postproc cell content/.append code={
            \count0=\pgfplotstablerow
            \advance\count0 by1
			\pgfmathparse{##1}
            \ifnum\count0=\pgfplotstablerows
                \pgfkeysalso{@cell content=$\mathbf{\pgfmathprintnumber[assume math mode=true, fixed, fixed zerofill, precision=2]{##1}}$}%
			\else
                \ifnum\count0=\numexpr\pgfplotstablerows-1
                    \pgfkeysalso{@cell content=$\mathbf{\pgfmathprintnumber[assume math mode=true, fixed, fixed zerofill, precision=2]{##1}}$}%
                \else
                    \pgfkeysalso{@cell content=${\pgfmathprintnumber[assume math mode=true, fixed, fixed zerofill, precision=2]{##1}}$}%
                \fi
            \fi
        },
  },
columns/{atomsanitizer/origin(memory)}/.style={column type={r},string type, column name=\begin{turn}{\angle}\ovhorig\end{turn},
postproc cell content/.append code={
            \count0=\pgfplotstablerow
            \advance\count0 by1
			\pgfmathparse{##1}
            \ifnum\count0=\pgfplotstablerows
                \pgfkeysalso{@cell content=$\mathbf{\pgfmathprintnumber[assume math mode=true, fixed, fixed zerofill, precision=2]{##1}}$}%
			\else
                \ifnum\count0=\numexpr\pgfplotstablerows-1
                    \pgfkeysalso{@cell content=$\mathbf{\pgfmathprintnumber[assume math mode=true, fixed, fixed zerofill, precision=2]{##1}}$}%
                \else
                    \pgfkeysalso{@cell content=${\pgfmathprintnumber[assume math mode=true, fixed, fixed zerofill, precision=2]{##1}}$}%
                \fi
            \fi
        },
  },
every row no 12/.style={
    after row={\midrule},
},
highlight/.append style={
        postproc cell content/.append code={
                \pgfkeysalso{@cell content=\textbf{##1}}%
        },
    },
    highlight last row/.style={
        postproc cell content/.append code={
            \count0=\pgfplotstablerow
            \advance\count0 by1
            \ifnum\count0=\pgfplotstablerows
                \pgfkeysalso{@cell content=\textbf{##1}}%
            \fi
            \ifnum\count0=\numexpr\pgfplotstablerows-1
                \pgfkeysalso{@cell content=\textbf{##1}}%
            \fi
        },
    },
highlight last row,
]{\datatable}
\label{tab:online}
\end{table*}

%% file: sections/related_work-main.tex
%!TEX root = ../main.tex
\section{Related Work}\label{sec:related_work}

Detecting atomicity vulnerabilities automatically has been an active research topic and targeted by a range of methods.
Atomizer~\cite{Flanagan-2004} was a precursor of Velodrome, based on Lipton's reduction theory~\cite{Lipton-75}.
Techniques like AtomFuzzer~\cite{Park-2008} and PENELOPE~\cite{Sorrentino-2010} aim at generating schedules that  expose atomicity violations, while others focus on synthesizing tests for the same task~\cite{Samak-2015}.
Algorithms like $\ouralgo$ can speed up such approaches, by efficiently testing whether the generated execution indeed constitutes an atomicity violation.
Since atomicity invariants are not always explicit, several approaches aim at inferring them automatically~\cite{Lu-2006,Park-2009}.
Finally, atomicity has also been targeted by static analyses, most prominently using types and effects~\cite{Flanagan2003,Sasturkar2005,Flanagan2008}.

View serializability is another atomicity notion, which is more general than conflict-serializability.
Intuitively, the serial witness trace $\tr^*$ can now be obtained from the observed trace $\tr$ by swapping conflicting events, as long every read observes the same write in $\tr$ and $\tr^*$.
However, testing view serializability is much harder: it is NP-complete in general~\cite{Papadimitriou1979a}, and even hard when parameterized by the number of threads~\cite{Mathur-2020c}.

Dynamic analyses are a standard approach to concurrency bugs.
Besides atomicity violations, they have been targeting 
data races~\cite{Flanagan-2009,Pavlogiannis-2020,Gorjiara-2022,Lidbury-2017},
deadlocks~\cite{Tunc-2023},
memory vulnerabilities~\cite{Bond-2006}, and
linearizability violations~\cite{Gibbons2002,Emmi2017,Abdulla-25,Lee-25}.
Their efficiency has been an active pursuit in both theory and practice~\cite{Mathur-2022,Tunc-2024,Kulkarni-2021a,Roemer-2020,Bond-2013}.

%% file: sections/conclusion-main.tex
%!TEX root = ../main.tex
\section{Conclusion}
We have introduced $\ouralgo$,  a new algorithm for dynamically testing for atomicity violations.
$\ouralgo$ is theoretically faster than all previous algorithms for the problem, while it also has small space complexity.
Our experimental results show that the theoretical advantage of $\ouralgo$ is also realized in practice, while its lightweight time and memory footprint make it suitable for a runtime monitoring setting.
Interesting future directions include transferring the insights of this work towards monitoring transactional consistency in databases, against a range of popular isolation levels~\cite{Biswas2019,Clark2024,Moldrup2025}.

%% file: sections/data-availability-statement.tex
\section*{Data-Availability Statement}

Data for our experiments and the experimental setup can be found in the accompanying artifact~\cite{hunkar-2026}.

%% file: sections/acknowledgments.tex
\begin{acks}
This work was partially supported by a research grant (VIL42117) from VILLUM FONDEN, and by a research grant from STIBOFONDEN.
\end{acks}

%% file: sections/app_serializability.tex
\section{Proofs of  \cref{sec:atomicity_testing}}\label{sec:app_serializability}

\lemopenthbcondition*
\begin{proof}
First, assume that such events $e'$ and $e''$ exist.
Due to (i), we have $\txof(e') \thb{\tr} \txof(e'')$.
Due to (ii) and (iii), we have $\txof(e'') \thb{\tr[{:}e]} \txof(e')$, and by \cref{rem:thb_monotonic}, we have $\txof(e'') \thb{\tr} \txof(e')$.
Thus $\thb{\tr}$ is cyclic, and by \cref{lem:thb_acyclicity_condition}, $\tr$ is not conflict-serializable.

For the opposite direction, assume that $\tr$ is not conflict-serializable, and by \cref{lem:thb_acyclicity_condition}, we have that $\thb{\tr}$ is cyclic.
Due to \cref{rem:thb_monotonic}, there is a first event $e''$ of $\tr$ such that $\thb{\tr[{:}e'']}$ is cyclic.
Let $e$ be the $\traceOrd{\tr}$-immediate predecessor of $e''$, and note that 
\[
\chb{\tr[{:}e'']}=\left(\chb{\tr[{:}e]}\cup \left\{(e', e'')\colon e'\chb{\tr[{:}e'']}e''\right\}\right)
\]
which is equivalent to
\begin{align*}
\chb{\tr[{:}e'']}&=\left(\chb{\tr[{:}e]}\cup \{(e', e'')\colon \right.
 \left. \vphantom{\chb{\tr[{:}e]}\cup \{(e', e'')\colon} e'\text{ is a } \chb{\tr} \text{-immediate predecessor of } e''\}\right)^+
\end{align*}
where $P^+$ denotes the transitive closure of relation $P$.
The latter equation implies that
\begin{align*}
\thb{\tr[{:}e'']}&=\left(\thb{\tr[{:}e]}\cup \{(\txof(e'), \txof(e''))\colon \right.
 \left. \vphantom{\chb{\tr[{:}e]}\cup \{(\txof(e'), \txof(e''))\colon} e'\text{ is a } \chb{\tr} \text{-immediate predecessor of } e''\}\right)^+
\end{align*}
Since $\thb{\tr[{:}e'']}$ is cyclic but $\thb{\tr[{:}e]}$ is not, we must also have $\txof(e'')\thb{\tr[{:}e'']}\txof(e')$, where $e'$ is a $\chb{\tr}$-immediate predecessor of $e''$.
Since the set of successors of $\txof(e'')$ is the same in $\tr[{:}e]$ and $\tr[{:}e'']$ (i.e., for every transaction $T$, we have $\txof(e'')\thb{\tr[{:}e'']}T$ iff $\txof(e'')\thb{\tr[{:}e]}T$), we must also have $\txof(e'')\thb{\tr[{:}e]}\txof(e')$.

The desired result follows.
\end{proof}

\lematomsanitizerinvariant*
\begin{proof}
It is straightforward to see that whenever $\ouralgo$ calls $\insedge(n_1, n_2)$ in \cref{line:algo_csc_insert_edge} $n_1$ and $n_2$ represent $\txof(v)$ and $\txof(e)$, respectively, where $e$ is the event currently being processed and $v$ is such that $v\chb{\tr}e$.
Moreover, it will make such a call at least for every $v$ that is a $\chb{\tr}$-immediate predecessor of $e$.
The invariant then follows by a straightforward induction, observing that all the necessary transitive $\thb{\tr[{:}e]}$ orderings required to update $\LatestRemoteTransaction{\tr[{:}e]}{t_1}(t_2)$ and $\EarliestReceiverTransaction{\tr[{:}e]}{t_1}(t_2)$, for all pairs of threads $t_1 $ and $t_2$, are captured in the two loops of \cref{line:ds-cs-interface-ins-1} and \cref{line:ds-cs-interface-ins-2} in \cref{algo:cs}.

The desired result follows.
\end{proof}

\lematomsanitizercorrectness*
\begin{proof}
It is straightforward to see that whenever $\ouralgo$ (\cref{algo:bookkeping}) calls \FunctionCheckAndUpdate($v, e$), we have that $v\chb{\tr} e$, and it will make such a call for every event $v$ that is a $\chb{\tr}$-immediate predecessor of $e$.
Then, a conflict-serializability violation is reported iff $\ds.\reachable(n_2, n_1)=$~True, which, due to \cref{lem:atomsanitizer_invariant}, holds iff $\LatestRemoteTransaction{\tr[{:}e']}{\threadof{v}}(\threadof{e})=\txof(e)$, and
$\EarliestReceiverTransaction{\tr[{:}e']}{\threadof{v}}(\threadof{e}).\bg \threadOrd{\tr} \txof(e).\bg$,
where $e'$ is the $\traceOrd{\tr}$-immediate predecessor of $e$.
Thus, \cref{lem:open_thb_condition} applies, leading to the desired result.
\end{proof}

\lematomsanitizercomplexity*
\begin{proof}
We begin with an analysis of the time complexity of $\ouralgo$.
First, observe that the initialization functions of \cref{algo:bookkeping} and \cref{algo:ds} take time $O(\numLocations + \numLocks)$ and $O(\numThreads^2)$, respectively.
Every other function in $\ds$ (\cref{algo:ds}) takes trivially $O(1)$ time.
In \cref{algo:cs}, every call to \FunctionInsertEdge{$\pair{t_1}{j_1}, \pair{t_2}{j_2}$} takes $O(\numThreads^2)$ time, which is also the time taken in \FunctionCheckAndUpdate{$v$, $e$} (since it calls \FunctionInsertEdge).
Finally, each handler in \cref{algo:bookkeping} except \FunctionWrite{$e$} takes $O(\numThreads^2)$ time, which is determined by a single call to \FunctionCheckAndUpdate.
On the other hand, \FunctionWrite{$e$} makes $O(\numThreads)$ calls to \FunctionCheckAndUpdate, one for each thread stored in $\lastReads{x}$.
Thus, each call to \FunctionWrite{$e$} takes $O(\numThreads^3)$ time.
Note, however, that this function resets $\lastReads{x}$ (\cref{line:algo_atomsanitizer_write_empty_lrds}).
Thus, between any two successive writes on $x$, this function will iterate over a specific thread $t$ (\cref{line:algo_atomsanitizer_loop_over_lrds}) only if $t$ has executed some read on $x$ between them, so as to insert $t$ back into $\lastReads{x}$ (\cref{line:algo_atomsanitizer_update_lrds}).
Thus, we can charge each call to \FunctionCheckAndUpdate (\cref{line:algo_atomsanitizer_write_chb_reads}) to the corresponding read of thread $t$, leading to $O(\numEvents\numThreads^2)$ total running time.

We now turn our attention to the space complexity of $\ouralgo$.
Each vector clock $\XClock_{t}$ and $\YClock_{t}$ takes $O(\numThreads)$ space, thus the total space of $\ds$ is $O(\numThreads^2)$.
$\ouralgo$ additionally stores $O(\numThreads)$ events in $\lrd{x}$ and  $\lastReads{x}$ for each location $x$, and one event in $\lastWrite{x}$, thus using $O(\numThreads\numLocations)$ across all locations $x$.
Finally, it stores one entry in $\lastRelease{\ell}$, for each lock $\ell$, thus using $O(\numLocks)$ space across all locks $\ell$.
Summing up, the space usage of $\ouralgo$ is $O(\numThreads(\numThreads + \numLocations) + \numLocks)$.

The desired result follows.
\end{proof}

\thmatomsanitizer*
\begin{proof}
The theorem follows directly from \cref{lem:atomsanitizer_correctness} and \cref{lem:atomsanitizer_complexity}.
\end{proof}

\lemtimelowerbound*
\begin{proof}
We establish a reduction from the Online Vector-Matrix-Vector Multiplication (OuMv) problem~\cite{Henzinger2015} to conflict-serializability.
In OuMv, we are given an $m\times m$ boolean matrix $M$, and an online sequence of vector pairs $(u_1, v_1),\dots, (u_m, v_m)$.
The task is to output each product $(u_i)^{\intercal}M v_i$ before we see $(u_{i+1}, v_{i+1})$.
The OMv hypothesis implies that, if we allow only polynomial-time preprocessing on $M$, the online computation of vector-matrix-vector products takes $\Omega(m^{3-\epsilon})$ time, for any $\epsilon >0$~\cite{Henzinger2015}.

We cast OuMv as a conflict-serializability problem on a trace $\tr$, as follows.
\begin{compactenum}
\item The matrix $M$ is simulated by $2m$ threads $\thread^1_1, \dots, \thread^1_m$ and $\thread^2_1, \dots, \thread^2_m$,
where thread $\thread_i^j$ executes a transaction $T_i^j$.
Each $T_i^1$ starts with a write $\wt(x_i)$, while each $T_i^2$ starts with a sequence of reads $\rd(x_j)$ for all $j$ such that $M[i,j]=1$.
Note that this implies that $T^1_j\chb{\tr}T^2_i$.
These events execute in the beginning of $\tr$, so that all writes execute before all reads.
All transactions $T_i$ remain open until the end of $\tr$.
\item For each pair of vectors $(u_i, v_i)$, we start two threads $\thread^u_i$ and $\thread^v_i$, executing the transactions $T^u_i$ and $T^v_i$, respectively.
$T^u_i$ executes a single write event $\wt(y_i)$, after which each transaction $T^1_j$ executes a read event $\rd(y_i)$ iff $u_i[j]=1$.
Note that this implies $T^u_i\thb{\tr} T^1_j$ iff $u_i[j]=1$.
Afterwards, each transaction $T^2_j$ executes a read event $\rd(z_i)$ iff $v_i[j]=1$, and finally $T^v_i$ executes a single write event $\wt(z_i)$.
Note that this implies $T^2_j\thb{\tr} T^v_i$ $v_i[j]=1$.
In total, the imposed transaction orderings are such that $T^u_i\thb{\tr}T^v_i$ iff $(u_i)^{\intercal}Mv_i=1$.

Finally, $T^v_i$ executes a write event $\wt(s_i)$, followed by $T^u_i$ executing a read event $\rd(s_i)$, and then $T^i_i$ and $T^v_i$ close.
Note that this implies $T^v_i\thb{\tr} T^u_i$.
\end{compactenum}
It follows that after the vector pair $(u_i, v_i)$ is revealed, 
the corresponding prefix of $\tr$ exhibits a conflict-serializability violation iff $(u_i)^{\intercal}Mv_i=1$.
Note that after the first $m^2$ events encoding $M$, each vector pair $(u_i, v_i)$ grows $\tr$ by $O(m)$ events.
Thus, after all $m$ pairs of vectors have been revealed, $\tr$ has size $\numEvents=O(m^2)$.
This implies that any algorithm detecting conflict-seralizability violations on $\tr$ must take $\Omega(m^{3-\epsilon})$ time, for any $\epsilon>0$, and thus $\Omega(\numEvents^{3/2-\epsilon})$ time, for any such $\epsilon$.
\end{proof}

\lemspacelowerbound*
\begin{proof}
Given some set $A\subseteq \{1,\dots, \numEvents\}$ and some integer $\ell\in \{1,\dots, \numEvents\}$, we define the trace $\tr_{A, \ell}$ with two threads $\thread_1$, $\thread_2$ and two transactions.
Let $A=\{i_1,\dots, i_m\}$, and $\tr_{A,\ell}$ consists of the following three segments.
\begin{compactenum}
\item In the first segment, $\thread_1$ opens a transaction $T_1$, and executes in sequence $\wt(x_{i_1}), \wt(x_{i_2}),\dots, \wt(x_{i_m})$.
\item In the second segment, $\thread_2$ opens a transaction $T_2$ and executes $\wt(x_{\ell})$, followed by $\wt(y)$, and ends by closing $T_2$.
\item In the third segment, $\thread_1$ executes $\wt(y)$ and ends by closing $T_1$.
\end{compactenum}
It follows that $\tr_{A,\ell}$ is conflict serializable iff $\ell\not \in A$; if not, it exhibits an increasing-path violation.
Moreover, note that $\tr_{A,\ell}$ has length $O(\numEvents)$.

Now assume that there exists a streaming algorithm for detecting conflict-serializability/increasing-path violations that uses $<\numEvents$ space.
Since there are $2^\numEvents$ subsets of $\{1, \dots, \numEvents\}$, and the state space of the algorithm is $<2^n$, there exist two distinct sets $A$ and $A'$ such that the algorithm is in the same state after processing the first segment of $\tr_{A, \ell}$ and $\tr_{A', \ell}$, for any $\ell \in \{1,\dots, \numEvents\}$, since $\ell$ is revealed in the second segment of the trace.
Assume wlog that $A\not \subseteq A'$, and choose $\ell \in A\setminus A'$.
Observe that $\tr_{A, \ell}$ is not conflict-serializable, while $\tr{A', \ell}$ is.
However, the algorithm executes identically on the two traces, since
(i)~the algorithm is in the same state after processing the first segment of $\tr_{A, \ell}$ and $\tr_{A', \ell}$, and
(ii)~the second the third segments of the two traces coincide.
Hence the algorithm must give the wrong answer when executed on one of the two traces.
The desired result follows.
\end{proof}

%% file: sections/app_increasing_cycle.tex
\section{Proofs of  \cref{sec:increasing_cycle}}\label{sec:app_increasing_cycle}

\lemincreasingpathcondition*
\begin{proof}
First, assume that such events $e_1$ and $e_2$ exist. Due to (ii) and (iii), we have $\bg\chb{\tr[:e]}e_1$, which, together with (i), implies $\bg\chb{\tr[:e]}e_1\chb{\tr[:e]}e_2$, yielding an increasing-cycle violation.

For the opposite direction, assume that $\tr$ has an increasing-cycle violation.
Observe that, for any two trances $\tr_1$ and $\tr_3$ such that $\tr_1$ is a prefix of $\tr_3$, we have $\chb{\tr_1}\subseteq \chb{\tr_3}$.
Thus, there is an earliest event $e_2$ of $\tr$ such that $\bg\chb{\tr[:e_2]}e_1\chb{\tr[:e_2]}e_2$, where $bg$ is the begin event of $\txof(e_2)$ and $\threadof{e_1}\neq \threadof{e_2}$.
Let $e$ be the $\traceOrd{\tr}$-immediate predecessor of $e_2$, and note that 
\[
\chb{\tr[:e_2]}=\left(\chb{\tr[:e]}\cup \left\{(e_1, e_2)\colon e_1\chb{\tr[:e_2]}e_2\right\}\right)
\]
The desired result follows by letting $e_1$ by any $\chb{\tr}$-immediate predecessor of $e_2$ such that $\bg\chb{\tr[:e]}e_1$.
\end{proof}

\thmatomsanitizerincreasingpath*
\begin{proof}
The correctness of $\ouralgo$ (increasing-cycle) follows straightforwardly from \cref{lem:increasing_cycle_condition} and the following invariant, which is analogous to \cref{lem:atomsanitizer_invariant}.
After $\ouralgo$ has processed an event $e$, for all threads $t_1$, $t_2$, we have $\XClock_{t_1}(t_2)=\FunctionId(\LatestRemoteBegin{\tr[:e]}{t_1}(t_2))$ and $\YClock_{t_1}(t_2)=\FunctionId(\EarliestLocalEvent{\tr[:e]}{t_1}(t_2))$.

The time complexity follows by an analysis similar to that of \cref{lem:atomsanitizer_complexity}, accounting for the fact that each call to \FunctionInsertEdge now costs $O(\numThreads)$ as opposed to $O(\numThreads^2)$, since there is no forward propagation of edges.
Finally, the space complexity analysis is identical to that of \cref{lem:atomsanitizer_complexity}.
\end{proof}

%% file: sections/app_experiments.tex
\section{Experimental Details}\label{sec:app_experimental_details}
\cref{tab:benchmark_statistics} provides some additional statistics on the benchmarks we use in \cref{sec:experiments}.
\input{tables/benchmark_statistics}

%% file: tables/benchmark_statistics.tex
\begin{table*}[htbp]
\caption{Benchmark statistics: number of transactions ($\numTransactions$), events ($\numEvents$), threads ($\numThreads$), and memory locations ($\numLocations$).}
\centering
\begin{subtable}[t]{0.48\textwidth}
\centering
\input{tables/online_benchmark}
\end{subtable}
\hfill
\begin{subtable}[t]{0.48\textwidth}
\centering
\input{tables/offline_benchmark}
\end{subtable}
\label{tab:benchmark_statistics}
\end{table*}

%% file: tables/online_benchmark.tex
\caption{Online experiments}
\small
\pgfplotstableread[col sep=comma]{data/online_profile.csv}\datatable
\setlength\tabcolsep{2.9pt}
\pgfplotstabletypeset[%
columns={Benchmarks,numTransactions,numEvents,numThreads,numLocations},
assign column name/.style={/pgfplots/table/column name={\textbf{#1}}},
every head row/.style={before row={\toprule\textbf{Benchmark} & $\numTransactions$ & $\numEvents$ & $\numThreads$ & $\numLocations$ \\\midrule}, output empty row},
every last row/.style={after row=\bottomrule},
skip rows between index={13}{15},
columns/{Benchmarks}/.style={column type={l|},string type, column name=,
postproc cell content/.prefix code={%
            \pgfkeyssetvalue{/pgfplots/table/@cell content}%
            {\ttfamily ##1}%
        },
},
columns/{numTransactions}/.style={column type={c|},string type, column name=,
postproc cell content/.prefix code={%
        \count0=\pgfplotstablerow
        \advance\count0 by1
        \ifnum\count0=\pgfplotstablerows
            \pgfkeysalso{@cell content=\textbf{##1}}%
        \else
            \ifnum\count0=\numexpr\pgfplotstablerows-1
                \pgfkeysalso{@cell content={##1}} %
            \else
                \pgfkeysalso{@cell content={##1}} %
            \fi
        \fi
    },
},
columns/{numEvents}/.style={column type={c|},string type, column name=,
postproc cell content/.prefix code={%
        \count0=\pgfplotstablerow
        \advance\count0 by1
        \ifnum\count0=\pgfplotstablerows
            \pgfkeysalso{@cell content=\textbf{##1}}%
        \else
            \ifnum\count0=\numexpr\pgfplotstablerows-1
                \pgfkeysalso{@cell content={##1}} %
            \else
                \pgfkeysalso{@cell content={##1}} %
            \fi
        \fi
    },
},
columns/{numThreads}/.style={column type={c|},string type, column name=,
postproc cell content/.prefix code={%
        \count0=\pgfplotstablerow
        \advance\count0 by1
        \ifnum\count0=\pgfplotstablerows
            \pgfkeysalso{@cell content=$\textbf{##1}$}%
        \else
            \ifnum\count0=\numexpr\pgfplotstablerows-1
                \pgfkeysalso{@cell content={##1}} %
            \else
                \pgfkeysalso{@cell content={##1}} %
            \fi
        \fi
    },
},
columns/{numLocations}/.style={column type={c},string type, column name=,
postproc cell content/.prefix code={%
        \count0=\pgfplotstablerow
        \advance\count0 by1
        \ifnum\count0=\pgfplotstablerows
            \pgfkeysalso{@cell content=$\textbf{##1}$}%
        \else
            \ifnum\count0=\numexpr\pgfplotstablerows-1
                \pgfkeysalso{@cell content={##1}} %
            \else
                \pgfkeysalso{@cell content={##1}} %
            \fi
        \fi
    },
},
]{\datatable}
\label{tab:online_profile}

%% file: tables/offline_benchmark.tex
\caption{Offline experiments}
\small
\pgfplotstableread[col sep=comma]{data/offline_profile.csv}\datatable
\setlength\tabcolsep{2.9pt}
\pgfplotstabletypeset[%
columns={Benchmarks,numTransactions,numEvents,numThreads,numLocations},
assign column name/.style={/pgfplots/table/column name={\textbf{#1}}},
every head row/.style={before row={\toprule\textbf{Benchmark} & $\numTransactions$ & $\numEvents$ & $\numThreads$ & $\numLocations$ \\\midrule}, output empty row},
every last row/.style={after row=\bottomrule},
columns/{Benchmarks}/.style={column type={l|},string type, column name=,
postproc cell content/.prefix code={%
            \pgfkeyssetvalue{/pgfplots/table/@cell content}%
            {\ttfamily ##1}%
        },
},
columns/{numTransactions}/.style={column type={c|},string type, column name=,
postproc cell content/.prefix code={%
        \count0=\pgfplotstablerow
        \advance\count0 by1
        \ifnum\count0=\pgfplotstablerows
            \pgfkeysalso{@cell content=##1}%
        \else
            \ifnum\count0=\numexpr\pgfplotstablerows-1
                \pgfkeysalso{@cell content={##1}} %
            \else
                \pgfkeysalso{@cell content={##1}} %
            \fi
        \fi
    },
},
columns/{numEvents}/.style={column type={c|},string type, column name=,
postproc cell content/.prefix code={%
        \count0=\pgfplotstablerow
        \advance\count0 by1
        \ifnum\count0=\pgfplotstablerows
            \pgfkeysalso{@cell content=##1}%
        \else
            \ifnum\count0=\numexpr\pgfplotstablerows-1
                \pgfkeysalso{@cell content={##1}} %
            \else
                \pgfkeysalso{@cell content={##1}} %
            \fi
        \fi
    },
},
columns/{numThreads}/.style={column type={c|},string type, column name=,
postproc cell content/.prefix code={%
        \count0=\pgfplotstablerow
        \advance\count0 by1
        \ifnum\count0=\pgfplotstablerows
            \pgfkeysalso{@cell content={##1}}%
        \else
            \ifnum\count0=\numexpr\pgfplotstablerows-1
                \pgfkeysalso{@cell content={##1}} %
            \else
                \pgfkeysalso{@cell content={##1}} %
            \fi
        \fi
    },
},
columns/{numLocations}/.style={column type={c},string type, column name=,
postproc cell content/.prefix code={%
        \count0=\pgfplotstablerow
        \advance\count0 by1
        \ifnum\count0=\pgfplotstablerows
            \pgfkeysalso{@cell content={##1}}%
        \else
            \ifnum\count0=\numexpr\pgfplotstablerows-1
                \pgfkeysalso{@cell content={##1}} %
            \else
                \pgfkeysalso{@cell content={##1}} %
            \fi
        \fi
    },
},
]{\datatable}
\label{tab:offline_profile}

%% file: sections/app_scalability.tex
\section{Controlled Scalability Experiments}\label{sec:app_scalability}

\input{sections/experiments-scalability}

%% file: sections/experiments-scalability.tex
We have conducted some controlled scalability experiments, aiming to clarify the better theoretical complexity of $\ouralgo$ for conflict-serializability.
In particular, recall that the $O(\numEvents\numThreads^2)$ time bound of $\ouralgo$ is better than the time bounds of existing algorithms (\cref{tab:intro-conflict-complexity}).
However, these bounds may be loose, in the sense that the true worst-case complexity of some algorithm might be strictly below its proven bound.
\emph{Does $\ouralgo$ truly achieve  a better worst-case complexity than existing algorithms?}

\input{figures/scalability}

\Paragraph{Benchmarks.}
To address the above question, for each of the baselines of $\velodrome$, $\aerodrome$, and $\regiontrack$, we construct a sequence of traces $(\tr_j)_j$, shown respectively in \cref{fig:sc_velodrome_trace}, \cref{fig:sc_aerodrome_trace} \cref{fig:sc_region_trace}.
Each trace family is constructed so that
\begin{enumerate*}[label=(\roman*)]
\item it is challenging for the respective baseline to process, and
\item $\ouralgo$ suffers its worst case complexity, i.e., $\Theta(\numEvents\numThreads^2)$.
\end{enumerate*}

If the speedup of $\ouralgo$ increases along $(\tr_j)_j$, we obtain strong experimental evidence that the baseline does not have equal (or smaller) complexity compared to $\ouralgo$.

\input{figures/SC_velodrome}
\input{figures/SC_aerodrome}
\input{figures/SC_region}

\Paragraph{Results.}
The results are shown in \cref{fig:scalability}.
We indeed observe that the speedup of $\ouralgo$ over each baseline increases (the precise values of the speedup are not important).
This experimentally confirms that none of the existing tools achieves a complexity of $O(\numEvents\numThreads^2)$, which is the complexity of $\ouralgo$.
We remark that this experiment \emph{does not} conclude that $\ouralgo$ is always faster (which is not the case), but rather that the \emph{worst-case} complexity gap of each baseline compared to the \emph{worst-case} complexity of $\ouralgo$ is super-constant.

%% file: figures/scalability.tex
\begin{figure*}[ht!]
\def\scaleboxvalue{0.72}
\def\plotheight{5.5cm}
\def\plotwidth{6.5cm}
\def\markersize{2.25}

\pgfplotsset{AxisStyle/.style={
inner sep=1pt,
	mark size=\markersize,
	height=\plotheight,
	width=\plotwidth,
	legend style={legend pos=north west, draw=black, legend columns=6, legend cell align={left}, font=\normalsize},
	legend style={/tikz/every even column/.append style={column sep=0.5cm}},
	ylabel={Speedup},
	xlabel style={font=\normalfont},
	ylabel style={font=\normalfont},
	xtick distance=5e3,
	ylabel near ticks,
	xlabel near ticks,
	xticklabel style={font=\normalfont},
	yticklabel style={font=\normalfont},
	grid style=dashed,
	grid=both,}
}
\def\lineWidth{1}

\pgfplotsset{
    discard if not/.style 2 args={
        x filter/.code={
            \edef\tempa{\thisrow{#1}}
            \edef\tempb{#2}
            \ifx\tempa\tempb
            \else
                \def\pgfmathresult{inf}
            \fi
        }
    }
}
\scalebox{\scaleboxvalue}{%
\begin{tikzpicture}[tight background, inner sep=0pt]
\begin{groupplot}[
group style={group size= 3 by 1, horizontal sep=1.75cm} ,
]

\nextgroupplot[AxisStyle, xlabel={$\numEvents$}, title={\large  $\ouralgo$ vs $\aerodrome$}, every x tick scale label/.style={at={(1,0)},xshift=1pt,anchor=south west,inner sep=0pt},]
\addplot[color=RoyalBlue, line width=\lineWidth, mark=*, ] table [x=events_num(n), y=scalability_vs_aerodrome, col sep=comma,] {data/SC_aerodrome.csv};

\nextgroupplot[AxisStyle, xlabel={$\numEvents$}, title={\large  $\ouralgo$ vs $\velodrome$}, xtick distance=1e4, every x tick scale label/.style={at={(1,0)},xshift=1pt,anchor=south west,inner sep=0pt},]
\addplot[color=RoyalBlue, line width=\lineWidth, mark=*, ] table [x=events_num(n), y=scalability_vs_velodrome, col sep=comma,] {data/SC_velodrome.csv};

\nextgroupplot[AxisStyle, xlabel={$\numEvents$}, title={\large  $\ouralgo$ vs $\regiontrack$}, xtick distance=1e4, every x tick scale label/.style={at={(1,0)},xshift=1pt,anchor=south west,inner sep=0pt},]
\addplot[color=RoyalBlue, line width=\lineWidth, mark=*, ] table [x=events_num(n), y=scalability_vs_RegionTrack, col sep=comma,] {data/SC_region.csv};

\end{groupplot}

\end{tikzpicture}
}
\caption{\label{fig:scalability}
Scalability comparison of $\ouralgo$ vs $\aerodrome$, $\velodrome$ and $\regiontrack$, on families of traces where $\ouralgo$ exhibits its worst case complexity of $\Theta(\numEvents \numThreads^2)$.
}
\end{figure*}

%% file: figures/SC_velodrome.tex
\begin{figure}[htbp]
    \centering
    % \begin{subfigure}[t]{\textwidth}
    \begin{tikzpicture}[xscale=2, yscale=0.35, 
        every node/.style={font=\small},
        transaction/.style={draw, dashed, thick},
        event/.style={anchor=center, inner sep=1pt},
        -, >={Triangle[width=10pt,length=5pt]}]
    
        % starting coordinates
        \def\xstart{0} % starting x coordinate
        \def\ystart{0} % starting y coordinate
        
        % width and height of the rectangle
        \def\width{3}  % width of the rectangle
        \def\height{21} % height of the rectangle
    
        % write labels for the three threads
        \node[anchor=south] at (\xstart + \width/4, \ystart + \height) {$t_1$};
        \node[anchor=south] at (\xstart + 3*\width/4, \ystart + \height) {$t_2$};
        
        % Draw the three transaction lines
        \draw[dashed] (\xstart + \width/2,\ystart) -- (\xstart + \width/2,\ystart + \height);
        
        % Draw the grid
        \foreach \y in {1,...,20} {
            \pgfmathtruncatemacro\yy{21-\y} % Compute reverse ID: 20 -> 1
            \node[left] at (\xstart,\ystart+\y) {\yy};  % Keep the coordinates, but reverse the label
        }
        
        % events in t1
        % T1: rhd, wt(x_1), wt(x_2), wt(w_3), ..., wt(n) T1: lhd
        \node[event] (e1) at (\xstart + \width/4, \ystart + \height - 1) {\bbg{T_1}};   
        \node[event] (e2) at (\xstart + \width/4, \ystart + \height - 2) {$\wt(x_1)$};
        \node[event] (e6) at (\xstart + \width/4, \ystart + \height - 6) {$\wt(x_2)$};
        \node[event] (e10) at (\xstart + \width/4, \ystart + \height - 10) {$\wt(x_3)$};
        \node[event] (e14) at (\xstart + \width/4, \ystart + \height - 14) {$\ldots$};
        \node[event] (e16) at (\xstart + \width/4, \ystart + \height - 16) {$\wt(x_{j})$};
        \node[event] (e20) at (\xstart + \width/4, \ystart + \height - 20) {\ben{T_1}};

        % events in t2
        % T2: rhd, wt(x_1), T2: lhd
        \node[event] (e3) at (\xstart + 3*\width/4, \ystart + \height - 3) {\bbg{T_2}};
        \node[event] (e4) at (\xstart + 3*\width/4, \ystart + \height - 4) {$\wt(x_1)$};
        \node[event] (e5) at (\xstart + 3*\width/4, \ystart + \height - 5) {\ben{T_2}};

        % T3: rhd, wt(x_2), T3: lhd
        \node[event] (e7) at (\xstart + 3*\width/4, \ystart + \height - 7) {\bbg{T_3}};
        \node[event] (e8) at (\xstart + 3*\width/4, \ystart + \height - 8) {$\wt(x_2)$};
        \node[event] (e9) at (\xstart + 3*\width/4, \ystart + \height - 9) {\ben{T_3}};

        % T4: rhd, wt(x_3), T4: lhd
        \node[event] (e11) at (\xstart + 3*\width/4, \ystart + \height - 11) {\bbg{T_4}};
        \node[event] (e12) at (\xstart + 3*\width/4, \ystart + \height - 12) {$\wt(x_3)$};
        \node[event] (e13) at (\xstart + 3*\width/4, \ystart + \height - 13) {\ben{T_4}};

        % ...
        \node[event] (e15) at (\xstart + 3*\width/4, \ystart + \height - 15) {$\ldots$};
        % T5: rhd, wt(n), T5: lhd
        \node[event] (e17) at (\xstart + 3*\width/4, \ystart + \height - 17) {\bbg{T_n}};
        \node[event] (e18) at (\xstart + 3*\width/4, \ystart + \height - 18) {$\wt(x_{j})$};
        \node[event] (e19) at (\xstart + 3*\width/4, \ystart + \height - 19) {\ben{T_n}};

        % Draw border
        \draw[thick] (\xstart, \ystart) rectangle ++(\width, \height);
    
        % \draw[thick] (0.5,0.5) rectangle (3.5,12.5);
    \end{tikzpicture}
    % \end{subfigure}
    \caption{
    A family of traces $(\tr_j)_j$ for the scalability experiment with $\velodrome$.
}
    \label{fig:sc_velodrome_trace}
    \Description{This figure is an example case of conflict serializable}
    \end{figure}

%% file: figures/SC_aerodrome.tex
\begin{figure*}[ht!]
    \centering
    % \begin{subfigure}[t]{\textwidth}
    \scalebox{0.9}{%
    \begin{tikzpicture}[xscale=1.7, yscale=0.30, 
        every node/.style={font=\small},
        transaction/.style={draw, dashed, thick},
        event/.style={anchor=center, inner sep=1pt},
        -, >={Triangle[width=10pt,length=5pt]}]
    
        % starting coordinates
        \def\xstart{0} % starting x coordinate
        \def\ystart{0} % starting y coordinate
        
        % width and height of the rectangle
        \def\width{7}  % width of the rectangle
        \def\height{71} % height of the rectangle
    
        % write labels for the three threads
        \node[anchor=south] at (\xstart + \width/14, \ystart + \height) {$t_1$};
        \node[anchor=south] at (\xstart + 3*\width/14, \ystart + \height) {$t_2$};
        \node[anchor=south] at (\xstart + 4*\width/14, \ystart + \height) {$\cdots$};
        \node[anchor=south] at (\xstart + 5*\width/14, \ystart + \height) {$t_{j-1}$};
        \node[anchor=south] at (\xstart + 7*\width/14, \ystart + \height) {$t_j$};
        \node[anchor=south] at (\xstart + 9*\width/14, \ystart + \height) {$t_{j+1}$};
        \node[anchor=south] at (\xstart + 11*\width/14, \ystart + \height) {$t_{j+2}$};
        \node[anchor=south] at (\xstart + 12*\width/14, \ystart + \height) {$\cdots$};
        \node[anchor=south] at (\xstart + 13*\width/14, \ystart + \height) {$t_{3j-1}$};
        
        % Draw the three transaction lines
        \draw[dashed] (\xstart + \width/7,\ystart)   -- (\xstart + \width/7,\ystart + \height);
        \draw[dashed] (\xstart + 2*\width/7,\ystart) -- (\xstart + 2*\width/7,\ystart + \height);
        \draw[dashed] (\xstart + 3*\width/7,\ystart) -- (\xstart + 3*\width/7,\ystart + \height);
        \draw[dashed] (\xstart + 4*\width/7,\ystart) -- (\xstart + 4*\width/7,\ystart + \height);
        \draw[dashed] (\xstart + 5*\width/7,\ystart) -- (\xstart + 5*\width/7,\ystart + \height);
        \draw[dashed] (\xstart + 6*\width/7,\ystart) -- (\xstart + 6*\width/7,\ystart + \height);
        
        % Draw the grid
        \foreach \y in {1,...,70} {
            \pgfmathtruncatemacro\yy{71-\y} % Compute reverse ID: 20 -> 1
            \node[left] at (\xstart,\ystart+\y) {\yy};  % Keep the coordinates, but reverse the label
        }

        %events in t_{j+1}
        % wt(z_1), ..., wt(z_{j}), T:rhd, wt(y_{j+1}), T:lhd
        \node[event] (e1) at (\xstart + 9*\width/14, \ystart + \height - 1) {$\wt(z_1)$};
        \node[event] (e2) at (\xstart + 9*\width/14, \ystart + \height - 2) {$\ldots$};
        \node[event] (e3) at (\xstart + 9*\width/14, \ystart + \height - 3) {$\wt(z_{j})$};
        \node[event] (e4) at (\xstart + 9*\width/14, \ystart + \height - 4) {\pbg};
        \node[event] (e5) at (\xstart + 9*\width/14, \ystart + \height - 5) {$\wt(y_{j+1})$};
        
        % events in t_{j+2}
        % wt(z_1), ..., wt(z_{j}), T:rhd, wt(y_{j+2}), T:lhd
        \node[event] (e6) at (\xstart + 11*\width/14, \ystart + \height - 6) {$\wt(z_1)$};
        \node[event] (e7) at (\xstart + 11*\width/14, \ystart + \height - 7) {$\ldots$};
        \node[event] (e8) at (\xstart + 11*\width/14, \ystart + \height - 8) {$\wt(z_{j})$};
        \node[event] (e9) at (\xstart + 11*\width/14, \ystart + \height - 9) {\pbg};
        \node[event] (e10) at (\xstart + 11*\width/14, \ystart + \height - 10) {$\wt(y_{j+2})$};

        % events in t_{3j-1}
        % wt(z_1), ..., wt(z_{j}), T:rhd, wt(y_{3j-1}), T_{3j-1}:lhd
        \node[event] (e11) at (\xstart + 13*\width/14, \ystart + \height - 11) {$\wt(z_1)$};
        \node[event] (e12) at (\xstart + 13*\width/14, \ystart + \height - 12) {$\ldots$};
        \node[event] (e13) at (\xstart + 13*\width/14, \ystart + \height - 13) {$\wt(z_{j})$};
        \node[event] (e14) at (\xstart + 13*\width/14, \ystart + \height - 14) {\pbg};
        \node[event] (e15) at (\xstart + 13*\width/14, \ystart + \height - 15) {$\wt(y_{3j-1})$};

        % events in t_1
        % wt(z_1), ..., wt(z_{j}), T:rhd, wt(y_{j+1}), ..., wt(y_{3j-1}), wt(1-1-1)
        \node[event] (e16) at (\xstart + \width/14, \ystart + \height - 16) {$\wt(z_1)$};
        \node[event] (e17) at (\xstart + \width/14, \ystart + \height - 17) {$\ldots$};
        \node[event] (e18) at (\xstart + \width/14, \ystart + \height - 18) {$\wt(z_{j})$};
        \node[event] (e19) at (\xstart + \width/14, \ystart + \height - 19) {\pbg};
        \node[event] (e20) at (\xstart + \width/14, \ystart + \height - 20) {$\wt(y_{j+1})$};
        \node[event] (e21) at (\xstart + \width/14, \ystart + \height - 21) {$\ldots$};
        \node[event] (e22) at (\xstart + \width/14, \ystart + \height - 22) {$\wt(y_{3j-1})$};
        \node[event] (e23) at (\xstart + \width/14, \ystart + \height - 23) {$\wt(x_{\sclabel{1}{1}{1}})$};

        % events in t_2
        % wt(z_1), ..., wt(z_{j}), T:rhd, wt(y_{j+1}), ..., wt(y_{3j-1}), wt(1-1-2)
        \node[event] (e24) at (\xstart + 3*\width/14, \ystart + \height - 24) {$\wt(z_1)$};
        \node[event] (e25) at (\xstart + 3*\width/14, \ystart + \height - 25) {$\ldots$};
        \node[event] (e26) at (\xstart + 3*\width/14, \ystart + \height - 26) {$\wt(z_{j})$};
        \node[event] (e27) at (\xstart + 3*\width/14, \ystart + \height - 27) {\pbg};
        \node[event] (e28) at (\xstart + 3*\width/14, \ystart + \height - 28) {$\wt(y_{j+1})$};
        \node[event] (e29) at (\xstart + 3*\width/14, \ystart + \height - 29) {$\ldots$};
        \node[event] (e30) at (\xstart + 3*\width/14, \ystart + \height - 30) {$\wt(y_{3j-1})$};
        \node[event] (e31) at (\xstart + 3*\width/14, \ystart + \height - 31) {$\wt(x_{\sclabel{1}{1}{2}})$};

        % events in t_{k-1}
        % wt(z_1), ..., wt(z_{j}), T:rhd, wt(y_{j+1}), ..., wt(y_{3j-1}), wt(1-1-(k-1))
        \node[event] (e32) at (\xstart + 5*\width/14, \ystart + \height - 32) {$\wt(z_1)$};
        \node[event] (e33) at (\xstart + 5*\width/14, \ystart + \height - 33) {$\ldots$};
        \node[event] (e34) at (\xstart + 5*\width/14, \ystart + \height - 34) {$\wt(z_{j})$};
        \node[event] (e35) at (\xstart + 5*\width/14, \ystart + \height - 35) {\pbg};
        \node[event] (e36) at (\xstart + 5*\width/14, \ystart + \height - 36) {$\wt(y_{j+1})$};
        \node[event] (e37) at (\xstart + 5*\width/14, \ystart + \height - 37) {$\ldots$};
        \node[event] (e38) at (\xstart + 5*\width/14, \ystart + \height - 38) {$\wt(y_{3j-1})$};
        \node[event] (e39) at (\xstart + 5*\width/14, \ystart + \height - 39) {$\wt(x_{\sclabel{1}{1}{(j\sd1)}})$};

        % events in t_k
        % T:rhd, wt(1-1-1), ..., wt(1-1-(k-1)), T:lhd
        \node[event] (e40) at (\xstart + 7*\width/14, \ystart + \height - 40) {\pbg};
        \node[event] (e41) at (\xstart + 7*\width/14, \ystart + \height - 41) {$\wt(x_{\sclabel{1}{1}{1}})$};
        \node[event] (e42) at (\xstart + 7*\width/14, \ystart + \height - 42) {$\ldots$};
        \node[event] (e43) at (\xstart + 7*\width/14, \ystart + \height - 43) {$\wt(x_{\sclabel{1}{1}{(j\sd1)}})$};
        \node[event] (e44) at (\xstart + 7*\width/14, \ystart + \height - 44) {\pen};

        % events in t_1
        % T:lhd, T:rhb, wt(1-2-1), T:lhb
        \node[event] (e45) at (\xstart + \width/14, \ystart + \height - 45) {\pen};
        \node[event] (e46) at (\xstart + \width/14, \ystart + \height - 46) {\pbg};
        \node[event] (e47) at (\xstart + \width/14, \ystart + \height - 47) {$\wt(x_{\sclabel{1}{2}{1}})$};
        \node[event] (e48) at (\xstart + \width/14, \ystart + \height - 48) {\pen};

        % events in t_2
        % T:lhd, T:rhb, wt(1-2-2), T:lhb
        \node[event] (e49) at (\xstart + 3*\width/14, \ystart + \height - 49) {\pen};
        \node[event] (e50) at (\xstart + 3*\width/14, \ystart + \height - 50) {\pbg};
        \node[event] (e51) at (\xstart + 3*\width/14, \ystart + \height - 51) {$\wt(x_{\sclabel{1}{2}{2}})$};
        \node[event] (e52) at (\xstart + 3*\width/14, \ystart + \height - 52) {\pen};

        % events in t_{k-1}
        % wt(1-2-1), ..., wt(1-2-(k-2)), T:rhb
        \node[event] (e54) at (\xstart + 5*\width/14, \ystart + \height - 54) {$\wt(x_{\sclabel{1}{2}{1}})$};
        \node[event] (e55) at (\xstart + 5*\width/14, \ystart + \height - 55) {$\ldots$};
        \node[event] (e56) at (\xstart + 5*\width/14, \ystart + \height - 56) {$\wt(x_{\sclabel{1}{2}{(j\sd2)}})$};
        \node[event] (e57) at (\xstart + 5*\width/14, \ystart + \height - 57) {\pbg};

        % events in t_2
        % ..., T:rhb, wt(v-(k-1)-2)
        \node[event] (e58) at (\xstart + 3*\width/14, \ystart + \height - 58) {$\ldots$};
        \node[event] (e59) at (\xstart + 3*\width/14, \ystart + \height - 59) {\pbg};
        \node[event] (e60) at (\xstart + 3*\width/14, \ystart + \height - 60) {$\wt(x_{\sclabel{2}{(j\sd1)}{2}})$};

        % events in t_1
        % ..., T:rhb, wt(v-k-1), T:lhb
        \node[event] (e61) at (\xstart + \width/14, \ystart + \height - 61) {$\ldots$};
        \node[event] (e62) at (\xstart + \width/14, \ystart + \height - 62) {\pbg};
        \node[event] (e63) at (\xstart + \width/14, \ystart + \height - 63) {$\wt(x_{\sclabel{2}{j}{1}})$};
        \node[event] (e64) at (\xstart + \width/14, \ystart + \height - 64) {\pen};

        % events in t_2
        % wt(v-k-1), T:lhb
        \node[event] (e65) at (\xstart + 3*\width/14, \ystart + \height - 65) {$\wt(x_{\sclabel{2}{j}{1}})$};
        \node[event] (e66) at (\xstart + 3*\width/14, \ystart + \height - 66) {\pen};

        % events in t_{j+1}
        % T:lhd
        \node[event] (e67) at (\xstart + 9*\width/14, \ystart + \height - 67) {\pen};

        % events in t_{j+2}
        % T:lhd
        \node[event] (e68) at (\xstart + 11*\width/14, \ystart + \height - 68) {\pen};

        % events in t_{3j-1}
        % T_{3j-1}:lhd
        \node[event] (e69) at (\xstart + 13*\width/14, \ystart + \height - 69) {\pen};

        % Draw border
        \draw[thick] (\xstart, \ystart) rectangle ++(\width, \height);
    
        % \draw[thick] (0.5,0.5) rectangle (3.5,12.5);
    \end{tikzpicture}
    }%
    % \end{subfigure}
    \caption{
    A family of traces $(\tr_j)_j$ for the scalability experiment with $\velodrome$.
}
    \label{fig:sc_aerodrome_trace}
    \Description{This figure is an example case of conflict serializable}
\end{figure*}

%% file: figures/SC_region.tex
\begin{figure*}[ht!]
    \centering
    % \begin{subfigure}[t]{\textwidth}
    \begin{tikzpicture}[xscale=1.8, yscale=0.30, 
        every node/.style={font=\small},
        transaction/.style={draw, dashed, thick},
        event/.style={anchor=center, inner sep=1pt},
        -, >={Triangle[width=10pt,length=5pt]}]
    
        % starting coordinates
        \def\xstart{0} % starting x coordinate
        \def\ystart{0} % starting y coordinate
        
        % width and height of the rectangle
        \def\width{7}  % width of the rectangle
        \def\height{55} % height of the rectangle
    
        % write labels for the three threads
        \node[anchor=south] at (\xstart + \width/14, \ystart + \height) {$t_1$};
        \node[anchor=south] at (\xstart + 3*\width/14, \ystart + \height) {$t_2$};
        \node[anchor=south] at (\xstart + 4*\width/14, \ystart + \height) {$...$};
        \node[anchor=south] at (\xstart + 5*\width/14, \ystart + \height) {$t_{j-1}$};
        \node[anchor=south] at (\xstart + 7*\width/14, \ystart + \height) {$t_k$};
        \node[anchor=south] at (\xstart + 9*\width/14, \ystart + \height) {$t_{j+1}$};
        \node[anchor=south] at (\xstart + 11*\width/14, \ystart + \height) {$t_{j+2}$};
        \node[anchor=south] at (\xstart + 12*\width/14, \ystart + \height) {$...$};
        \node[anchor=south] at (\xstart + 13*\width/14, \ystart + \height) {$t_{4k-1}$};
        
        % Draw the three transaction lines
        \draw[dashed] (\xstart + \width/7,\ystart)   -- (\xstart + \width/7,\ystart + \height);
        \draw[dashed] (\xstart + 2*\width/7,\ystart) -- (\xstart + 2*\width/7,\ystart + \height);
        \draw[dashed] (\xstart + 3*\width/7,\ystart) -- (\xstart + 3*\width/7,\ystart + \height);
        \draw[dashed] (\xstart + 4*\width/7,\ystart) -- (\xstart + 4*\width/7,\ystart + \height);
        \draw[dashed] (\xstart + 5*\width/7,\ystart) -- (\xstart + 5*\width/7,\ystart + \height);
        \draw[dashed] (\xstart + 6*\width/7,\ystart) -- (\xstart + 6*\width/7,\ystart + \height);
        
        % Draw the grid
        \foreach \y in {1,...,54} {
            \pgfmathtruncatemacro\yy{55-\y} % Compute reverse ID: 20 -> 1
            \node[left] at (\xstart,\ystart+\y) {\yy};  % Keep the coordinates, but reverse the label
        }

        % events in t_{j+1}
        % T:rhd, wt(x_1)
        \node[event] (e1) at (\xstart + 9*\width/14, \ystart + \height - 1) {\pbg};
        \node[event] (e2) at (\xstart + 9*\width/14, \ystart + \height - 2) {$\wt(x_1)$};

        % events in t_{j+2}
        % T:rhd, wt(x_2)
        \node[event] (e3) at (\xstart + 11*\width/14, \ystart + \height - 3) {\pbg};
        \node[event] (e4) at (\xstart + 11*\width/14, \ystart + \height - 4) {$\wt(x_2)$};

        % events in t_{2k-1}
        % T:rhd, wt(x_k)
        \node[event] (e5) at (\xstart + 13*\width/14, \ystart + \height - 5) {\pbg};
        \node[event] (e6) at (\xstart + 13*\width/14, \ystart + \height - 6) {$\wt(x_{j-1})$};

        %events in t_1
        % T:rhd, wt(x_1), ..., wt(x_k), wt(1-1-1)
        \node[event] (e7) at (\xstart + \width/14, \ystart + \height - 7) {\pbg};
        \node[event] (e8) at (\xstart + \width/14, \ystart + \height - 8) {$\wt(x_1)$};
        \node[event] (e9) at (\xstart + \width/14, \ystart + \height - 9) {$\ldots$};
        \node[event] (e10) at (\xstart + \width/14, \ystart + \height - 10) {$\wt(x_k)$};
        \node[event] (e11) at (\xstart + \width/14, \ystart + \height - 11) {$\wt(x_{\sclabel{1}{1}{1}})$};

        %events in t_2
        % T:rhd, wt(x_1), ..., wt(x_k), wt(1-1-2)
        \node[event] (e12) at (\xstart + 3*\width/14, \ystart + \height - 12) {\pbg};
        \node[event] (e13) at (\xstart + 3*\width/14, \ystart + \height - 13) {$\wt(x_1)$};
        \node[event] (e14) at (\xstart + 3*\width/14, \ystart + \height - 14) {$\ldots$};
        \node[event] (e15) at (\xstart + 3*\width/14, \ystart + \height - 15) {$\wt(x_k)$};
        \node[event] (e16) at (\xstart + 3*\width/14, \ystart + \height - 16) {$\wt(x_{\sclabel{1}{1}{2}})$};

        %events in t_{j-1}
        % T:rhd, wt(x_1), ..., wt(x_k), wt(1-1-(k-1))
        \node[event] (e17) at (\xstart + 5*\width/14, \ystart + \height - 17) {\pbg};
        \node[event] (e18) at (\xstart + 5*\width/14, \ystart + \height - 18) {$\wt(x_1)$};
        \node[event] (e19) at (\xstart + 5*\width/14, \ystart + \height - 19) {$\ldots$};
        \node[event] (e20) at (\xstart + 5*\width/14, \ystart + \height - 20) {$\wt(x_k)$};
        \node[event] (e21) at (\xstart + 5*\width/14, \ystart + \height - 21) {$\wt(x_{\sclabel{1}{1}{(j\sd1)}})$};

        %events in t_k
        % T:rhd, wt(1-1-1), ..., wt(1-1-(k-1)), T:lhd
        \node[event] (e22) at (\xstart + 7*\width/14, \ystart + \height - 22) {\pbg};
        \node[event] (e23) at (\xstart + 7*\width/14, \ystart + \height - 23) {$\wt(x_{\sclabel{1}{1}{1}})$};
        \node[event] (e24) at (\xstart + 7*\width/14, \ystart + \height - 24) {$\ldots$};
        \node[event] (e25) at (\xstart + 7*\width/14, \ystart + \height - 25) {$\wt(x_{\sclabel{1}{1}{(j\sd1)}})$};
        \node[event] (e26) at (\xstart + 7*\width/14, \ystart + \height - 26) {\pen};

        %events in t_1
        % T:lhd, T:rhd, wt(1-2-1)
        \node[event] (e27) at (\xstart + \width/14, \ystart + \height - 27) {\pen};
        \node[event] (e28) at (\xstart + \width/14, \ystart + \height - 28) {\pbg};
        \node[event] (e29) at (\xstart + \width/14, \ystart + \height - 29) {$\wt(x_{\sclabel{1}{2}{1}})$};

        %events in t_2
        % T:lhd, T:rhd, wt(1-2-2)
        \node[event] (e30) at (\xstart + 3*\width/14, \ystart + \height - 30) {\pen};
        \node[event] (e31) at (\xstart + 3*\width/14, \ystart + \height - 31) {\pbg};
        \node[event] (e32) at (\xstart + 3*\width/14, \ystart + \height - 32) {$\wt(x_{\sclabel{1}{2}{2}})$};

        %events in t_{j-1}
        % T:lhd, T:rhd, wt(1-2-1), ..., wt(1-2-(k-2)), T:lhd
        \node[event] (e33) at (\xstart + 5*\width/14, \ystart + \height - 33) {$\wt(x_{\sclabel{1}{2}{1}})$};
        \node[event] (e34) at (\xstart + 5*\width/14, \ystart + \height - 34) {$\ldots$};
        \node[event] (e35) at (\xstart + 5*\width/14, \ystart + \height - 35) {$\wt(x_{\sclabel{1}{2}{(j\sd2)}})$};
        \node[event] (e36) at (\xstart + 5*\width/14, \ystart + \height - 36) {\pen};

        %events in t_1
        % T:lhd, T:rhd, ...
        \node[event] (e37) at (\xstart + \width/14, \ystart + \height - 37) {\pen};
        \node[event] (e38) at (\xstart + \width/14, \ystart + \height - 38) {\pbg};
        \node[event] (e39) at (\xstart + \width/14, \ystart + \height - 39) {$\ldots$};

        %events in t_2
        % T:lhd, T:rhd, ...
        \node[event] (e40) at (\xstart + 3*\width/14, \ystart + \height - 40) {\pen};
        \node[event] (e41) at (\xstart + 3*\width/14, \ystart + \height - 41) {\pbg};
        \node[event] (e42) at (\xstart + 3*\width/14, \ystart + \height - 42) {$\ldots$};

        %events in t_2
        % T:lhd, T:rhd, wt(v-(k-1)-2)
        \node[event] (e43) at (\xstart + 3*\width/14, \ystart + \height - 43) {\pen};
        \node[event] (e44) at (\xstart + 3*\width/14, \ystart + \height - 44) {\pbg};
        \node[event] (e45) at (\xstart + 3*\width/14, \ystart + \height - 45) {$\wt(x_{\sclabel{2}{(j\sd1)}{2}})$};

        %events in t_1
        % T:lhd, T:rhd, wt(v-k-1), T:lhd
        \node[event] (e46) at (\xstart + \width/14, \ystart + \height - 46) {\pen};
        \node[event] (e47) at (\xstart + \width/14, \ystart + \height - 47) {\pbg};
        \node[event] (e48) at (\xstart + \width/14, \ystart + \height - 48) {$\wt(x_{\sclabel{2}{j}{1}})$};
        \node[event] (e49) at (\xstart + \width/14, \ystart + \height - 49) {\pen};

        %events in t_2
        % wt(v-k-1), T:lhd
        \node[event] (e50) at (\xstart + 3*\width/14, \ystart + \height - 50) {$\wt(x_{\sclabel{2}{j}{1}})$};
        \node[event] (e51) at (\xstart + 3*\width/14, \ystart + \height - 51) {\pen};

        %events in t_{j+1}
        % T:lhd
        \node[event] (e52) at (\xstart + 9*\width/14, \ystart + \height - 52) {\pen};

        %events in t_{j+2}
        % T:lhd
        \node[event] (e53) at (\xstart + 11*\width/14, \ystart + \height - 53) {\pen};

        %events in t_{2k-1}
        % T:lhd
        \node[event] (e54) at (\xstart + 13*\width/14, \ystart + \height - 54) {\pen};

        % Draw border
        \draw[thick] (\xstart, \ystart) rectangle ++(\width, \height);
    
        % \draw[thick] (0.5,0.5) rectangle (3.5,12.5);
    \end{tikzpicture}
    % \end{subfigure}
    \caption{
    A family of traces $(\tr_j)_j$ for the scalability experiment with $\velodrome$.
}
    \label{fig:sc_region_trace}
    \Description{This figure is an example case of conflict serializable}
\end{figure*}

%% file: sections/app_doublechecker.tex
\section{Experiments on DoubleChecker}\label{sec:app_doublechecker}

We have also implemented the first phase of $\doublechecker$, which performs the \emph{Imprecise cycle detection} (ICD).
In particular, it uses $\octet$~\cite{Bond-2013}, a software based concurrency control mechanism, to maintain the imprecise happens-before relations.
With this technique, the ICD phase does not record the last write or read event of each memory location.
This allows it to process each event faster, while maintaining a conflict graph that overapproximates $\thb{\tr}$.
If a cycle is detected in this graph, $\doublechecker$ runs $\velodrome$ from scratch, to try and verify the cycle.
Since $\doublechecker$ does not make precise violation reports, we have not used in our main experimental setting.
Here we report how the ICD phase $\doublechecker$ performs in our benchmark set.

The results are shown in \cref{tab:dbc_offline}.
We see that ICD is overall considerably faster than $\velodrome$.
For some benchmarks, $\doublechecker$ stops earlier than other tools, because it will conservatively insert unnecessary edges into the transactional happens-before graph, which may introduce cycles.
Note, however, that ICD reports violations on all benchmarks except two (\texttt{philo} and \texttt{fop}), and has a high false positive rate.
Rerunning $\velodrome$ on all the positive reports of ICD clearly results in running times that are far larger than $\ouralgo$, as shown in \cref{tab:offline}.

\input{tables/dbc_offline.tex}

%% file: tables/dbc_offline.tex
\begin{table}[htbp]
  \centering
  \setlength\tabcolsep{1.3pt}
  \def\angle{45}
  \caption{
    Offline experiments of the first phase of $\doublechecker$ on conflict-serializability.
  }
  \pgfplotstableread[col sep=comma]{data/CS_dbc_Offline.csv}\datatable
  \small
  \setlength\tabcolsep{1.4pt}
  \pgfplotstabletypeset[%
    columns={Benchmarks,doublechecker(ms),violation_reported,true_positive},
    assign column name/.style={/pgfplots/table/column name={\textbf{#1}}},
    % every head row/.style={before row=\toprule, after row=\midrule, font=bold},
    every head row/.style={before row=, after row=\midrule, output empty row},
    every last row/.style={after row=\bottomrule},
    columns/{Benchmarks}/.style={
      column type={l|},
      string type,
      column name=,
      postproc cell content/.prefix code={%
        \pgfkeyssetvalue{/pgfplots/table/@cell content}%
        {\ttfamily ##1}%
      },
    },
    every head row/.append style={
      before row={
        \midrule
        \textbf{Benchmark} & \textbf{Conflict Serializability (ms)} & \textbf{Violation Reported?} & \textbf{True Positive?} \\
        % \cline{5-10}
      }
    },
    columns/{doublechecker(ms)}/.style={
      column type={r|},
      string type,
      postproc cell content/.append code={
        \count0=\pgfplotstablerow
        \advance\count0 by1
        \ifnum\count0=\pgfplotstablerows
          \pgfkeysalso{@cell content=$\mathbf{\pgfmathprintnumber[assume math mode=true, fixed, fixed zerofill, precision=0]{##1}}$}%
        \else
          \ifnum\count0=\numexpr\pgfplotstablerows-1
            % \pgfkeysalso{@cell content=$\mathbf{\pgfmathprintnumber[assume math mode=true, fixed, fixed zerofill, precision=0]{##1}}$}%
            \pgfkeysalso{@cell content=${\pgfmathprintnumber[assume math mode=true, fixed, fixed zerofill, precision=0]{##1}}$}%
          \else
            \pgfkeysalso{@cell content=${\pgfmathprintnumber[assume math mode=true, fixed, fixed zerofill, precision=0]{##1}}$}%
          \fi
        \fi
      },
    },
    columns/{violation_reported}/.style={
      column type={c|},
      string type,
      postproc cell content/.append code={
        \count0=\pgfplotstablerow
        \advance\count0 by1
        \ifnum\count0=\pgfplotstablerows
          \pgfkeysalso{@cell content=\textbf{-}}%
        \else
          \edef\temp{##1}%
          \def\tempone{1}%
          \ifx\temp\tempone
            \pgfkeysalso{@cell content={$\checkmark$}}%
          \else
            \pgfkeysalso{@cell content={$\times$}}%
          \fi
        \fi
      },
    },
    columns/{true_positive}/.style={
      column type={c},
      string type,
      postproc cell content/.append code={
        \count0=\pgfplotstablerow
        \advance\count0 by1
        \ifnum\count0=\pgfplotstablerows
          \pgfkeysalso{@cell content=\textbf{-}}%
        \else
          \edef\temp{##1}%
          \def\tempone{1}%
          \def\tempzero{0}%
          \ifx\temp\tempone
            \pgfkeysalso{@cell content={$\checkmark$}}%
          \else
            \ifx\temp\tempzero
              \pgfkeysalso{@cell content={$\times$}}%
            \else
              \pgfkeysalso{@cell content={-}}%
            \fi
          \fi
        \fi
      },
    },
    every row no 16/.style={
      after row={\midrule},
    },
    highlight/.append style={
      postproc cell content/.append code={
        \pgfkeysalso{@cell content=\textbf{##1}}%
      },
    },
    highlight last two rows/.style={
      postproc cell content/.append code={
        \count0=\pgfplotstablerow
        \advance\count0 by1
        \ifnum\count0=\pgfplotstablerows
          \pgfkeysalso{@cell content=\textbf{##1}}%
        \fi
      },
    },
    highlight last two rows,
  ]{\datatable}

  \label{tab:dbc_offline}

\end{table}

%% file: sections/app_aerodrome.tex
\section{Version of Aerodrome in our Experiments}\label{sec:app_aerodrome}

$\RAPID$ implements an optimized version of $\aerodrome$, following Algorithm 3 in the technical report of \cite{Mathur_2020}.
We found that this optimized algorithm has a correctness issue, failing to detect conflict-serializability violations in some traces (see \cref{fig:aerodrome_counter_example}).
We reported this issue to the main maintainer of $\RAPID$, who provided a fix.
Although the fix does not affect the worst-case complexity of $\aerodrome$, we observed that it incurs a performance slowdown in practice, compared to the original optimized algorithm.
\input{figures/aerodrome_counter_example}

%% file: figures/aerodrome_counter_example.tex
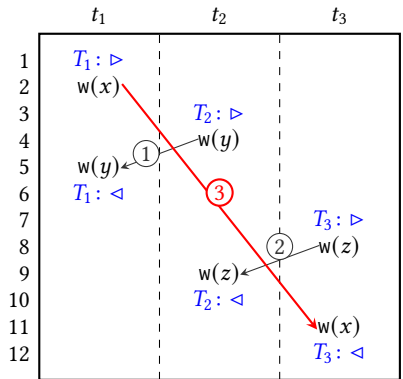
\begin{figure}[htbp]
    \centering
    \begin{tikzpicture}[xscale=1.6, yscale=0.35, 
        every node/.style={font=\small},
        transaction/.style={draw, dashed, thick},
        event/.style={anchor=center, inner sep=1pt},
        -, >={Triangle[width=10pt,length=5pt]}]
    
        % starting coordinates
        \def\xstart{0} % starting x coordinate
        \def\ystart{0} % starting y coordinate
        
        % width and height of the rectangle
        \def\width{3}  % width of the rectangle
        \def\height{13} % height of the rectangle
    
        % write labels for the three threads
        \node[anchor=south] at (\xstart + \width/6, \ystart + \height) {$t_1$};
        \node[anchor=south] at (\xstart + \width/2, \ystart + \height) {$t_2$};
        \node[anchor=south] at (\xstart + 5*\width/6, \ystart + \height) {$t_3$};
        
        % Draw the three transaction lines
        \draw[dashed] (\xstart + \width/3,\ystart) -- (\xstart + \width/3,\ystart + \height);
        \draw[dashed] (\xstart + 2*\width/3, \ystart) -- (\xstart + 2*\width/3, \ystart + \height);
        
        % Draw the grid
        \foreach \y in {1,...,12} {
            \pgfmathtruncatemacro\yy{13-\y} % Compute reverse ID: 14 -> 1
            \node[left] at (\xstart,\ystart+\y) {\yy};  % Keep the coordinates, but reverse the label
        }
        
        % events in t1
        \node[event] (e1) at (\xstart + \width/6, \ystart + \height - 1) {\bbg{T_1}};   
        \node[event] (e2) at (\xstart + \width/6, \ystart + \height - 2) {$\wt(x)$};
        \node[event] (e5) at (\xstart + \width/6, \ystart + \height - 5) {$\wt(y)$};
        \node[event] (e6) at (\xstart + \width/6, \ystart + \height - 6) {\ben{T_1}};

        % events in t2
        \node[event] (e3) at (\xstart + \width/2, \ystart + \height - 3) {\bbg{T_2}};
        \node[event] (e4) at (\xstart + \width/2, \ystart + \height - 4) {$\wt(y)$};
        \node[event] (e9) at (\xstart + \width/2, \ystart + \height - 9) {$\wt(z)$};
        \node[event] (e10) at (\xstart + \width/2, \ystart + \height - 10) {\ben{T_2}};

        % events in t3
        \node[event] (e7) at (\xstart + 5*\width/6, \ystart + \height - 7) {\bbg{T_3}};
        \node[event] (e8) at (\xstart + 5*\width/6, \ystart + \height - 8) {$\wt(z)$};
        \node[event] (e11) at (\xstart + 5*\width/6, \ystart + \height - 11) {$\wt(x)$};
        \node[event] (e12) at (\xstart + 5*\width/6, \ystart + \height - 12) {\ben{T_3}};

        % e4 -> e5
        \draw[po] (e4.west) to node[above, circled, left]{1} (e5.east);
        % e8 -> e9
        \draw[po] (e8.west) to node[above, circled]{2} (e9.east);
        % e2 -> e11
        \draw[vio, bend left=0] (e2.east) to node[above, circled]{3} (e11.west);

        % e1 -> e9
        %\draw[thb,dash pattern=on 5pt off 5pt, bend left=6] (e1.east) to node[right, circled]{3} (e9.west);
        
        % Draw border
        \draw[thick] (\xstart, \ystart) rectangle ++(\width, \height);
    
        % \draw[thick] (0.5,0.5) rectangle (3.5,12.5);
    \end{tikzpicture}
    \caption{
    A conflict-serializability violation missed by the original optimized version of $\aerodrome$ in $\RAPID$.
    }
    \label{fig:aerodrome_counter_example}
    \Description{This figure is an counter example of $\aerodrome$}
\end{figure}

%% file: references.bib
@inproceedings{Clark2024,
  title = {Validating {{Database System Isolation Level Implementations}} with {{Version Certificate Recovery}}},
  booktitle = {Proceedings of the {{Nineteenth European Conference}} on {{Computer Systems}}},
  author = {Clark, Jack and Donaldson, Alastair F. and Wickerson, John and Rigger, Manuel},
  year = 2024,
  month = apr,
  series = {{{EuroSys}} '24},
  pages = {754--768},
  publisher = {Association for Computing Machinery},
  address = {New York, NY, USA},
  doi = {10.1145/3627703.3650080},
  url = {https://dl.acm.org/doi/10.1145/3627703.3650080},
  isbn = {979-8-4007-0437-6}
}

@article{Biswas2019,
  title = {On the Complexity of Checking Transactional Consistency},
  author = {Biswas, Ranadeep and Enea, Constantin},
  year = 2019,
  month = oct,
  journal = {Proceedings of the ACM on Programming Languages},
  volume = {3},
  number = {OOPSLA},
  pages = {165:1--165:28},
  doi = {10.1145/3360591},
  url = {https://dl.acm.org/doi/10.1145/3360591}
}

@article{Moldrup2025,
  title = {{{AWDIT}}: {{An Optimal Weak Database Isolation Tester}}},
  shorttitle = {{{AWDIT}}},
  author = {M{\o}ldrup, Lasse and Pavlogiannis, Andreas},
  year = 2025,
  month = jun,
  journal = {Artifact for "AWDIT: An Optimal Weak Database Isolation Tester"},
  volume = {9},
  number = {PLDI},
  pages = {209:1540--209:1564},
  doi = {10.1145/3742465},
  url = {https://dl.acm.org/doi/10.1145/3742465}
}

@article{Gibbons2002,
  title = {Black-{{Box Correctness Tests}} for {{Basic Parallel Data Structures}}},
  author = {Gibbons, Phillip B. and Bruno, John L. and Phillips, Steven},
  year = {2002},
  month = aug,
  journal = {Theory of Computing Systems},
  volume = {35},
  number = {4},
  pages = {391--432},
  issn = {1433-0490},
  doi = {10.1007/s00224-002-1046-6},
  url = {https://doi.org/10.1007/s00224-002-1046-6}
}

@article{Emmi2017,
author = {Emmi, Michael and Enea, Constantin},
title = {Sound, complete, and tractable linearizability monitoring for concurrent collections},
year = {2017},
issue_date = {January 2018},
publisher = {Association for Computing Machinery},
address = {New York, NY, USA},
volume = {2},
number = {POPL},
url = {https://doi.org/10.1145/3158113},
doi = {10.1145/3158113},
journal = {Proc. ACM Program. Lang.},
month = dec,
articleno = {25},
numpages = {27}
}

@inproceedings{Henzinger2015,
  title = {Unifying and {{Strengthening Hardness}} for {{Dynamic Problems}} via the {{Online Matrix-Vector Multiplication Conjecture}}},
  booktitle = {Proceedings of the Forty-Seventh Annual {{ACM}} Symposium on {{Theory}} of {{Computing}}},
  author = {Henzinger, Monika and Krinninger, Sebastian and Nanongkai, Danupon and Saranurak, Thatchaphol},
  year = {2015},
  month = jun,
  series = {{{STOC}} '15},
  pages = {21--30},
  publisher = {Association for Computing Machinery},
  address = {New York, NY, USA},
  doi = {10.1145/2746539.2746609},
  isbn = {978-1-4503-3536-2}
}

@article{vonPraun-2003,
author = {von Praun, Christoph and Gross, Thomas R.},
title = {Static conflict analysis for multi-threaded object-oriented programs},
year = {2003},
issue_date = {May 2003},
publisher = {Association for Computing Machinery},
address = {New York, NY, USA},
volume = {38},
number = {5},
issn = {0362-1340},
url = {https://doi.org/10.1145/780822.781145},
doi = {10.1145/780822.781145},
journal = {SIGPLAN Not.},
month = may,
pages = {115–128},
numpages = {14}
}

@inproceedings{Lidbury-2017,
author = {Lidbury, Christopher and Donaldson, Alastair F.},
title = {Dynamic race detection for C++11},
year = {2017},
isbn = {9781450346603},
publisher = {Association for Computing Machinery},
address = {New York, NY, USA},
url = {https://doi.org/10.1145/3009837.3009857},
doi = {10.1145/3009837.3009857},
booktitle = {Proceedings of the 44th ACM SIGPLAN Symposium on Principles of Programming Languages},
pages = {443–457},
numpages = {15},
location = {Paris, France},
series = {POPL '17}
}

@article{Bond-2013,
author = {Bond, Michael D. and Kulkarni, Milind and Cao, Man and Zhang, Minjia and Fathi Salmi, Meisam and Biswas, Swarnendu and Sengupta, Aritra and Huang, Jipeng},
title = {OCTET: capturing and controlling cross-thread dependences efficiently},
year = {2013},
issue_date = {October 2013},
publisher = {Association for Computing Machinery},
address = {New York, NY, USA},
volume = {48},
number = {10},
issn = {0362-1340},
url = {https://doi.org/10.1145/2544173.2509519},
doi = {10.1145/2544173.2509519},
journal = {SIGPLAN Not.},
month = oct,
pages = {693–712},
numpages = {20}
}

@article{Bond-2006,
author = {Bond, Michael D. and McKinley, Kathryn S.},
title = {Bell: bit-encoding online memory leak detection},
year = {2006},
issue_date = {November 2006},
publisher = {Association for Computing Machinery},
address = {New York, NY, USA},
volume = {41},
number = {11},
issn = {0362-1340},
url = {https://doi.org/10.1145/1168918.1168866},
doi = {10.1145/1168918.1168866},
journal = {SIGPLAN Not.},
month = oct,
pages = {61–72},
numpages = {12}
}

@inproceedings{Kulkarni-2021a,
  title = {Dynamic {{Data-Race Detection Through}} the {{Fine-Grained Lens}}},
  booktitle = {{{DROPS-IDN}}/v2/Document/10.4230/{{LIPIcs}}.{{CONCUR}}.2021.16},
  author = {Kulkarni, Rucha and Mathur, Umang and Pavlogiannis, Andreas},
  year = {2021},
  publisher = {Schloss Dagstuhl -- Leibniz-Zentrum f{\"u}r Informatik},
  doi = {10.4230/LIPIcs.CONCUR.2021.16},
  copyright = {https://creativecommons.org/licenses/by/4.0/legalcode}
}

@inproceedings{Gorjiara-2022,
author = {Gorjiara, Hamed and Xu, Guoqing Harry and Demsky, Brian},
title = {Yashme: detecting persistency races},
year = {2022},
isbn = {9781450392051},
publisher = {Association for Computing Machinery},
address = {New York, NY, USA},
url = {https://doi.org/10.1145/3503222.3507766},
doi = {10.1145/3503222.3507766},
booktitle = {Proceedings of the 27th ACM International Conference on Architectural Support for Programming Languages and Operating Systems},
pages = {830–845},
numpages = {16},
location = {Lausanne, Switzerland},
series = {ASPLOS '22}
}

@article{Alur-2000,
title = {Model-Checking of Correctness Conditions for Concurrent Objects},
journal = {Information and Computation},
volume = {160},
number = {1},
pages = {167-188},
year = {2000},
issn = {0890-5401},
doi = {https://doi.org/10.1006/inco.1999.2847},
url = {https://www.sciencedirect.com/science/article/pii/S089054019992847X},
author = {Rajeev Alur and Ken McMillan and Doron Peled}
}

@inproceedings{Farzan-2008,
  title = {Monitoring {{Atomicity}} in {{Concurrent Programs}}},
  booktitle = {Computer {{Aided Verification}}},
  author = {Farzan, Azadeh and Madhusudan, P.},
  editor = {Gupta, Aarti and Malik, Sharad},
  year = {2008},
  pages = {52--65},
  publisher = {Springer},
  address = {Berlin, Heidelberg},
  doi = {10.1007/978-3-540-70545-1_8},
  isbn = {978-3-540-70545-1}
}

@inproceedings{Mathur-2022,
author = {Mathur, Umang and Pavlogiannis, Andreas and Tun\c{c}, H\"{u}nkar Can and Viswanathan, Mahesh},
title = {A Tree Clock Data Structure for Causal Orderings in Concurrent Executions},
year = {2022},
isbn = {9781450392051},
publisher = {Association for Computing Machinery},
address = {New York, NY, USA},
url = {https://doi.org/10.1145/3503222.3507734},
doi = {10.1145/3503222.3507734},
booktitle = {Proceedings of the 27th ACM International Conference on Architectural Support for Programming Languages and Operating Systems},
pages = {710–725},
numpages = {16},
location = {Lausanne, Switzerland},
series = {ASPLOS '22}
}

@article{Pavlogiannis-2020,
author = {Pavlogiannis, Andreas},
title = {Fast, Sound, and Effectively Complete Dynamic Race Prediction},
year = {2020},
issue_date = {January 2020},
publisher = {Association for Computing Machinery},
address = {New York, NY, USA},
volume = {4},
number = {POPL},
url = {https://doi.org/10.1145/3371085},
doi = {10.1145/3371085},
journal = {Proc. ACM Program. Lang.},
articleno = {17},
numpages = {29}
}

@article{Flanagan2008,
author = {Flanagan, Cormac and Freund, Stephen N. and Lifshin, Marina and Qadeer, Shaz},
title = {Types for atomicity: Static checking and inference for Java},
year = {2008},
issue_date = {July 2008},
publisher = {Association for Computing Machinery},
address = {New York, NY, USA},
volume = {30},
number = {4},
issn = {0164-0925},
url = {https://doi.org/10.1145/1377492.1377495},
doi = {10.1145/1377492.1377495},
journal = {ACM Trans. Program. Lang. Syst.},
month = aug,
articleno = {20},
numpages = {53}
}

@inproceedings{Sasturkar2005,
author = {Sasturkar, Amit and Agarwal, Rahul and Wang, Liqiang and Stoller, Scott D.},
title = {Automated type-based analysis of data races and atomicity},
year = {2005},
isbn = {1595930809},
publisher = {Association for Computing Machinery},
address = {New York, NY, USA},
url = {https://doi.org/10.1145/1065944.1065956},
doi = {10.1145/1065944.1065956},
booktitle = {Proceedings of the Tenth ACM SIGPLAN Symposium on Principles and Practice of Parallel Programming},
pages = {83–94},
numpages = {12},
location = {Chicago, IL, USA},
series = {PPoPP '05}
}

@article{Flanagan2003,
author = {Flanagan, Cormac and Qadeer, Shaz},
title = {A type and effect system for atomicity},
year = {2003},
issue_date = {May 2003},
publisher = {Association for Computing Machinery},
address = {New York, NY, USA},
volume = {38},
number = {5},
issn = {0362-1340},
url = {https://doi.org/10.1145/780822.781169},
doi = {10.1145/780822.781169},
journal = {SIGPLAN Not.},
month = may,
pages = {338–349},
numpages = {12}
}

@inproceedings{Musuvathi-2008,
	author = {Musuvathi, Madanlal and Qadeer, Shaz and Ball, Thomas and Basler, Gerard and Nainar, Piramanayagam Arumuga and Neamtiu, Iulian},
	title = {Finding and Reproducing Heisenbugs in Concurrent Programs},
	year = {2008},
	publisher = {USENIX Association},
	address = {USA},
	booktitle = {Proceedings of the 8th USENIX Conference on Operating Systems Design and Implementation},
	pages = {267–280},
	numpages = {14},
  url = {https://doi.org/10.5555/1855741.1855760},
  doi = {10.5555/1855741.1855760},
	location = {San Diego, California},
	series = {OSDI'08}
}

@article{Papadimitriou1979a,
  title = {The Serializability of Concurrent Database Updates},
  author = {Papadimitriou, Christos H.},
  year = {1979},
  month = oct,
  journal = {Journal of the ACM},
  volume = {26},
  number = {4},
  pages = {631--653},
  issn = {0004-5411, 1557-735X},
  doi = {10.1145/322154.322158},
  url = {https://dl.acm.org/doi/10.1145/322154.322158}
}

@inproceedings{Flanagan-2009,
	author = {Flanagan, Cormac and Freund, Stephen N.},
	title = {FastTrack: Efficient and Precise Dynamic Race Detection},
	year = {2009},
	isbn = {9781605583921},
	publisher = {Association for Computing Machinery},
	address = {New York, NY, USA},
	url = {https://doi.org/10.1145/1542476.1542490},
	doi = {10.1145/1542476.1542490},
	booktitle = {Proceedings of the 30th ACM SIGPLAN Conference on Programming Language Design and Implementation},
	pages = {121–133},
	numpages = {13},
	location = {Dublin, Ireland},
	series = {PLDI '09}
}

@article{Flanagan-2004,
author = {Flanagan, Cormac and Freund, Stephen N},
title = {Atomizer: a dynamic atomicity checker for multithreaded programs},
year = {2004},
issue_date = {January 2004},
publisher = {Association for Computing Machinery},
address = {New York, NY, USA},
volume = {39},
number = {1},
issn = {0362-1340},
url = {https://doi.org/10.1145/982962.964023},
doi = {10.1145/982962.964023},
journal = {SIGPLAN Not.},
month = jan,
pages = {256–267},
numpages = {12}
}

@inproceedings{Serebryany-2009,
	author = {Serebryany, Konstantin and Iskhodzhanov, Timur},
	title = {ThreadSanitizer: Data Race Detection in Practice},
	year = {2009},
	isbn = {9781605587936},
	publisher = {Association for Computing Machinery},
	address = {New York, NY, USA},
	url = {https://doi.org/10.1145/1791194.1791203},
	doi = {10.1145/1791194.1791203},
	booktitle = {Proceedings of the Workshop on Binary Instrumentation and Applications},
	pages = {62–71},
	numpages = {10},
	location = {New York, New York, USA},
	series = {WBIA '09}
}

@inproceedings{Roemer-2020,
  title = {{{SmartTrack}}: Efficient Predictive Race Detection},
  shorttitle = {{{SmartTrack}}},
  booktitle = {Proceedings of the 41st {{ACM SIGPLAN Conference}} on {{Programming Language Design}} and {{Implementation}}},
  author = {Roemer, Jake and Gen{\c c}, Kaan and Bond, Michael D.},
  year = {2020},
  month = jun,
  series = {{{PLDI}} 2020},
  pages = {747--762},
  publisher = {{Association for Computing Machinery}},
  address = {{New York, NY, USA}},
  doi = {10.1145/3385412.3385993},
  url = {https://dl.acm.org/doi/10.1145/3385412.3385993},
  isbn = {978-1-4503-7613-6}
}

@inproceedings{Flanagan-2008,
	author = {Flanagan, Cormac and Freund, Stephen N. and Yi, Jaeheon},
	title = {Velodrome: A Sound and Complete Dynamic Atomicity Checker for Multithreaded Programs},
	year = {2008},
	isbn = {9781595938602},
	publisher = {Association for Computing Machinery},
	address = {New York, NY, USA},
	url = {https://doi.org/10.1145/1375581.1375618},
	doi = {10.1145/1375581.1375618},
	booktitle = {Proceedings of the 29th ACM SIGPLAN Conference on Programming Language Design and Implementation},
	pages = {293–303},
	numpages = {11},
	location = {Tucson, AZ, USA},
	series = {PLDI '08}
}

@inproceedings{Mathur-2020c,
  title = {The {{Complexity}} of {{Dynamic Data Race Prediction}}},
  booktitle = {Proceedings of the 35th {{Annual ACM}}/{{IEEE Symposium}} on {{Logic}} in {{Computer Science}}},
  author = {Mathur, Umang and Pavlogiannis, Andreas and Viswanathan, Mahesh},
  year = {2020},
  month = jul,
  pages = {713--727},
  publisher = {ACM},
  address = {Saarbr{\"u}cken Germany},
  doi = {10.1145/3373718.3394783},
  url = {https://dl.acm.org/doi/10.1145/3373718.3394783},
  isbn = {978-1-4503-7104-9}
}

@article{Ma-2021,
  title = {{{RegionTrack}}: {{A Trace-Based Sound}} and {{Complete Checker}} to {{Debug Transactional Atomicity Violations}} and {{Non-Serializable Traces}}},
  shorttitle = {{{RegionTrack}}},
  author = {Ma, Xiaoxue and Wu, Shangru and Pobee, Ernest and Mei, Xiupei and Zhang, Hao and Jiang, Bo and Chan, Wing-Kwong},
  year = {2021},
  month = dec,
  journal = {ACM Transactions on Software Engineering and Methodology},
  volume = {30},
  number = {1},
  pages = {7:1--7:49},
  issn = {1049-331X},
  doi = {10.1145/3412377},
  url = {https://dl.acm.org/doi/10.1145/3412377}
}

@article{Tunc-2023,
author = {Tun\c{c}, H\"{u}nkar Can and Mathur, Umang and Pavlogiannis, Andreas and Viswanathan, Mahesh},
title = {Sound Dynamic Deadlock Prediction in Linear Time},
year = {2023},
issue_date = {June 2023},
publisher = {Association for Computing Machinery},
address = {New York, NY, USA},
volume = {7},
number = {PLDI},
url = {https://doi.org/10.1145/3591291},
doi = {10.1145/3591291},
journal = {Proc. ACM Program. Lang.},
month = {jun},
articleno = {177},
numpages = {26}
}

@inproceedings{Biswas-2014,
author = {Biswas, Swarnendu and Huang, Jipeng and Sengupta, Aritra and Bond, Michael D.},
title = {DoubleChecker: Efficient Sound and Precise Atomicity Checking},
year = {2014},
isbn = {9781450327848},
publisher = {Association for Computing Machinery},
address = {New York, NY, USA},
url = {https://doi.org/10.1145/2594291.2594323},
doi = {10.1145/2594291.2594323},
booktitle = {Proceedings of the 35th ACM SIGPLAN Conference on Programming Language Design and Implementation},
pages = {28–39},
numpages = {12},
location = {Edinburgh, United Kingdom},
series = {PLDI '14}
}

@inproceedings{Mathur-2020,
author = {Mathur, Umang and Viswanathan, Mahesh},
title = {Atomicity Checking in Linear Time Using Vector Clocks},
year = {2020},
isbn = {9781450371025},
publisher = {Association for Computing Machinery},
address = {New York, NY, USA},
url = {https://doi.org/10.1145/3373376.3378475},
doi = {10.1145/3373376.3378475},
booktitle = {Proceedings of the Twenty-Fifth International Conference on Architectural Support for Programming Languages and Operating Systems},
pages = {183–199},
numpages = {17},
location = {Lausanne, Switzerland},
series = {ASPLOS '20}
}

@inproceedings{Tunc-2024,
author = {Tun\c{c}, H\"{u}nkar Can and Deshmukh, Ameya Prashant and Cirisci, Berk and Enea, Constantin and Pavlogiannis, Andreas},
title = {CSSTs: A Dynamic Data Structure for Partial Orders in Concurrent Execution Analysis},
year = {2024},
isbn = {9798400703867},
publisher = {Association for Computing Machinery},
address = {New York, NY, USA},
doi = {10.1145/3620666.3651358},
booktitle = {Proceedings of the 29th ACM International Conference on Architectural Support for Programming Languages and Operating Systems, Volume 3},
pages = {223–238},
numpages = {16},
location = {La Jolla, CA, USA},
series = {ASPLOS '24}
}

@inproceedings{Lu-2008,
author = {Lu, Shan and Park, Soyeon and Seo, Eunsoo and Zhou, Yuanyuan},
title = {Learning from mistakes: a comprehensive study on real world concurrency bug characteristics},
year = {2008},
isbn = {9781595939586},
publisher = {Association for Computing Machinery},
address = {New York, NY, USA},
doi = {10.1145/1346281.1346323},
booktitle = {Proceedings of the 13th International Conference on Architectural Support for Programming Languages and Operating Systems},
pages = {329–339},
numpages = {11},
location = {Seattle, WA, USA},
series = {ASPLOS XIII}
}

@inproceedings{Lu-2006,
author = {Lu, Shan and Tucek, Joseph and Qin, Feng and Zhou, Yuanyuan},
title = {AVIO: detecting atomicity violations via access interleaving invariants},
year = {2006},
isbn = {1595934510},
publisher = {Association for Computing Machinery},
address = {New York, NY, USA},
doi = {10.1145/1168857.1168864},
booktitle = {Proceedings of the 12th International Conference on Architectural Support for Programming Languages and Operating Systems},
pages = {37–48},
numpages = {12},
location = {San Jose, California, USA},
series = {ASPLOS XII}
}

@inproceedings{Park-2009,
author = {Park, Soyeon and Lu, Shan and Zhou, Yuanyuan},
title = {CTrigger: exposing atomicity violation bugs from their hiding places},
year = {2009},
isbn = {9781605584065},
publisher = {Association for Computing Machinery},
address = {New York, NY, USA},
doi = {10.1145/1508244.1508249},
booktitle = {Proceedings of the 14th International Conference on Architectural Support for Programming Languages and Operating Systems},
pages = {25–36},
numpages = {12},
location = {Washington, DC, USA},
series = {ASPLOS XIV}
}

@inproceedings{Park-2008,
author = {Park, Chang-Seo and Sen, Koushik},
title = {Randomized active atomicity violation detection in concurrent programs},
year = {2008},
isbn = {9781595939951},
publisher = {Association for Computing Machinery},
address = {New York, NY, USA},
url = {https://doi.org/10.1145/1453101.1453121},
doi = {10.1145/1453101.1453121},
booktitle = {Proceedings of the 16th ACM SIGSOFT International Symposium on Foundations of Software Engineering},
pages = {135–145},
numpages = {11},
location = {Atlanta, Georgia},
series = {SIGSOFT '08/FSE-16}
}

@inproceedings{Sorrentino-2010,
author = {Sorrentino, Francesco and Farzan, Azadeh and Madhusudan, P.},
title = {PENELOPE: weaving threads to expose atomicity violations},
year = {2010},
isbn = {9781605587912},
publisher = {Association for Computing Machinery},
address = {New York, NY, USA},
url = {https://doi.org/10.1145/1882291.1882300},
doi = {10.1145/1882291.1882300},
booktitle = {Proceedings of the Eighteenth ACM SIGSOFT International Symposium on Foundations of Software Engineering},
pages = {37–46},
numpages = {10},
location = {Santa Fe, New Mexico, USA},
series = {FSE '10}
}

@inproceedings{Samak-2015,
author = {Samak, Malavika and Ramanathan, Murali Krishna},
title = {Synthesizing tests for detecting atomicity violations},
year = {2015},
isbn = {9781450336758},
publisher = {Association for Computing Machinery},
address = {New York, NY, USA},
url = {https://doi.org/10.1145/2786805.2786874},
doi = {10.1145/2786805.2786874},
booktitle = {Proceedings of the 2015 10th Joint Meeting on Foundations of Software Engineering},
pages = {131–142},
numpages = {12},
location = {Bergamo, Italy},
series = {ESEC/FSE 2015}
}

@misc{Umang-rapid,
  author       = {Umang Mathur},
  title        = {Rapid: Dynamic Analysis for Concurrent Programs},
  year         = {2024},
  howpublished = {\url{https://github.com/focs-lab/rapid}},
  note         = {Accessed: 2024-12-03}
}

@article{Blackburn-2006,
author = {Blackburn, Stephen M. and Garner, Robin and Hoffmann, Chris and Khang, Asjad M. and McKinley, Kathryn S. and Bentzur, Rotem and Diwan, Amer and Feinberg, Daniel and Frampton, Daniel and Guyer, Samuel Z. and Hirzel, Martin and Hosking, Antony and Jump, Maria and Lee, Han and Moss, J. Eliot B. and Phansalkar, Aashish and Stefanovi\'{c}, Darko and VanDrunen, Thomas and von Dincklage, Daniel and Wiedermann, Ben},
title = {The DaCapo benchmarks: java benchmarking development and analysis},
year = {2006},
issue_date = {October 2006},
publisher = {Association for Computing Machinery},
address = {New York, NY, USA},
volume = {41},
number = {10},
issn = {0362-1340},
url = {https://doi.org/10.1145/1167515.1167488},
doi = {10.1145/1167515.1167488},
journal = {SIGPLAN Not.},
month = oct,
pages = {169–190},
numpages = {22}
}

@inproceedings{Smith-2001,
author = {Smith, L. A. and Bull, J. M. and Obdrz\'{a}lek, J.},
title = {A parallel java grande benchmark suite},
year = {2001},
isbn = {158113293X},
publisher = {Association for Computing Machinery},
address = {New York, NY, USA},
url = {https://doi.org/10.1145/582034.582042},
doi = {10.1145/582034.582042},
booktitle = {Proceedings of the 2001 ACM/IEEE Conference on Supercomputing},
pages = {8},
numpages = {1},
location = {Denver, Colorado},
series = {SC '01}
}

@online{mysql-04,
  author       = {Jocelyn Fournier},
  title        = {Atomicity violation leading to crash in MySQL 4.1.2},
  year         = {2004},
  url          = {https://bugs.mysql.com/bug.php?id=3596},
}

@online{mysql-08,
  author       = {Jeremy Cole},
  title        = {Atomicity violation leading to crash in MySQL 5.4.3},
  year         = {2008},
  url          = {https://bugs.mysql.com/bug.php?id=38883},
}

@online{mysql-08-2,
  author       = {Shane Bester},
  title        = {Atomicity violation leading to crash in MySQL 6.0.7},
  year         = {2008},
  url          = {https://bugs.mysql.com/bug.php?id=38816},
}

@inproceedings{Mathur_2020, series={ASPLOS ’20},
   title={Atomicity Checking in Linear Time using Vector Clocks V3},
   url={https://arxiv.org/abs/2001.04961},
   booktitle={Proceedings of the Twenty-Fifth International Conference on Architectural Support for Programming Languages and Operating Systems},
   publisher={ACM},
   author={Mathur, Umang and Viswanathan, Mahesh},
   year={2020},
   month=mar, pages={183–199},
   collection={ASPLOS ’20}
}

@article{Lipton-75,
author = {Lipton, Richard J.},
title = {Reduction: a method of proving properties of parallel programs},
year = {1975},
issue_date = {Dec. 1975},
publisher = {Association for Computing Machinery},
address = {New York, NY, USA},
volume = {18},
number = {12},
issn = {0001-0782},
url = {https://doi.org/10.1145/361227.361234},
doi = {10.1145/361227.361234},
journal = {Commun. ACM},
month = dec,
pages = {717–721},
numpages = {5}
}

@article{Abdulla-25,
author = {Abdulla, Parosh Aziz and Grahn, Samuel and Jonsson, Bengt and Krishna, Shankaranarayanan and Mishra, Om Swostik},
title = {Efficient Linearizability Monitoring},
year = {2025},
issue_date = {June 2025},
publisher = {Association for Computing Machinery},
address = {New York, NY, USA},
volume = {9},
number = {PLDI},
url = {https://doi.org/10.1145/3729328},
doi = {10.1145/3729328},
journal = {Proc. ACM Program. Lang.},
month = jun,
articleno = {225},
numpages = {24}
}

@article{Lee-25,
author = {Lee, Zheng Han and Mathur, Umang},
title = {Efficient Decrease-and-Conquer Linearizability Monitoring},
year = {2025},
issue_date = {October 2025},
publisher = {Association for Computing Machinery},
address = {New York, NY, USA},
volume = {9},
number = {OOPSLA2},
url = {https://doi.org/10.1145/3763123},
doi = {10.1145/3763123},
journal = {Proc. ACM Program. Lang.},
month = oct,
articleno = {345},
numpages = {28}
}

@software{hunkar-2026,
  author       = {Tunç, Hünkar Can and
                  Dong, Yifan and
                  Pavlogiannis, Andreas},
  title        = {Fast Atomicity Monitoring},
  month        = mar,
  year         = 2026,
  publisher    = {Zenodo},
  doi          = {10.5281/zenodo.19051631},
  url          = {https://doi.org/10.5281/zenodo.19051631},
}
